	\def\appendixautorefname{Appendix}%
\pgfplotsset{compat=1.16}
\newtheorem{thm}{Theorem}[section]
\newtheorem{cor}{Corollary}[section]
\newtheorem{lem}{Lemma}[section]
\newtheorem{alg}{Algorithm}[section]
\newtheorem{rmk}{Remark}[section]
\newtheorem{cond}{Condition}[section]
\renewcommand{\vec}{\bm}
\newcommand{\CC}{\mathcal{C}}
\newcommand{\CO}{\mathcal{O}}
\newcommand{\BR}{\mathbb{R}}
\newcommand{\vA}{\bm{A}}
\newcommand{\vB}{\bm{B}}
\newcommand{\vC}{\bm{C}}
\newcommand{\vD}{\bm{D}}
\newcommand{\vE}{\bm{E}}
\newcommand{\vF}{\bm{F}}
\newcommand{\vf}{\bm{f}}
\newcommand{\vG}{\bm{G}}
\newcommand{\vg}{\bm{g}}
\newcommand{\vH}{\bm{H}}
\newcommand{\vh}{\bm{h}}
\newcommand{\vI}{\bm{I}}
\newcommand{\vK}{\bm{K}}
\newcommand{\vk}{\bm{k}}
\newcommand{\vO}{\bm{O}}
\newcommand{\vP}{\bm{P}}
\newcommand{\vQ}{\bm{Q}}
\newcommand{\vT}{\bm{T}}
\newcommand{\vU}{\bm{U}}
\newcommand{\vV}{\bm{V}}
\newcommand{\vW}{\bm{W}}
\newcommand{\vX}{\bm{X}}
\newcommand{\vY }{\bm{Y }}
\newcommand{\vZ}{\bm{Z}}
\newcommand{\vrho}{\bm{ \rho}}
\newcommand{\vtau}{\bm{ \tau}}
\renewcommand{\L}{\left}
\newcommand{\R}{\right}
\newcommand{\vertiii}[1]{{\left\vert\kern-0.25ex\left\vert\kern-0.25ex\left\vert #1 \right\vert\kern-0.25ex\right\vert\kern-0.25ex\right\vert}}
\newcommand{\norm}[1]{\Vert {#1} \Vert}
\newcommand{\labs}[1]{\left\vert {#1} \right\vert}
\newcommand{\lnorm}[1]{\left\Vert {#1} \right\Vert}
\newcommand{\e}{\mathrm{e}}
\newcommand{\ri}{\mathrm{i}}
\newcommand{\rd}{\mathrm{d}}
\newcommand*{\tr}{\mathrm{Tr}}
\newcommand*{\poly}{\mathrm{Poly}}
\newcommand*{\Supp}{\mathrm{Supp}}
\newcommand{\indicator}{\mathbbm{1}}
\DeclarePairedDelimiterX{\braket}[1]{\langle}{\rangle}{#1}
\DeclarePairedDelimiterX{\Braket}[1]{\bigg\langle}{\bigg\rangle}{#1}
\DeclarePairedDelimiterX\ketbra[2]{| }{|}{#1 \delimsize\rangle\!\delimsize\langle #2}	
\DeclarePairedDelimiterX\dotp[2]{\langle}{\rangle}{#1, #2}
\newcommand{\bigOt}[1]{\widetilde{\mathcal{O}}\left( #1 \right)}
\DeclareMathAlphabet{\dutchcal}{U}{dutchcal}{m}{n}
\SetMathAlphabet{\dutchcal}{bold}{U}{dutchcal}{b}{n}
\DeclareMathAlphabet{\dutchbcal} {U}{dutchcal}{b}{n}
\DeclareRobustCommand*{\pmzerodot}{%
	\nfss@text{%
		\sbox0{$\vcenter{}$}
		\sbox2{0}%
		\sbox4{0\/}%
		\ooalign{%
			0\cr
			\hidewidth
			\kern\dimexpr\wd4-\wd2\relax 
			\raise\dimexpr(\ht2-\dp2)/2-\ht0\relax\hbox{%
				\if b\expandafter\@car\f@series\@nil\relax
				\mathversion{bold}%
				\fi
				$\cdot\m@th$%
			}%
			\hidewidth
			\cr
			\vphantom{0}
		}%
	}%
}
\def\l@subsubsection#1#2{}
\begin{document}
\def\appendixautorefname{Appendix}
\renewcommand{\appendixautorefname}{Appendix}
\renewcommand{\chapterautorefname}{Chapter}
\renewcommand{\sectionautorefname}{Section}
\renewcommand{\subsectionautorefname}{Section}
\renewcommand{\subsubsectionautorefname}{Section}	

\title{Learning quantum Gibbs states locally and efficiently}

\author{Chi-Fang Chen}
\email{achifchen@gmail.com}
 \affiliation{University of California, Berkeley, CA, USA }
\affiliation{Massachusetts Institute of Technology, Cambridge, USA}
\author{Anurag Anshu}
\email{anuraganshu@fas.harvard.edu}
\author{Quynh T. Nguyen}
\email{qnguyen@g.harvard.edu}
\affiliation{School of Engineering and Applied Sciences, Harvard University}

\begin{abstract}
    Learning the Hamiltonian underlying a quantum many-body system in thermal equilibrium is a fundamental task in quantum learning theory and experimental sciences. To learn the Gibbs state of local Hamiltonians at any inverse temperature $\beta$, the state-of-the-art provable algorithms fall short of the optimal sample and computational complexity, in sharp contrast with the locality and simplicity in the classical cases. In this work, we present a learning algorithm that learns each local term of a $n$-qubit $D$-dimensional Hamiltonian to an additive error $\epsilon$ with sample complexity $\tilde{O}\left(\frac{e^{\mathrm{poly}(\beta)}}{\beta^2\epsilon^2}\right)\log(n)$. The protocol uses parallelizable local quantum measurements that act within bounded regions of the lattice and near-linear-time classical post-processing. Thus, our complexity is near optimal with respect to $n,\epsilon$ and is polynomially tight with respect to $\beta$. We also give a learning algorithm for Hamiltonians with bounded interaction degree with sample and time complexities of similar scaling on $n$ but worse on $\beta, \epsilon$. At the heart of our algorithm is the interplay between locality, the Kubo-Martin-Schwinger condition, and the operator Fourier transform at arbitrary temperatures. 
\end{abstract}

\maketitle
\tableofcontents
\section{Introduction}
Identifying and testing the physical laws governing the interactions between particles is a fundamental quest of quantum many-body physics. Even though all quantum phenomena in nature can be, in principle, reduced to the standard model for elementary particles, an effective description at relevant scales is needed to make meaningful predictions. In the language of quantum mechanics - and assuming the locality of nature - local Hamiltonians provide a minimal effective framework governing the relevant degree of freedom. Recovering the local Hamiltonians for strongly interacting systems has often led to major leaps in physics \cite{BCS57, Laughlin83} and has paved the way for technological advances such as high-temperature superconductors and quantum information processing platforms. 

With the growing confluence between computer science and the physical sciences, a natural question arises. Can the search for the underlying local Hamiltonian be automated, especially when quantum metrology protocols become more controllable and robust? The recent pursuit for efficient Hamiltonian learning precisely captures this goal and sits at the forefront of research in quantum learning theory \cite{EyalAL19, anshu2020sample, RF24, HKT22, HTFS23, BLMT24}, quantum computing, and experimental physics \cite{Kokail2021,OKKKKZ25}.

This work focuses on learning the underlying Hamiltonian of a quantum system in thermal equilibrium. In particular, we consider a minimal experimental setting without any access to the real-time \textit{dynamics}, but only \textit{static} observables. Formally, we assume the thermal equilibrium is modeled by the \textit{Gibbs} state at inverse temperature $\beta$
\begin{align}
\vrho_{\beta}(\vH)=\frac{e^{-\beta \vH}}{\tr(e^{-\beta \vH})}    
\end{align}
of an unknown local Hamiltonian $\vH$. The goal, then, is to output a Hamiltonian $\vH'$ that is close enough to $\vH$, using as few independent samples from $\vrho_{\beta}(\vH)$ and as simple observables as possible. Practical, efficient, and simple Hamiltonian learning techniques open doors to other applications in experimental settings, notably verifying quantum devices where the Hamiltonian serves as the hidden parameter. 

The state-of-the-art algorithms fall short of the optimal sample and time complexity. The work \cite{anshu2020sample} showed that information-theoretically, Gibbs states are uniquely determined by the collection of all few-qubit observables. However, computationally, recovering the Hamiltonian from Gibbs marginals, even for classical Hamiltonians, is generally intractable (NP-hard) \cite{montanari2015}. The recent impressive work \cite{BLMT24} achieves polynomial sample and computational complexity at arbitrary constant temperatures. Still, the measurement involves far-apart qubits, the polynomial exponent deteriorates as the temperature lowers, and the sample complexity is exponentially far from the optimal in relevant regimes (e.g., when learning each Hamiltonian term to a constant precision). An algorithm achieving optimal sample and time complexity was known at high temperatures~\cite{HKT22}. However, high temperature is a strong assumption and we may not always have the experimental capacity to change the effective temperature of the sample (such as for spin glasses \cite{Panchenko2012}, where raising temperature may be difficult, or for entanglement Hamiltonians \cite{Kokail2021}, where the effective temperature does not correspond to a physical temperature). Other heuristic algorithms have low sample complexity and time complexity but do not have rigorous guarantees \cite{EyalAL19, LBAALA23}. 

The central conceptual bottleneck in achieving an optimal algorithm at arbitrary temperatures is the unsettled role of locality in Hamiltonian learning: 
\begin{center}
\textit{Should a local Hamiltonian term be uniquely identified by the neighbouring marginal of the Gibbs state?}
\end{center}
This question of \textit{local sufficient statistics} is a strengthening of~\cite{anshu2020sample}, which did not rule out the possibility that \textit{all} marginals would be needed to learn a given local term. For a given local term, local statistics are only known to be sufficient in the high temperature case \cite{HKT22} and in the commuting case. More broadly speaking, the challenge of devising a truly local Gibbs learning algorithm is compounded by the lack of a structural understanding of multipartite entanglement in quantum Gibbs states beyond one dimension. Especially at low temperatures, the quantum system may undergo thermal phase transitions and exhibit long-range quantum and classical correlations.

In this work, we answer in the affirmative and devise a simple protocol that learns the individual terms by individual localized measurements. We begin with the most intuitive algorithm, which works for Hamiltonians on interaction graphs with bounded degree (such as an expander; see \autoref{sec:Ham} for a formal definition). For the following~\autoref{thm:graphthm} and ~\autoref{thm:latticethm} and throughout the paper, $\CO(\cdot)$ denotes an asymptotic upperbound when suppressing the geometric parameters (degree $d$, dimension $D$, and locality $q$) of the Hamiltonian, and $\poly(\cdot)$ denotes a polynomial depending only on $q,d,D$. 
\begin{thm}[Learning each local term locally]
\label{thm:graphthm}
Consider a target local Hamiltonian $\vH = \sum_{\gamma\in \Gamma} \vh_{\gamma}$ with a constant interaction degree (as in~\autoref{sec:Ham}) with the promise that
\begin{align}
    \vh_{\gamma} = h_{\gamma} \vP_{\gamma}\quad \text{for unknown coefficients}\quad h_{\gamma}\in [-1,1] 
\end{align}
and for known Pauli operator $\vP_{\gamma}$ of constant weight. Suppose we have access to its Gibbs state $\vrho_{\beta}$ at a known inverse temperature $\beta>0$. Then, there is a protocol that learns each coefficient $h_{\gamma}$ up to an additive error $\epsilon$ with success probability at least $1-\delta$ with
\begin{center} 
sample complexity $\quad\CO\L(  \log (n/\delta)\cdot 2^{\mathrm{poly}(1/\beta\epsilon) 2^{\CO(\beta^4)} }  \R)$\\
and time complexity $\quad\CO\L(n\log(n/\delta) \cdot 2^{\mathrm{poly}(1/\beta\epsilon) 2^{\CO(\beta^4)}}\R)$. \end{center}
\end{thm}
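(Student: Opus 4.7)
My plan is to construct, for each local term $\vh_\gamma = h_\gamma \vP_\gamma$, a bounded observable $\vM_\gamma$ supported inside a ball of radius $R = \poly(\beta)\log(1/\epsilon)$ around $\vP_\gamma$ such that the Gibbs expectation $\tr(\vrho_\beta \vM_\gamma)$ depends on $h_\gamma$ in a quantitatively invertible way. Given such $\vM_\gamma$, I estimate each $\tr(\vrho_\beta \vM_\gamma)$ to additive precision $\eta$ via the empirical mean of $\widetilde{O}(1/\eta^2)$ outcomes of a direct Pauli measurement inside the $R$-ball, and take a union bound over all $\CO(n)$ terms to obtain the $\log(n/\delta)$ factor in the sample complexity. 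Because $R$-balls around well-separated $\gamma$'s are disjoint, the measurements on each sample can be grouped into $\CO(1)$ rounds of fully parallelisable local experiments, and a final term-by-term classical inversion of the scalar map $h_\gamma \mapsto \tr(\vrho_\beta \vM_\gamma)$ is done in time $\widetilde{O}(n)$.

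\textbf{Construction via KMS and the operator Fourier transform.} To build $\vM_\gamma$ I would start from the KMS detailed-balance identity
\[
\tr(\vrho_\beta \vA \vB) \;=\; \tr\!\L(\vrho_\beta\, \vB\, e^{-\beta \vH} \vA\, e^{\beta \vH}\R),
\]
for any operators $\vA, \vB$, and specialise to $\vA = \vP_\gamma$ with $\vB$ a low-weight Pauli that anti-commutes with $\vP_\gamma$ inside the $R$-ball. I then expand the imaginary-time-evolved operator $e^{-\beta \vH}\vA e^{\beta \vH}$ via the operator Fourier transform $\vA_\omega = \int_{\BR} f(t)\, e^{-\ri \omega t}\, e^{\ri \vH t}\vA e^{-\ri \vH t}\,\rd t$, with a Gaussian envelope $f$ of bandwidth $\CO(1/\beta)$, rewriting the right-hand side as an energy-filtered Heisenberg evolution weighted by $e^{-\beta \omega}$. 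On bounded-degree interaction graphs, Lieb-Robinson bounds let me truncate the time integral at $|t| = \CO(\beta)$ and approximate $e^{\ri \vH t}\vA e^{-\ri \vH t}$ by its restriction to the $R$-ball with error $\exp(-R/\poly(\beta))$. Choosing $R$ as above makes this truncation error negligible, and the resulting identity takes the form $G_\gamma\L(h_\gamma;\{h_{\gamma'}\}_{\gamma'\in\partial\gamma}\R) = \tr(\vrho_\beta \vM_\gamma)$, where $\vM_\gamma$ is a fixed local observable and $G_\gamma$ is a smooth function of the coefficients in the $R$-neighbourhood.

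\textbf{Inversion and main obstacle.} The last step is a quantitative inversion of $G_\gamma$ to read off $h_\gamma$ from the estimated expectation. The KMS identity combined with the anti-commutation of $\vA$ and $\vB$ enforces $|\partial_{h_\gamma} G_\gamma|>0$; propagating the explicit dependence through the cluster/Lieb-Robinson truncation yields a lower bound of order $1/2^{\poly(1/\beta\epsilon)\,2^{\CO(\beta^4)}}$, which fixes the required precision $\eta$ and produces the sample complexity advertised in the theorem. The main obstacle is precisely this quantitative local-sufficiency step: at arbitrary $\beta$ the Gibbs state may develop long-range correlations and phase-transition-like behaviour, and a priori there is no reason a bounded observable on a finite ball should remain sensitive to a single local coefficient. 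Overcoming this requires combining the algebraic rigidity of KMS with sharp bandwidth and cluster bounds for the operator Fourier transform at all temperatures, and this is where the doubly-exponential $\beta$-dependence enters. The remaining ingredients — Hoeffding on bounded observables, parallelisation over disjoint $R$-balls, and the union bound — are then standard and together produce the sample and time complexities claimed.
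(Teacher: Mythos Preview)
Your proposal has a genuine gap at the inversion step. You construct, for each $\gamma$, a single fixed observable $\vM_\gamma=\vA\vB$ and claim that its Gibbs expectation equals $G_\gamma(h_\gamma;\{h_{\gamma'}\}_{\gamma'\in\partial\gamma})$, after which you want to ``read off $h_\gamma$'' by inverting $G_\gamma$. But $G_\gamma$ is a function of \emph{all} the unknown coefficients inside the $R$-ball, not of $h_\gamma$ alone; one scalar equation in $V(R)$ unknowns cannot isolate $h_\gamma$. Your assertion that anti-commutation forces $|\partial_{h_\gamma}G_\gamma|>0$ does not help: even if true, a nonzero partial derivative in one coordinate does not make a scalar-valued function of many variables invertible. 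What you would need is that the full Jacobian of the system $\{G_\gamma\}_\gamma$ is quantitatively nonsingular at arbitrary $\beta$, which is exactly the ``local sufficient statistics'' question you flag as the main obstacle---and nothing in the sketch addresses it.

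The paper takes a structurally different route that sidesteps this inversion entirely. Instead of a fixed observable, it introduces an identifiability observable $Q(\vO,\vG,\vA,\vK)$ depending on two \emph{test} Hamiltonians $\vG,\vK$ chosen by the algorithm; all real-time evolutions are under $\vG$ and $\vK$ (never under the unknown $\vH$), so $Q$ is directly measurable. The identifiability equation (\autoref{lem:identify}, regularized in \autoref{lem:truncating_bohr}) shows that $Q(\vO,\vH,\vA,\vK)$ controls $\langle\vO,[\vA,\vH-\vK]\rangle_{\vrho}$. The algorithm then performs a \emph{min--max search}: it seeks a local $\vK'_\ell$ on an $\epsilon$-net such that $Q$ is small for \emph{every} local $\vG'_\ell$. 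Existence (the true $\vH_\ell$ works, \autoref{lem:exsitence}) and uniqueness (any such $\vK'_\ell$ must agree with $\vH$ near site $i$, because one may take $\vG'_\ell\approx\vH_\ell$ and invoke the KMS-faithfulness lemma, \autoref{lem:KMSvariance}) certify the output without ever solving a nonlinear system. The $2^{2^{\CO(\beta^4)}}$ cost then arises from enumerating the $\epsilon$-net over $V(\ell)=d^{\CO(\ell)}$ coefficients with $\ell=\CO(\beta^4+\beta\log(1/\epsilon))$, not from a sensitivity lower bound as you suggest.
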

See~\autoref{sec:local_anygraph} for the explicit algorithm and the proof. The claim is that each term can be individually learned given neighboring marginals of a system-size independent radius $\CO(\beta^4 + (\beta+1)\log(1/\epsilon))$ in the graph distance. The sample complexity simply follows from performing parallelizable local measurements, achieving the $\log(n)$ scaling when $\epsilon, \beta = \Theta(1)$. Here, the poor scaling with $\epsilon$ and $\beta$ roots from searching over all possible Hamiltonians within the radius, which can be large on expander graphs.  

In physical settings, Hamiltonians are often defined on $D$-dimensional lattices (particularly we assume $q=\CO(1),d=\CO(1)$; see~\autoref{sec:Ham} for the precise definition). Further exploiting geometric locality, we give a provably efficient local learning algorithm and settle the Hamiltonian learning problem on $D$-dimensional lattices at any temperature (see~\autoref{sec:high_precision_algorithm} for the algorithm and the analysis).

\begin{thm}[Learning $D$-dimensional Hamiltonians]
\label{thm:latticethm}
Consider a $D$-dimensional Hamiltonian $\vH = \sum_{\gamma\in \Gamma} \vh_{\gamma}$ (as in~\autoref{sec:Ham}) with the promise that
\begin{align}
    \vh_{\gamma} = h_{\gamma} \vP_{\gamma}\quad \text{for unknown coefficients}\quad h_{\gamma}\in [-1,1] 
\end{align}
for known Pauli operators $\vP_{\gamma}$ of constant weight. Suppose we have access to samples of its Gibbs states at a known inverse temperature $\beta>0$. Then, there is a protocol that learns each coefficient $h_{\gamma}$ up to an additive error $\epsilon$ with success probability at least $1-\delta$ with
\begin{center}
sample complexity
$\quad\CO\L(\log (n/\delta) \cdot \frac{e^{\poly(\beta)}}{\beta^2\varepsilon^2} \poly\log(1/\varepsilon) \R)$\\
and time complexity
$\quad\CO\L(n\log (n/\delta)\cdot \frac{e^{\poly(\beta)}}{\beta^2\varepsilon^2} \poly\log( 1/\varepsilon)\R).$ \end{center}

\end{thm}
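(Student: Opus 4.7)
The plan is to construct, for each $\gamma\in\Gamma$, an explicit scalar functional $\CF_\gamma$ of the Gibbs state whose expectation on $\rho_\beta(\vH)$ equals $h_\gamma$ up to additive error $\epsilon$, and that depends only on the marginal of $\rho_\beta$ on a ball $B_R(\gamma)$ of radius $R = \CO(\poly(\beta) + \poly\log(1/\epsilon))$ around the support of $\vh_\gamma$. If $\CF_\gamma$ is represented as an expectation of a bounded observable, Hoeffding's inequality gives $\CO(e^{\poly(\beta)}/(\beta^2\epsilon^2))$ samples per term. Since the balls $B_R(\gamma)$ on a $D$-dimensional lattice have bounded overlap ($R$ is independent of $n$), the local measurements can be scheduled in a constant number of rounds and performed in parallel on the same copies of $\rho_\beta$; a union bound over the $\CO(n)$ terms with failure probability $\delta/n$ introduces the $\log(n/\delta)$ factor, and classical postprocessing of the local marginals is near-linear in $n$.

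\textbf{Constructing $\CF_\gamma$ via KMS and the operator Fourier transform.} Let $\tau_t(A) := e^{\ri\vH t}Ae^{-\ri\vH t}$ and, for a smooth time-filter $f$ of width $T = \CO(\poly(\beta))$, define $A_\omega := \int \rd t\, f(t)\, e^{-\ri\omega t}\tau_t(A)$. The Kubo--Martin--Schwinger condition
\begin{align}
    \tr\!\L(\rho_\beta(\vH)\, A_\omega^\dagger B_\omega\R) = e^{\beta\omega}\,\tr\!\L(\rho_\beta(\vH)\, B_\omega A_\omega^\dagger\R) \qquad \forall A,B,\omega
\end{align}
characterizes $\rho_\beta(\vH)$ uniquely among states. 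Taking $A$ and $B$ from a Pauli basis supported near $\gamma$ and integrating these identities against a suitable smooth weight in $\omega$ yields a finite family of linear relations in the unknowns $\{h_{\gamma'}\}_{\gamma'\in B_R(\gamma)}$ whose coefficients are expectations of local operators under $\rho_\beta$. A judicious combination---exploiting Pauli orthogonality to cancel neighboring $h_{\gamma'}$ to leading order---produces a single identity
\begin{align}
    F_\gamma[\rho_\beta] = c_\gamma(\beta)\cdot h_\gamma + \CO(\epsilon\cdot c_\gamma(\beta))
\end{align}
with $|c_\gamma(\beta)| = \Omega(\beta\, e^{-\poly(\beta)})$, and we set $\CF_\gamma := F_\gamma/c_\gamma(\beta)$.

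\textbf{Localization and the main obstacle.} Since $F_\gamma$ is a finite sum of expectations of $\tau_t(\vP_{\gamma'})$ with $|t|\le T$ and $\gamma'$ near $\gamma$, Lieb--Robinson on the $D$-dimensional lattice ensures that each $\tau_t(\vP_{\gamma'})$ is approximated in operator norm, to error $e^{-\Omega(R - vT)}$, by its evolution under $\vH$ restricted to $B_R(\gamma)$. Choosing $R = \CO(vT + \log(1/(\epsilon\, c_\gamma(\beta)))) = \CO(\poly(\beta) + \poly\log(1/\epsilon))$ makes $F_\gamma$ computable from $\tr_{B_R(\gamma)^c}(\rho_\beta)$ alone up to additive error $\epsilon\cdot c_\gamma(\beta)$, which becomes $\CO(\epsilon)$ after dividing by $c_\gamma(\beta)$. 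The hard part will be proving the \emph{uniform} lower bound $|c_\gamma(\beta)| \ge \Omega(\beta\, e^{-\poly(\beta)})$ that holds independently of the unknown neighboring coefficients and without assuming any spectral gap; establishing this at arbitrary temperature is what forces the $e^{\poly(\beta)}$ prefactor and is the crux of the argument. Geometric locality of the $D$-dimensional lattice (as opposed to a general bounded-degree graph) is essential here because the number of interacting neighbors grows only polynomially with $R$, keeping the Pauli-orthogonality cancellations tractable and yielding the $\poly\log(1/\epsilon)$ rather than exponential dependence in the first theorem.
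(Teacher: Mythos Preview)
Your proposal has two genuine gaps that the paper's proof addresses in ways you have not anticipated.

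First, your operator Fourier transform $A_\omega = \int \rd t\, f(t)\, e^{-\ri\omega t}\tau_t(A)$ is built from the Heisenberg evolution $\tau_t(\cdot)=e^{\ri\vH t}(\cdot)e^{-\ri\vH t}$ under the \emph{unknown} Hamiltonian $\vH$. With only Gibbs-state samples you cannot compute or measure $\tau_t(A)$, so the KMS identities you write down are not operationally accessible. The paper confronts exactly this (``Issue 2'' in the introduction): its identifiability observable $Q(\vO,\vG,\vA,\vK)$ evolves $\vA$ under a \emph{candidate} Hamiltonian $\vK$ and $\vO$ under an \emph{arbitrary} test Hamiltonian $\vG$, neither of which is the unknown $\vH$. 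The argument then has to show both existence (some $\vK'_\ell$ makes $Q$ small for all $\vG'_\ell$) and uniqueness (any such $\vK'_\ell$ must agree with $\vH$ locally), which requires the identifiability equation of Lemma~3.1 and the faithfulness of the KMS norm (Lemma~3.5). Your sketch sidesteps this entirely.

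Second, the sentence ``a judicious combination---exploiting Pauli orthogonality to cancel neighboring $h_{\gamma'}$ to leading order---produces a single identity $F_\gamma[\rho_\beta]=c_\gamma(\beta)h_\gamma+\CO(\epsilon c_\gamma(\beta))$'' is precisely the hard part, and you give no mechanism for it. The paper does \emph{not} produce a direct estimator $\CF_\gamma$; it runs a min--max search over local candidates $\vK'_\ell$. What makes the lattice case near-optimal in $\epsilon$ is an \emph{iterative} refinement (Section~4.5): given a guess $\vH_0$ with error $\eta$, one halves the error by searching only over perturbations $\eta\vU'_{\ell_0}$ in a radius $\ell_0=\CO(\poly(\beta))$ \emph{independent of $\eta$}, while the measurement region grows as $\ell=\CO(\beta^4+\beta\log(1/\eta))$. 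Polynomial growth of lattice balls keeps the overlap factor $\chi$ at $e^{\poly(\beta)}\log^{D}(1/\eta)$ rather than exponential in $\log(1/\eta)$, and chaining $\CO(\log(1/\epsilon))$ iterations yields the stated bounds. Your single-shot estimator, even if it existed, would not obviously deliver $\poly\log(1/\epsilon)$ time complexity, and your justification of the crucial lower bound $|c_\gamma(\beta)|\ge\Omega(\beta e^{-\poly(\beta)})$ is absent.
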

Due to the lower bound\footnote{In our paper, $\epsilon$ captures the $\ell_\infty$-learning of the Hamiltonian terms. The case of $\ell_2$-learning can be obtained by setting $\epsilon =\CO(\epsilon_2/\sqrt{n})$.} of \cite[Theorem 1.2]{HKT22} 
\begin{align}
\Omega\L(\frac{e^{\beta}}{\beta^2\epsilon^2}\log\frac{n}{\delta} \R),    
\end{align}
our achieved complexity is optimal in the number of qubits $n$ and (nearly) optimal in the precision $\epsilon$. The algorithm works at all temperatures, and is polynomially tight in the $\beta$ dependence. The efficiency of this protocol stems from an iterative procedure that, in each sweep, takes the current guess as input, performs some measurement, and proposes a better guess with doubled precision. The iterative protocol is not manifestly local as one collectively updates multiple coefficients. Still, if we track the flow of information throughout the iterations, the coefficient in an individual term is still essentially determined by measurement data within a $\poly(\log(1/\epsilon),\beta)$ radius. 

\subsection{The protocol and key ideas}

Our local approach to Hamiltonian learning is made possible by recent developments in our understanding of the dynamical origin of Gibbs states - quantum Gibbs samplers~\cite{temme2011quantum,yung2012quantum, Shtanko2021AlgorithmsforGibbs, chen2021fast, rall2023thermal, wocjan2023szegedy,chen2023quantum,chen2023efficient, gilyen2024quantum, jiang2024quantum,ding2024single,ding2024efficient}. While our algorithm does not explicitly implement a Lindbladian dynamics, we rely heavily on the fundamental analytic toolkit introduced in~\cite{chen2023quantum, chen2023efficient}, which provided a local approach to quantum detailed balance.

\subsubsection{Local sufficient statistics from the KMS condition}
\noindent {\bf Kubo–Martin–Schwinger (KMS) condition:} Quantum states at thermal equilibrium satisfy a kind of microscopic reversibility for any observables. The KMS condition in quantum statistical mechanics is an identity for thermal two-point functions (Green functions) at an inverse temperature $\beta$
 \begin{align}
\tr(\vO\vP_{\vH}(t)\vrho) = \tr(\vP_{\vH}(t+i \beta)\vO \vrho) \quad \text{for every $\vP,\vO$ and $t\in \BR$}, \quad \text{where}\quad \vP_{\vH}(z):=e^{i \vH z}\vP e^{-i \vH z},\label{eq:KMS} 
\end{align}
where we have abbreviated $\vrho:=\vrho_{\beta}(\vH)$ for convenience. In fact, the KMS condition provides a \textit{unique} definition of the Gibbs state in terms of the correlation functions. To see this, consider a state $\vrho$ satisfying the KMS condition for a test Hamiltonian $\vH'$. Then, we may drop the quantifier over variable $\vO$ and denote $\vrho'$ the Gibbs state for $\vH'$ to deduce that 
\begin{align}
\eqref{eq:KMS} &\iff\quad \vP_{\vH'}(t)\vrho = \vrho\vrho'^{-1
}\vP_{\vH'}(t)\vrho' \quad \text{for all }\vP \text{ and } t\in \BR,\\  
&\iff \quad\quad \vrho\vrho'^{-1}\propto \vI,\\
&\iff \quad\quad\quad \beta\vH=\beta\vH' + c \vI. 
 \label{eq:detailedbalintro}  
\end{align} 
The second line uses the fact that the Gibbs state of a bounded Hamiltonian is invertible and that an operator that commutes with all matrices must be proportional to the identity. The third line uses the uniqueness of matrix logarithm for full rank PSD inputs and uses $c\vI$ to account for normalization of Gibbs state. Here, the real-time parameter $t$ does not play a role in the above argument, and indeed, the argument in~\cite{BLMT24} seems to only require the $t=0$ part of the KMS condition. One may wonder whether the KMS condition for local $\vO,\vP$ would lead to the desired local sufficient statistics for identifying the Hamiltonian. 
\vspace{0.1in}

\noindent {\bf The identifiability equation:} 
Of course, the above exact argument~\eqref{eq:detailedbalintro} is fragile, and any physical quantity can only be measured approximately up to statistical errors. The second key idea is to formulate a robust version via the following \textit{identifiability equation} (see~\autoref{lem:identify}), defined for the ground truth $\vH$ and the test Hamiltonian $\vH'$, and every local operator pair $\vA$ and $\vO$
\begin{align}
\frac{\beta}{2} \langle \vO,[\vA,\vH-\vH']\rangle_{\vrho} = \frac{1}{\sqrt{2\pi}}\int_{-\infty}^{\infty} \tr\L[\vO^{\dagger}_{\vH}(t)\L(  \sqrt{\vrho'}\vA_{\vH'}(t)\sqrt{\vrho^{'-1}}\vrho - \vrho\sqrt{\vrho^{'-1}}\vA_{\vH'}(t)\sqrt{\vrho^{'}} \R)\R]g_{\beta}(t) \rd t
\label{eq:identify_intro}
\end{align}
where $\langle \vX, \vY\rangle_{\vrho}:=\tr[\vX^{\dagger}\sqrt{\vrho}\vY\sqrt{\vrho}]$ is the KMS inner product\footnote{Technically, the KMS inner product seems to play a special role. We were not able to change the KMS inner product in Equation~\eqref{eq:identify_intro} as well as the particular powers of $\vrho'$ appearing on the RHS.} and $g_\beta(t)$ is a rapidly decaying function. Setting $\vO = [\vA, \vH-\vH']$, the LHS is a positive quantity that vanishes if and only if $[\vA,\vH]=[\vA,\vH'],$ thus
\begin{align}
    \vH = \vH'\quad\iff\quad \sqrt{\vrho'}\vA_{\vH'}(t)\sqrt{\vrho'^{-1}}\vrho &= \vrho\sqrt{\vrho'^{-1}}\vA_{\vH'}(t)\sqrt{\vrho'}\quad \text{ for all single-site Pauli }\vA \text{ and } t \in \BR.
\end{align}
This identifiability equation is reminiscent of the KMS condition when $\vP := \sqrt{\vrho'}\vA\sqrt{\vrho'^{-1}}$, but is moreover robust and local: whenever the RHS is reported to be small in local measurements, $\vH$ and $\vH'$ must agree when taking local commutators with $\vA$. Curiously, the identifiability Equation~\eqref{eq:identify_intro} involves time dynamics of two distinct Hamiltonians at the same time, which allows us to filter out the linear-order in 
$\vH, \vH'$ from Equation \ref{eq:detailedbalintro}. This requires a non-traditional decomposition into Bohr frequencies of a Hamiltonian pair (see~\autoref{sec:doubleBohr}). 

Two glaring issues remain:
\begin{enumerate}
    \item The imaginary-time evolved operator $\sqrt{\vrho'^{-1}}\vA \sqrt{\vrho'}$ is well-known to be a nasty operator in general. Essentially, in more than one dimension, a local operator $\vA$ may have non-negligible amplitudes-of the order of $e^{-c\nu}$ - that substantially change the energy $\nu \gg 1$. Thus, there may be a constant $\beta$ for which the norm $\norm{\sqrt{\vrho'^{-1}}\vA \sqrt{\vrho'}}$ diverges exponentially with the system size. In particular, the off-diagonal terms of $\sqrt{\vrho'^{-1}}\vA \sqrt{\vrho'}$ in the eigenbasis of $\vH'$ blow-up exponentially (see~\autoref{fig:regularize_imaginary}), and hence, there are no local approximations to this operator \cite{Bouch15}. 
    \item The operator $\vO_{\vH}(t)$ in the RHS of Equation \eqref{eq:identify_intro} actually depends on the unknown Hamiltonian $\vH$; thus, it is a priori unclear how to measure the RHS directly in an experiment.
\end{enumerate}

\begin{figure}[t]
\includegraphics[width=0.7\textwidth]{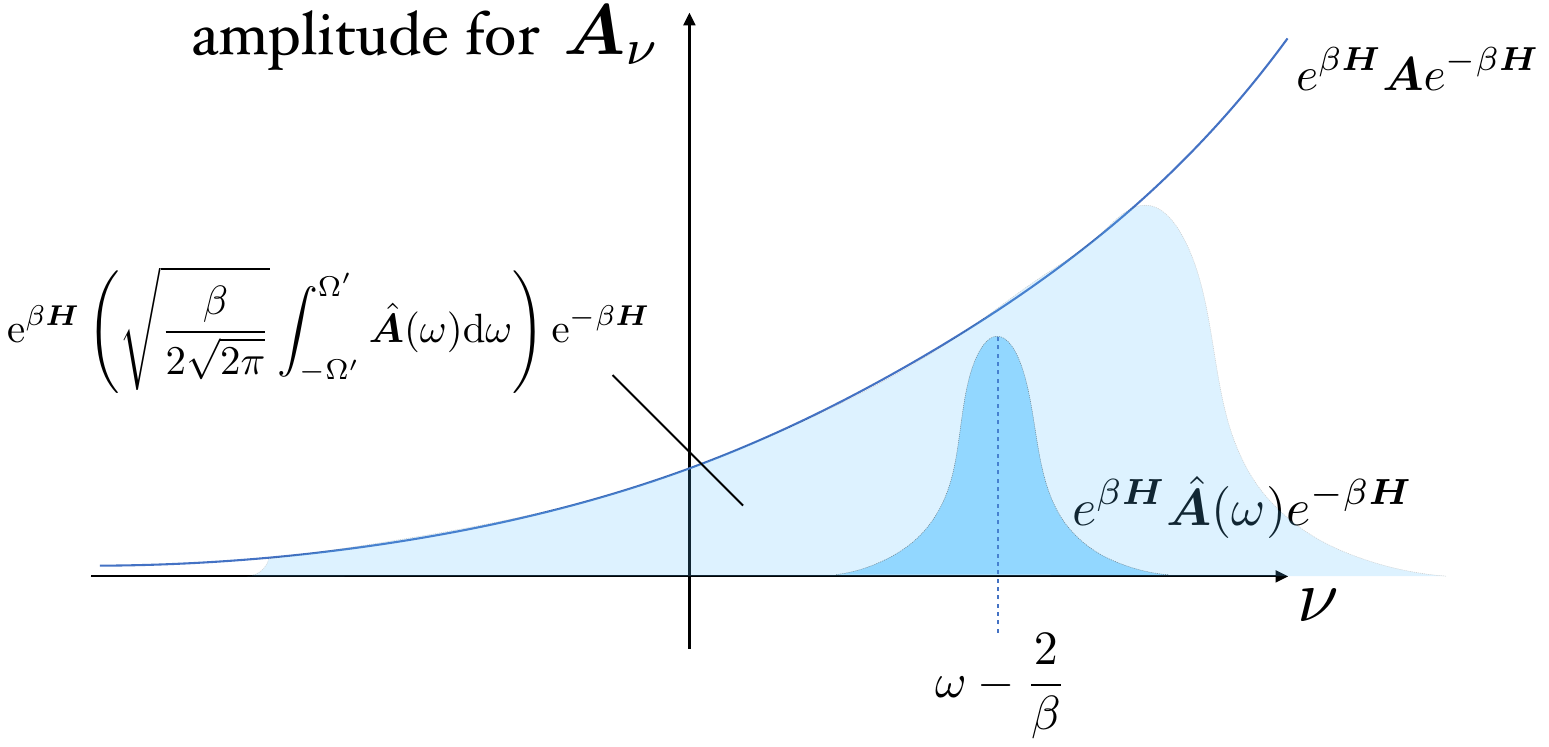}\caption{ 
The imaginary time conjugation $e^{\beta \vH}\vA e^{-\beta \vH} = \sum_{\nu} \e^{\beta \nu}\vA_{\nu}$ assigns exponential weight $e^{\beta \nu}$ according to the Bohr frequency $\nu.$ This exponential growth typically causes trouble in controlling the norm of imaginary time conjugation for a large constant $\beta$, except for one-dimensional spin chains. Remarkably, rewriting $\vA$ in terms of operator Fourier transforms $\hat{\vA}(\omega)$ (setting $\sigma = 1/\beta$ for simplicity) has a convenient regularizing effect, allowing us to separately address the lower frequency parts, which remain controlled under imaginary time conjugation,
and higher frequency parts, which we truncate~\cite{chen2025Markov}. In particular, the Gaussian profile interacts nicely with exponentials, leading to convenient calculation and norm bounds. 
}\label{fig:regularize_imaginary}
\end{figure}

\vspace{0.1in}

\noindent {\bf Solution to Issue 1:} The insight from recent construction of quantum Gibbs samplers~\cite{chen2023quantum,chen2023efficient} is that quantum detailed balance can be imposed locally, by considering the operator Fourier transform for frequency $\omega \in \BR$, which is localized both in frequency and time domain
\begin{align}
\hat{\vA}_{\vH'}(\omega)=  \frac{1}{\sqrt{2\pi}}\int_{-\infty}^{\infty}  \vA_{\vH'}(t)  \e^{-\ri \omega t} f(t)\rd t 
\end{align} 
with the Gaussian weight $f(t) \propto e^{-t^2/\beta^2}$ with time uncertainly $\beta$. An elegant property of the operator Fourier transform is that it behaves nicely under imaginary time evolution (see~\autoref{fig:regularize_imaginary}):
$$\sqrt{\vrho'}\hat{\vA}_{\vH'}(\omega) \sqrt{\vrho'^{-1}} =  \hat{\vA}_{\vH'}(\omega+4/\beta) e^{\beta \omega/2 + 1}.$$
Further, the operator Fourier transform is a linear combination of real-time dynamics $\vA(t)$, giving a (quasi)-local characterization of the KMS condition~\eqref{eq:detailedbalintro} in the frequency space (See also~\autoref{sec:OFT})
\begin{equation}
\label{eq:detailedbalanceloc}
\vH=\vH' \Leftrightarrow \hat{\vA}_{\vH'}(\omega-4/\beta) \vrho = \vrho\hat{\vA}_{\vH'}(\omega+4/\beta) e^{\beta \omega} \quad \text{for each}\quad \omega.
\end{equation}
Crucially, both $\hat{\vA}(\omega-4/\beta)$ and $\hat{\vA}(\omega+4/\beta)$ are now quasi-local observables.

To weave Equation \eqref{eq:detailedbalanceloc} into Equation \eqref{eq:identify_intro}, we follows the very recent work~\cite{chen2025Markov} (see~\autoref{sec:regularize}). For any operator $\vA,$ we can manually split it into the low-frequency parts, and the high-frequency parts:
\begin{align}
    \vA = \sqrt{\frac{{\beta}}{{2\sqrt{2\pi}}}}\L(\int_{\labs{\omega'}\le \Omega'}+\int_{\labs{\omega'}\ge \Omega'} \R)\hat{\vA}_{\vH'}(\omega')\rd \omega'.
\end{align}
We can then truncate the high-frequency part - hopefully with a small error - and keep the remaining low-frequency part that behaves nicely with the imaginary time evolution. Indeed, performing the operator fourier transform, applying the truncation scheme, we arrive at the following quasi-local version of Equation~\eqref{eq:identify_intro} (see \autoref{lem:truncating_bohr}):
\begin{align}
\braket{ \vO,[\vA,\vH-\vH']}_{\vrho}
    &= \frac{\text{const.}}{\beta} \int_{-\infty}^{\infty}\int_{-\infty}^{\infty}\tr\bigg[\vO^{\dagger}_{\vH}(t) \big( h_+(t')\vA_{\vH'}(t'+t)\vrho  - h_-(t')\vrho\vA_{\vH'}(t'+t)\big) \bigg] g_{\beta}(t)\rd t'\rd t\\
    &\quad + \text{(error terms)}, \label{eq:observableintro}
\end{align}
where $h_+, h_-$ are rapidly decaying functions.

\vspace{0.1in}

\noindent {\bf Solution to Issue 2:} Observe that the following variant of the RHS in Equation \eqref{eq:identify_intro}, the \textit{identifiabilty observable}, is in fact measurable in experiments: 
\begin{align}
Q(\vO,\vG,\vA,\vH'):=\frac{1}{\sqrt{2\pi}} \int_{-\infty}^{\infty}\int_{-\infty}^{\infty}\tr\bigg[\vO^{\dagger}_{\vG}(t) \big( h_+(t')\vA_{\vH'}(t'+t)\vrho  - h_-(t')\vrho\vA_{\vH'}(t'+t)\big) \bigg] g_{\beta}(t)\rd t'\rd t 
\label{eq:quasilocalobsintro}
\end{align}
where $\vG$ is any local Hamiltonian that the experimenter can choose. Properties of the identifiability observable are the key to our local learning algorithm and make transparent the local sufficient statistics property in our approach. Crucially, the expression still vanishes when $\vH=\vH'$, due to Equation \eqref{eq:detailedbalintro}, and stays negligible for \textit{any} $\vG$ when $\vH$ agrees locally with $\vH'$. 
 
Further, since $\vG, \vH'$ are local Hamiltonians, we can appeal to routine Lieb-Robinson bounds to significantly reduce relevant choices of $\vG,\vH'$ (See Appendix~\ref{sec:LRbounds}), by restricting to a radius $\ell$ around around $\vO$ (and $\vA$), while ensuring that $Q(\vO,\vG_\ell,\vA,\vH'_\ell)\approx Q(\vO,\vG,\vA,\vH')$. Here, $\vG_\ell$ (and $\vH'_\ell$) is the restriction of $\vG$ (and $\vH'$) to the ball of radius $\ell$ around $\vO$ (and $\vA$). 

\subsubsection{Protocol on general graphs}

\begin{figure}[t]
\includegraphics[width=0.9\textwidth]{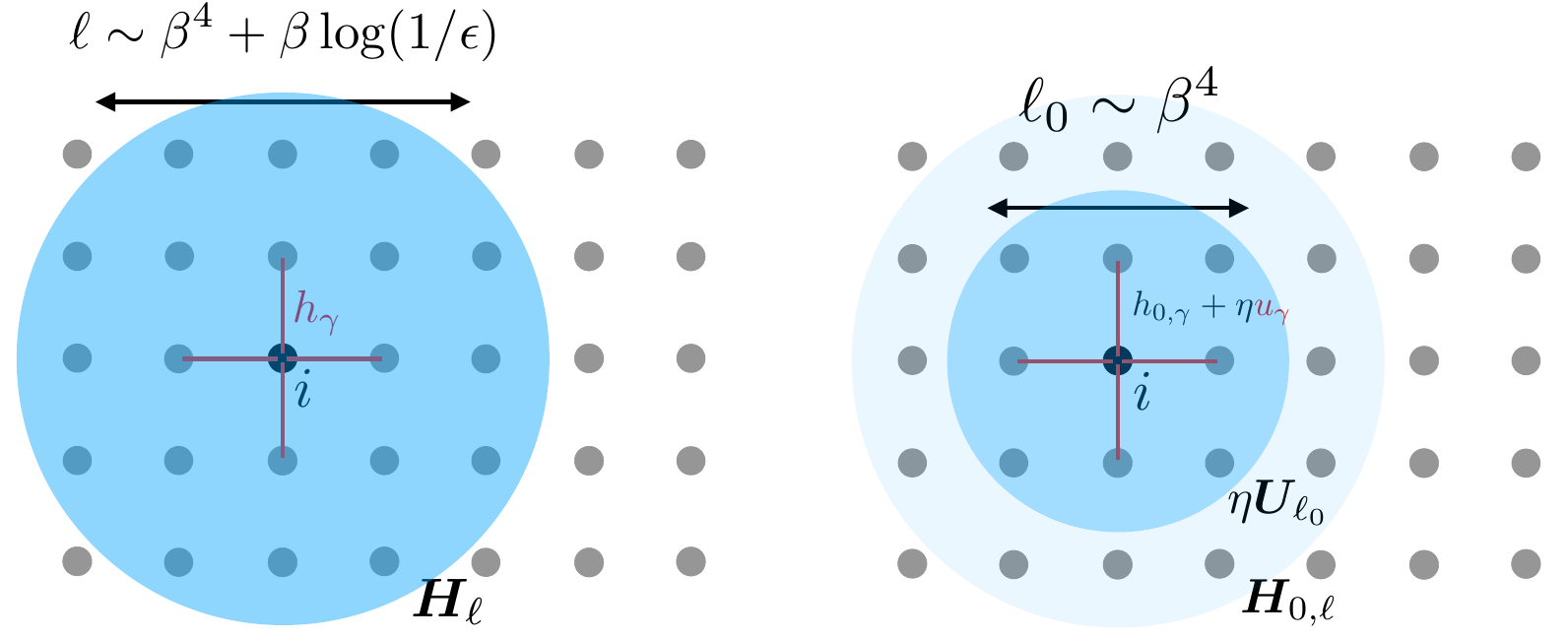}\caption{Our local learning protocols applied to 2-dimensional lattice with nearest neighbour interactions. (Left) The non-adaptive learning protocol. To learn a term $h_{\gamma}$ touching a site $i$, it suffices to search over local Hamiltonians $\vH_{\ell}$ for a distance (defined on the interaction graph,~\autoref{sec:Ham}) $\ell$ depending on the inverse temperature $\beta$ and error $\epsilon.$ Remarkably, we do not need access to regions far away from $i$, and the measurement only involves Heisenberg dynamics of local operators. Even though the algorithm is local for a fixed precision, the time complexity grows with search volume and is generally quasi-polynomial in $1/\epsilon$. 
(Right) An improved iterative learning protocol. Instead of achieving high-precision directly, we aim to double the precision each iteration: Given a pretty good guess $\vH_{0}$ such that $\vH = \vH_0 +\eta \vU$, we want to further refine $\vU$. This allows us to search only within a radius $\ell_0$ independent of the error.
}\label{fig:local_learning}
\end{figure}
We are now ready to sketch our algorithm, which performs a greedy local search to learn each term independently. For a site $i$, consider Pauli operators $\vA \in \{\vX_i,\vY_i,\vZ_i\}$, and $\vO = [\vA,\vP_{\gamma}]$ (which will be sufficient for bounding $\vO=[\vA, \vH-\vH']$). The strategy can be summarized in a sentence:

\begin{center}
\textit{Search for an $\vH'_\ell$ where $Q(\vO,\vG_\ell,\vA,\vH'_\ell)$ is small for all $\vG_\ell, \vA, \vO$. Record the terms overlapping with $i$.}    
\end{center}
 Such $\vH'_\ell$ always exist by setting $\vH'_\ell=\vH_\ell$ (as discussed in the previous subsection). Conversely, any such $\vH'_\ell$ is guaranteed to be locally unique, since setting $\vG_\ell=\vH_\ell$ implies the local commutator $[\vA, \vH'_\ell-\vH_\ell]$ needs to be small (see \autoref{sec:KMSfaithful} on the faithfulness of KMS norm) for all Paulis $\vA\in \{\vX_i,\vY_i,\vZ_i\}$ acting on $i$. This means that $\vH'_\ell$ and $\vH_\ell$ approximately agree for all the terms acting on site $i$. For a targeted error, we will choose $\ell = \CO(\beta^4+\beta \log(1/\epsilon))$ and put a suitably dense covering net for coefficients for $\vH'_{\ell}.$ 

The advertised scaling $\CO(\log n)$ for the sample complexity follows from measuring as many $Q$ in parallel as possible. The dependence on precision is far from $\frac{1}{\epsilon^2}$, but we can get near-optimal scaling on precision when restricted to $D$-dimensional Hamiltonians.

\subsubsection{Iterative protocol for lattices}

In the high-precision regime $\epsilon \rightarrow 0$, the radius of local neighborhood for which we truncate the time evolution $\vA_{\vH'}(t)$ needs to grow logarithmically with the error $\epsilon$, due to the Lieb-Robinson bounds. Thus, naively, the search space over all possible Hamiltonians $\vH'$ in the neighborhood of site $i$ still grows poorly with the precision $\sim \log(1/\epsilon)^{D}$, as discussed in the previous subsection.

To achieve the advertised near-optimal dependence on $\epsilon$, we propose an iterative protocol that gradually improves the precision in parallel sweeps. Let us assume we have already found a constantly-good (say $\eta=0.1$) candidate $\vH_0$ for the underlying Hamiltonian $\vH$ and we wish to further improve the precision by considering local Hamiltonians of the form $\vH_0+\eta \vU$. The observation is that, the identifiability observable~\eqref{eq:quasilocalobsintro} is actually most sensitive to terms near $i$ and, if the goal is to double the precision $\eta \rightarrow \eta/2,$ it suffices to search over a constant-sized (independent of error $\eta$) neighborhood near $i$. While the measurements still involve a $\log(1/\eta)^{D}$ sized neighborhood, the constant-sized neighborhood of the search space significantly saves on the sample complexity. We thus proceed by searching over $\vU'_\ell$ for regions of radius $\ell\sim \beta^4$ around $i$. The search completes with local terms of $\vU'_\ell$ and $\vU_\ell$ that act on site $i$ being $\frac{\eta}{2}$ close. Then we iterate the algorithm again up until $\eta=\epsilon$. This iterative algorithm is the only place where we heavily rely on the lattice geometry (where the number of terms only grows as $\sim \ell^{D}$ within distance $\ell$).

\subsection{Prior work}

\begin{table}[h]
    \centering
    \begin{tabular}{lcccc}  
        \toprule
         & Sample Complexity & Time Complexity &  Qubits entangled \\
         \hline
        \midrule
        \autoref{thm:latticethm} (Lattices) & $\CO\L(\log n\cdot\frac{e^{\poly(\beta)}}{\beta^2\epsilon^2}\poly(\log\frac{1}{\epsilon})\R)$ & $\CO\L(n\log n\cdot\frac{e^{\poly(\beta)}}{\beta^2\epsilon^2}\poly(\log\frac{1}{\epsilon})\R)$ & $\poly(\beta,\log\frac{1}{\epsilon})$ \\
        \autoref{thm:graphthm} (Graphs) & $\CO\L(\log n\cdot 2^{2^{\CO(\beta^4)}\poly(1/\beta\epsilon)}\R)$ & $\CO\L(n\log n\cdot 2^{2^{\CO(\beta^4)}\poly(1/\beta\epsilon)}\R)$ & $\poly(\beta,\log\frac{1}{\epsilon})$ \\
                \cite{BLMT24,narayanan2024improved} (Graphs) & $\poly\L(n,\frac{1}{\epsilon^{\CO(\beta^2)}}\R)$ & $\poly\L(n,\frac{1}{\epsilon^{\CO(\beta^2)}}\R)$ & $\CO\L(\beta^2\log\frac{1}{\epsilon}\R)$ \\
        \cite{HKT22} (High temp, Graphs) & $\CO\L(\log (n)\frac{1}{\beta^2\epsilon^2}\R)$ & $\CO\L(n\log (n)\frac{1}{\beta^2\epsilon^2}\R)$ & $\CO(\log\frac{1}{\epsilon})$ \\
        \cite{anshu2020sample} (Lattices) & $\poly(n)\frac{e^{\poly(\beta)}}{\poly(\beta)\epsilon^2}$ & $2^{\CO(n)}\cdot\frac{e^{\poly(\beta)}}{\poly(\beta)\epsilon^2}$  & $\CO(1)$\\
        \bottomrule
    \end{tabular}
    \caption{Comparison of different works based on sample complexity, time complexity, locality, and size of measurements, for success probability $0.99$. The time complexity combines classical computation costs and quantum gate complexity. Some results apply to Hamiltonians with bounded interaction degree (including expander graphs), while some to $D$-dimensional lattice Hamiltonians, and we use $\CO(\cdot)$ to suppress dependences on the geometric constants (degree $d$, dimension $D$, and locality $q$). The measurements in \cite{BLMT24} entangle far-away qubits (with respect to the graph distance), whereas the measurements in all the other works entangle nearby qubits of stated size. 
    }
    \label{tab:comparison}
\end{table}

While our approach aims at exposing new locality aspects in Hamiltonian learning from Gibbs states, we comment on some parallels with~\cite{BLMT24}. They also consider a KMS-like condition~\cite[Eq (1)]{BLMT24} reminiscent of our~\eqref{eq:detailedbalintro} and manage to show uniqueness through an involved sum-of-square argument. In our case, we were able to directly isolate a thermodynamically-inspired observable that identifies the local terms in a single analytic equation~\eqref{eq:identify_intro}. \cite[Eq. (2-4)]{BLMT24} also controls the imaginary-time evolved operator by low degree approximation with respect to $\vrho_\beta(\vH)$. We believe our operator Fourier transform may offer a transparent method to achieve a similar goal.

Perhaps a key difference is that they need measurements of faraway qubits\footnote{In \cite[Lemma 8.1]{BLMT24}, the authors want to bound an expression involving the commutator $[H,H']$. For this, they need item 3 in condition 5 on Page 30, for all $A_1,A_2$ of small size, including those $A_1,A_2$ that are far from each other in distance.
} that has no analog in our setting, which might be the obstruction for improving their sample complexity from polynomial to logarithmic in the system size. Our use of the operator Fourier transform $\vA_{\vH}(\omega)$ could potentially address this; it is plausible that plugging in Equation \eqref{eq:detailedbalanceloc} in the sum-of-squares techniques of \cite{BLMT24} would also lead to a local method, but we do not pursue this here.

As mentioned, other prior works either consider the high temperature regime \cite{HKT22}, or give up time efficiency of classical post-processing in favor of simple measurements \cite{anshu2020sample}. We reproduce a variance lower bound reminiscent of~\cite{anshu2020sample} to show that the KMS norm is locally faithful, but now using modern operator Fourier transform toolkits. Other works \cite{EyalAL19,LBAALA23} consider heuristic approaches, and it would be interesting to modify our approach towards very local measurement, but with semi-heuristic guarantees.

\subsection{Discussion and open problems}
We have shown how to learn Hamiltonian from its Gibbs state locally. This achieves a near-optimal sample and time complexity on all lattices for the dependence on the error $\epsilon$ and the system size $n$. Our work, on the one hand, completes the theoretical understanding of the learnability of this physically relevant class of Hamiltonians. On the other hand, it opens up a series of new questions. 

\begin{itemize}
    \item {\bf Reducing measurement cost:} Our algorithm still require measurements on $\CO(\beta^D)$ qubits, which can be large in practice. The work \cite{anshu2020sample} showed how to learn the $k$-local Hamiltonians with $k$-local measurements, albeit with very large time complexity. As further evidence, in the classical case, learning can be done with $\CO(1)$-locality. Suppose we know that $\vH$ is a 2-local classical Hamiltonian - for instance, the Ising model. To learn all the local terms in the neighborhood of a site $i$, we can leverage the Markov property of the classical Gibbs state - the distribution on site $i$ only depends on the spin configuration in the neighborhood of $i$. A local experiment can easily identify this conditional distribution via tomography. This can then be used to reconstruct the entire Hamiltonian.
    
    Is there a method that achieves near-optimal sample and time complexity, while performing entangling measurements on $\CO(1)$ sites? Some evidence in favour of entangling measurements on $\CO(\beta^D)$ qubits comes from the fact that the most recent Hamiltonian learning from time evolution $e^{i\vH t}$ also uses entangling measurements on $\Omega(t^D)$ qubits \cite{HKT22, HTFS23}. Since Gibbs states can inherently be viewed as imaginary time evolutions, $\CO(\beta^D)$ qubit entangling measurement seems fundamental.
    \item {\bf Near optimal protocol on more general graphs:} Is it possible to achieve near optimal sample and time complexity on more general family of interactions beyond lattices? In chemical and atomic systems, the interactions, strictly speaking, have a power-law decay, and a single electron could interact with an extensive number of particles with varying weights. It is also very natural to study Hamiltonians that live on general, expander graphs (such as the sparse SYK model \cite{HSHT23} or quantum Boltzmann machines). 
    \item \textbf{Structure learning:} For classical Hamiltonians, it is possible to learn the underlying graph structure with optimal sample and time complexities, under the promise that the graph has a low degree \cite{Bresler15,KlivansM17}. A crucial assumption in our Theorem \ref{thm:graphthm} is that the underlying interaction graph is known. Is it possible to learn the interaction graph itself, under the promise that it has a low degree?
\item \textbf{Connection to Markov properties.} 
The simplicity and locality of classical Gibbs state learning algorithms are intimately related to the Markov property of the distribution. A vertex, conditioned on its nearest neighbors, is independent from the remaining vertices, and one may resample the vertex of interest conditioned on the neighbors. Curiously, the very recent work~\cite{chen2025Markov} showed that quantum Gibbs states also satisfy a local Markov property, namely, local disturbance to the Gibbs state can be approximately recovered by running a (quasi)-local Gibbs sampler covering the region. Although we were not able to directly adapt the local Markov property for the present learning task, we did exploit the regularization argument, and the locality of quantum operator Fourier transform~\cite{chen2023efficient} appears to be a common theme. 
    \item {\bf Heisenberg scaling of error from thermofield double states:} Is it possible to reduce the error from $\CO(\frac{1}{\epsilon^2})$ to $\CO(\frac{1}{\epsilon})$ when the purification to the Gibbs state (called the thermofield double state) is available? This is possible when $\vH$ is classical, as one has access to the purifications of conditional probability distributions.   
\end{itemize}

\subsection*{Roadmap}
We begin with the preliminaries, including the KMS inner product, the Hamiltonian family together with the interaction graph, the operator Fourier transform, and a regularization trick at low temperatures. Next, we expand on the key analytic argument circling around the identifiability equation. We conclude with the main learning protocols that apply the identifiability equation. In the appendix, we allocate arguments less central to the main conceptual message, including standard Lieb-Robinson bounds, truncation bounds, and standard quantum algorithm subroutines. 
\section{Preliminaries}
\subsection*{Notations}\label{sec:recap_notation}
Throughout the paper, we write
\begin{align}
a \lesssim b \quad \text{iff}\quad a \le c b \quad \text{for an absolute constant} \quad c>0.
\end{align} 
 We denote asymptotic upperbound $\CO(\cdot)$ when fixing the geometric parameters of the Hamiltonian (degree $d$, dimensional $D$, and locality $q$), and $\poly(\cdot)$ denotes a polynomial depending only on $q,d,D$. We use $\bigOt{\cdot}$ to further absorb (poly)logarithmic factors.  
We write scalars, functions, and vectors in normal font, and matrices in bold font $\vO$. 

\begin{align}
\vI&: &\text{the identity operator}\\
\beta&: &\text{ inverse temperature}\\
\vrho_{\beta}&:= \frac{\e^{-\beta \vH }}{\tr[ \e^{-\beta \vH }]} (\equiv \vrho) \quad &\text{the Gibbs state with inverse temperature $\beta$}\\
n &= \labs{\Lambda} &\text{ system size (number of qubits) of the Hamiltonian $\vH$}
\end{align}
Fourier transform notations:
\begin{align}
\vH &= \sum_i E_i \ketbra{\psi_i}{\psi_i}&\text{the Hamiltonian of interest and its eigendecomposition}\\
\text{Spec}(\vH) &:= \{ E_i \} & \text{the spectrum of the Hamiltonian}\\
\nu \in B = B(\vH)&:= \text{Spec}(\vH) - \text{Spec}(\vH) &\text{the set of Bohr frequencies}\\
\vP_{E}&:= \sum_{i:E_i = E} \ketbra{\psi_i}{\psi_i}&\text{eigenspace projector for energy $E$}\\
\vA_\nu&:=\sum_{E_2 - E_1 = \nu } \vP_{E_2} \vA \vP_{E_1} &\text{operator $\vA$ at exact Bohr frequency $\nu$}\\
{\vA}_{\vH}(t) &:=\e^{i\vH t}\vA \e^{-i\vH t}& \text{time-evolved operator $\vA$ with $\vH$}\\
\hat{\vA}_{\vH}(\omega) &:= \frac{1}{\sqrt{2\pi}}\int_{-\infty}^{\infty} \e^{-\ri \omega t}f(t) \vA(t)\mathrm{d}t& \text{operator Fourier Transform for $\vA$ weighted by $f$}\\
\hat{f}(\omega)&=\lim_{K\rightarrow  \infty}\frac{1}{\sqrt{2\pi}}\int_{-K}^{K}\e^{-\ri\omega t} f(t)\mathrm{d}t & \text{the Fourier transform of a function $f$}		
\end{align}
Norms: 
\begin{align}
	\norm{\vO}&:= \sup_{\ket{\psi},\ket{\phi}} \frac{\bra{\phi} \vO \ket{\psi}}{\norm{\ket{\psi}}\cdot \norm{\ket{\phi}}}= \norm{\vO}_{\infty} \quad &\text{the operator norm of a matrix $\vO$}\\
  	\norm{\vO}_p&:= (\tr \labs{\vO}^p)^{1/p} \quad&\text{the Schatten p-norm of a matrix $\vO$}\\
    \langle \vX,\vY\rangle_{\vrho}&:=\tr[\vX^\dagger \vrho^{\frac{1}{2}}\vY\vrho^{\frac{1}{2}}]\,.&\text{the Kubo-Matrin-Schwinger inner product} 
\end{align}

\subsection{Gibbs state and KMS inner product}
We recall that, given a full-rank state $\vrho$, the KMS inner product of two operators $\vX,\vY$ is
\begin{align}
\langle \vX,\vY\rangle_{\vrho}:=\tr[\vX^\dagger \vrho^{\frac{1}{2}}\vY\vrho^{\frac{1}{2}}]\,.
\end{align}
In this paper, we denote by 
\begin{align}
    \norm{\vX}_{\vrho} := \sqrt{\langle \vX,\vX\rangle_{\vrho}}
\end{align}
the $\vrho$-weighted norm induced by the KMS inner product. In particular, we will only consider the Gibbs state $\vrho= \e^{-\beta \vH}/\tr[\e^{-\beta \vH}]$ associated the ground truth Hamiltonian $\vH.$ 
\begin{rmk}
 Our current argument implicitly requires the KMS inner product and does not obviously work for other choices, such as the GNS inner product.
\end{rmk}
The conversion to operator norm always holds, but sometimes may be suboptimal.
\begin{lem}[Operator norm controls weighted norms and inner-product]\label{lem:operatornorm}
    Unconditionally, we have that $\norm{\vX}_{\vrho}\le \norm{\vX}$ and $\braket{\vX,\vY}_{\vrho} \le \norm{\vX}\norm{\vY}.$
\end{lem}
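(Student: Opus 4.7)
The plan is to deduce both inequalities from a single application of the Schatten--H\"older trace inequality. Since the norm bound $\norm{\vX}_{\vrho}\le \norm{\vX}$ is just $\braket{\vX,\vX}_{\vrho}\le \norm{\vX}^{2}$ followed by a square root, I would focus on the bilinear inequality first and specialize at the end.

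Expanding the KMS inner product gives
\begin{align}
\braket{\vX,\vY}_{\vrho} \;=\; \tr\bigl[\vX^{\dagger}\,\vrho^{1/2}\,\vY\,\vrho^{1/2}\bigr].
\end{align}
I would apply the two-factor H\"older inequality $|\tr[AB]|\le \norm{A}_{\infty}\,\norm{B}_{1}$ with $A=\vX^{\dagger}$ and $B=\vrho^{1/2}\vY\vrho^{1/2}$, followed by the three-factor H\"older inequality with exponents $(2,\infty,2)$ to control the remaining trace norm:
\begin{align}
\bigl\lVert \vrho^{1/2}\vY\vrho^{1/2}\bigr\rVert_{1} \;\le\; \bigl\lVert\vrho^{1/2}\bigr\rVert_{2}\,\norm{\vY}_{\infty}\,\bigl\lVert\vrho^{1/2}\bigr\rVert_{2} \;=\; \tr[\vrho]\cdot \norm{\vY} \;=\; \norm{\vY}.
\end{align}
Chaining the two bounds yields $|\braket{\vX,\vY}_{\vrho}|\le \norm{\vX}\,\norm{\vY}$, and the choice $\vY=\vX$ together with a square root delivers $\norm{\vX}_{\vrho}\le \norm{\vX}$.

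There is no substantive obstacle: the only design choice is to partition the H\"older exponents so that the $\vrho$ factors collapse to $\tr[\vrho]=1$ while $\vX,\vY$ are retained in operator norm. A symmetric alternative that I would mention in passing is to use cyclicity of the trace to rewrite $\norm{\vX}_{\vrho}^{2}=\bigl\lVert \vrho^{1/4}\vX\vrho^{1/4}\bigr\rVert_{2}^{2}$ and then apply three-factor H\"older with exponents $(4,\infty,4)$, using $\bigl\lVert\vrho^{1/4}\bigr\rVert_{4}^{4}=\tr[\vrho]=1$; this route proves the norm bound directly and then gives the bilinear bound via Cauchy--Schwarz for the KMS inner product.
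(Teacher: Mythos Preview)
Your proof is correct. The paper itself states this lemma without proof (it is a standard fact), so there is nothing to compare against; your H\"older-based argument is exactly the kind of short justification one would supply, and the alternative route via $\lVert\vrho^{1/4}\vX\vrho^{1/4}\rVert_2$ works equally well.
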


\subsection{Hamiltonians on bounded degree interaction graph and on lattices}\label{sec:Ham}

On a set $\Lambda$ of $n = \labs{\Lambda}$ qubits, we consider Hamiltonians $\vH$ where each term $\vh_{\gamma}$ acts on at most $q$-qubits 
\begin{align}
    \vH = \sum_{\gamma\in \Gamma} \vh_{\gamma}\quad \text{where}\quad \norm{\vh_{\gamma}} \le 1.
\end{align}
From this decomposition, we define the interaction graph\footnote{Every Hamiltonian also defines a hypergraph when the hyperedges are the $\gamma$. Here, the interaction graph is defined between the Hamiltonian terms $\vh_{\gamma}$.} with vertices corresponding to the set $\Gamma$, and we draw an edge between $\gamma_1$ and $\gamma_2$ if and only if the terms have overlapping supports (self-loop allowed):
\begin{align}
\gamma_1\sim \gamma_2\quad \iff\quad    \text{Supp}(\vh_{\gamma_1}) \cap \text{Supp}(\vh_{\gamma_2}) \ne \emptyset.
\end{align}
Similarly, we may consider any subset of vertices $A\subset \Lambda$ and write 
\begin{align}
    A \sim \gamma \iff A\cap \text{Supp}(\vh_{\gamma})\ne \emptyset.
\end{align}
The maximal \textit{degree} of the interaction graph is denoted by $d$, and we are particularly working in the regime where $d$ is a constant independent of the system size $n$\footnote{This will ensure the possibility of conjugating by a constant temperature Gibbs state (see \autoref{lem:convergence_imaginary})}. 
For any two subsets of vertices $A,B\subset \Lambda$, we denote by $\text{dist}(A,B)$ the minimal length of a path connecting $A$ to $B$ via interactions in $H$:
\begin{align*}
\text{dist}(A,B)=\min\bigg\{\ell\in\mathbb{N}:\exists \gamma_1,\dots\gamma_{\ell}\in\Gamma\quad \text{such that}\quad A\sim \gamma_1\sim\gamma_2\sim\dots \sim \gamma_{\ell}\sim B\bigg\}\,.
\end{align*}
Often, we will also consider the subset $A$ or $B$ to be the Hamiltonian term $\gamma$, and we will simply abuse the notation to write  $\text{dist}(\gamma,\gamma')$ and $\text{dist}(A,\gamma')$. We will also often refer to the support of an operator $\vA$ and write by $\text{dist}(\vA,\gamma') \equiv \text{dist}(\text{Supp}(\vA),\gamma').$ For a subset $A\subset \Lambda$, we often consider the local Hamiltonian patch $\vH_{\ell}$ containing all terms $\vh_{\gamma}$ with distance $\text{dist}(\gamma,A)< \ell-1$ 
\begin{align}
    \vH_{\ell} = \sum_{\gamma:\text{dist}(\gamma,A) < \ell-1} \vh_{\gamma}.
\end{align}
Let us also define a surface and volume associated with a ball around set $A$
\begin{align}
    S(\ell,A):= \labs{\{\gamma:\text{dist}(\gamma,A) = \ell\}},\\
    V(\ell,A):=\labs{\{\gamma:\text{dist}(\gamma,A) \le \ell\}}.
\end{align}
We always have $S(\ell) \le \labs{A} d^{\ell+1}, V(\ell) \le \labs{A} d^{\ell+2}/(d-1) \le \labs{A} d^{\ell+2}.$ 

Some of our results consider the special case of $D$-dimensional Hamiltonians. For our proofs, we will simply define a family of $D$-dimensional lattice Hamiltonians by having a uniform bound on the volume and area around a set $A$ 
\begin{align}
    S(\ell)&\le \CO( |A|\ell^{D-1}),\\
    V(\ell)&\le \CO( |A| \ell^D) 
\end{align}
and that the degree $d$ and locality $q$ are constants and $\CO(\cdot)$ suppress dependence on $D,d,q$. These cover the case of nearest-neighbour lattice and a more general ``finite-range'' interaction that allows for arbitrary localized interaction. For concreteness in describing the learning protocol, we will ultimately think of each term as a distinct nontrivial\footnote{Distinctness guarantees the coefficients to be unique for the same Hamiltonian. Removing the identity operator ensures that the Gibbs state uniquely determines the Hamiltonian when $ \beta > 0$.} Pauli string $\vP_{\gamma}\in \{\vI,\vX,\vY,\vZ\}^{\otimes n}$ up to $q$-body, weighted by scalar $h_{\gamma}$
\begin{align}
    \vh_{\gamma} = h_{\gamma} \vP_{\gamma}\quad \text{where}\quad h_{\gamma}\in [-1,1]\quad  \text{and}\quad \norm{\vP_{\gamma}} = 1.
\end{align}

However, some of the basic subroutines can be stated only with the geometry of the interaction graph, without committing to a particular representation. 

\subsection{Operator Fourier tranforms}
\label{sec:OFT}
We recall the operator Fourier transform~\cite{chen2023quantum,chen2023efficient} of an operator $\vA$ associated to the Hamiltonian $\vH$ with spectral decomposition $\vH=\sum_{i}E_i\vP_{E_i}$,

\begin{align}\label{eq:OFT}
{\hat{\vA}_{\vH}}(\omega)=  \frac{1}{\sqrt{2\pi}}\int_{-\infty}^{\infty} \e^{\ri \vH t} \vA \e^{-\ri \vH t} \e^{-\ri \omega t} f(t)\rd t
    \end{align}
with a Gaussian weight with an energy width $\sigma>0$
    \begin{align}
        \hat{f}(\omega)=\frac{1}{\sqrt{\sigma\sqrt{2\pi}}} \exp\L(- \frac{\omega^2}{4\sigma^2}\R),\quad \text{and}\quad f(t) = e^{-\sigma^2t^2}\sqrt{\sigma\sqrt{2/\pi}}\label{eq:fwft}
    \end{align}
    such that $\int_{-\infty}^{\infty}\labs{f(t)}^2\rd t =\int_{-\infty}^{\infty}\labs{\hat{f}(\omega)}^2\rd \omega=1.$ Recall the Fourier transform pairs
    \begin{align}
        \hat{f}(\omega)=\frac{1}{\sqrt{2\pi}}\int_{-\infty}^{\infty}\e^{-\ri\omega t} f(t)\mathrm{d}t\quad \text{and}\quad f(t)=\frac{1}{\sqrt{2\pi}}\int_{-\infty}^{\infty}\e^{\ri\omega t} \hat{f}(\omega)\mathrm{d}\omega. 
    \end{align}
However, in this paper, we will reserve $f(t)$ exclusively for the Gaussian weight. A key object is the decomposition of an operator $\vA$ by the Bohr frequencies $\nu\in B(\vH)$ of a Hamiltonian $\vH$
\begin{align}
        \vA = \sum_{\nu\in B(\vH)} \vA_{\nu},\quad \text{where}\quad \vA_\nu&:=\sum_{E_2 - E_1 = \nu } \vP_{E_2} \vA \vP_{E_1} \quad \text{satisfies that} \quad (\vA_{\nu})^{\dagger} = (\vA^{\dagger})_{-\nu},\label{eqn:Aoperator}
\end{align}
and $\vP_{E}$ are eigenspace projectors for energy $E\in \text{Spec}(\vH)$ and $B(\vH)=\text{Spec}(\vH)-\text{Spec}(\vH)$ is the set of energy differences. 

\begin{lem}[Decomposing into Bohr frequencies~{\cite[Appendix A]{chen2023quantum}}]\label{lem:Bohr_decomp} For any Hamiltonian $\vH$, the Heisenberg dynamics for a (not necessarily Hermitian) operator $\vA$ can be decomposed as
\begin{align}
    \vA_{\vH}(t):=\e^{\ri \vH t} \vA \e^{-\ri \vH t} &= \sum_{\nu\in B(\vH)} \e^{\ri \nu t} \vA_\nu.
\end{align}
Furthermore, the operator Fourier transform satisfies
\begin{align}
   {\hat{\vA}_{\vH}}(\omega) =\sum_{\nu \in B(\vH)} \vA_{\nu}\hat{f}(\omega - \nu).
\end{align}
\end{lem}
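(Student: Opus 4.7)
The proof is a direct spectral calculation; there is no significant obstacle. The plan is to derive both identities from the spectral decomposition $\vH = \sum_i E_i \vP_{E_i}$, where $\vP_{E_i}$ is the orthogonal projector onto the eigenspace of eigenvalue $E_i$.

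For the first identity, I would expand the unitaries via the spectral calculus, $e^{\pm \ri \vH t} = \sum_i e^{\pm \ri E_i t}\, \vP_{E_i}$, and insert these on either side of $\vA$. Using $\vP_{E_i} \vP_{E_j} = \delta_{ij} \vP_{E_i}$ this gives
\begin{align}
\vA_{\vH}(t) \;=\; e^{\ri \vH t} \vA\, e^{-\ri \vH t} \;=\; \sum_{i,j} e^{\ri (E_i - E_j) t}\, \vP_{E_i} \vA \vP_{E_j}.
\end{align}
Regrouping the double sum by the energy gap $\nu := E_i - E_j \in B(\vH)$ and appealing to the definition $\vA_\nu = \sum_{E_2 - E_1 = \nu} \vP_{E_2} \vA \vP_{E_1}$ from Equation \eqref{eqn:Aoperator} immediately yields $\vA_{\vH}(t) = \sum_{\nu \in B(\vH)} e^{\ri \nu t} \vA_\nu$. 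No use of Hermiticity of $\vA$ is required, so the statement covers the non-Hermitian case.

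For the second identity, I would plug the Bohr-frequency expansion of $\vA_{\vH}(t)$ into the definition \eqref{eq:OFT} of the operator Fourier transform and exchange the (finite) sum with the integral:
\begin{align}
\hat{\vA}_{\vH}(\omega) \;=\; \frac{1}{\sqrt{2\pi}} \int_{-\infty}^{\infty} e^{-\ri \omega t} f(t) \sum_{\nu \in B(\vH)} e^{\ri \nu t} \vA_\nu \, \rd t \;=\; \sum_{\nu \in B(\vH)} \vA_\nu \cdot \frac{1}{\sqrt{2\pi}} \int_{-\infty}^{\infty} e^{-\ri (\omega - \nu) t} f(t)\, \rd t.
\end{align}
Recognizing the remaining integral as $\hat{f}(\omega - \nu)$, the Fourier transform of the Gaussian weight evaluated at the shifted frequency, delivers the claim $\hat{\vA}_{\vH}(\omega) = \sum_{\nu \in B(\vH)} \vA_\nu\, \hat{f}(\omega - \nu)$.

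The only mild subtlety, not really an obstacle, is justifying the interchange of sum and integral; since the Hilbert space is finite-dimensional the set $B(\vH)$ is finite and the swap is trivially legal. Aside from this, the entire lemma is a bookkeeping restatement of the spectral theorem, where the operator Fourier transform redistributes each Bohr-frequency component $\vA_\nu$ across frequencies according to the window $\hat{f}$ centered at $\nu$. This is the sense in which the Gaussian window localizes the Bohr-frequency decomposition, and it is the structural fact underlying the regularization discussed in \autoref{fig:regularize_imaginary}.
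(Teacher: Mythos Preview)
Your proof is correct and is exactly the standard spectral calculation one would write for this statement. The paper itself does not supply a proof of this lemma; it is quoted from~\cite[Appendix~A]{chen2023quantum}, so there is nothing to compare against beyond noting that your argument is the intended one.
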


\subsection{Regularizing the operator Fourier transform at low-temperatures}\label{sec:regularize}

At a low constant temperature, the imaginary time dynamics can diverge exponentially with the system size $\norm{e^{\beta\vH} \vA e^{-\beta \vH}} \ge e^{cn}$ in more than one spatial dimension \cite{Bouch15}. 
It will be tremendously helpful to decompose the operator over operator Fourier transforms at different Bohr frequencies. This section follows results from~\cite{chen2025Markov}, and we include some of the relevant proofs. 
\begin{lem}[Decomposing an operator by the energy change{~\cite[Lemma IX.1]{chen2025Markov}}]\label{lem:sumoverenergies}
For any (not necessarily Hermitian) operator $\vA$ and Hermitian $\vH$, we have that
\begin{align}
\vA =  \frac{1}{\sqrt{\sigma2\sqrt{2\pi}}} \int_{-\infty}^{\infty} \hat{\vA}_{\vH}(\omega)\rd \omega.
\end{align}
\end{lem}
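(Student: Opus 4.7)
The plan is to unpack the operator Fourier transform using the Bohr-frequency decomposition already provided in \autoref{lem:Bohr_decomp}, and then reduce the claim to a one-variable Gaussian integral. Specifically, since $\vH$ acts on a finite-dimensional space, the set $B(\vH)$ of Bohr frequencies is finite, and
\begin{align}
\hat{\vA}_{\vH}(\omega) = \sum_{\nu \in B(\vH)} \vA_{\nu}\, \hat{f}(\omega - \nu).
\end{align}
Integrating both sides against $\rd \omega$ and swapping the finite sum with the integral — which is harmless because $\hat{f}$ is a Schwartz function so each term is absolutely integrable — gives
\begin{align}
\int_{-\infty}^{\infty} \hat{\vA}_{\vH}(\omega)\, \rd\omega = \sum_{\nu\in B(\vH)} \vA_{\nu} \int_{-\infty}^{\infty} \hat{f}(\omega-\nu)\,\rd\omega = \Big(\sum_{\nu\in B(\vH)} \vA_{\nu}\Big)\, \int_{-\infty}^{\infty} \hat{f}(\omega)\,\rd\omega,
\end{align}
using translation invariance of Lebesgue measure in the last step. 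By the Bohr decomposition in Equation~\eqref{eqn:Aoperator}, $\sum_{\nu}\vA_{\nu}=\vA$, so what remains is to compute the scalar constant $\int \hat{f}(\omega)\,\rd\omega$ and match it with the prefactor in the statement.

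For the constant, I would substitute the explicit Gaussian $\hat{f}(\omega)=\frac{1}{\sqrt{\sigma\sqrt{2\pi}}}\exp(-\omega^2/(4\sigma^2))$ from Equation~\eqref{eq:fwft} and use the standard identity $\int_{-\infty}^{\infty} e^{-\omega^2/(4\sigma^2)}\,\rd\omega = 2\sigma\sqrt{\pi}$, giving
\begin{align}
\int_{-\infty}^{\infty} \hat{f}(\omega)\,\rd\omega = \frac{2\sigma\sqrt{\pi}}{\sqrt{\sigma\sqrt{2\pi}}} = \sqrt{\sigma \cdot 2\sqrt{2\pi}},
\end{align}
after simplifying $2\sqrt{\sigma}\,\pi^{1/2}\cdot(2\pi)^{-1/4}=\sqrt{\sigma}\cdot 2^{3/4}\pi^{1/4}=\sqrt{\sigma\cdot 2\sqrt{2\pi}}$. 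Dividing through by this factor yields the asserted identity.

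I do not anticipate any real obstacle here: the lemma is essentially a resolution-of-identity statement, and the only subtlety is bookkeeping the normalization convention (in particular that $\hat{f}$ is $L^2$-normalized in Equation~\eqref{eq:fwft} rather than $L^1$-normalized, which is why the prefactor is $1/\sqrt{\sigma\cdot 2\sqrt{2\pi}}$ and not simply $1/\sqrt{2\pi}$). If one wanted to avoid the Bohr-frequency decomposition, an alternative route would be to plug the time-domain definition~\eqref{eq:OFT} directly into the integral and evaluate $\int e^{-\ri \omega t}\rd \omega = 2\pi\delta(t)$ distributionally, collapsing the $t$ integral to $t=0$ and yielding $\vA$ times an $f(0)$-type prefactor; both routes give the same constant, and I would present the first since \autoref{lem:Bohr_decomp} is already at hand.
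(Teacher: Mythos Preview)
Your proposal is correct and follows essentially the same route as the paper: expand $\hat{\vA}_{\vH}(\omega)$ via the Bohr decomposition from \autoref{lem:Bohr_decomp}, interchange the finite sum with the $\omega$-integral, and reduce to the scalar $\int_{-\infty}^{\infty}\hat{f}(\omega)\,\rd\omega$. The only cosmetic difference is that the paper evaluates this constant as $\sqrt{2\pi}\,f(0)$ via Fourier inversion, whereas you compute the Gaussian integral directly; both yield $\sqrt{\sigma\cdot 2\sqrt{2\pi}}$.
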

\begin{proof}
    \begin{align}\vspace{-1cm}
        \int_{-\infty}^{\infty} \hat{\vA}_{\vH}(\omega)\rd \omega = \int_{-\infty}^{\infty} \sum_{\nu} \vA_{\nu} \hat{f}(\omega-\nu) \rd \omega= \sum_{\nu} \vA_{\nu} \int_{-\infty}^{\infty}\hat{f}(\omega-\nu) \rd (\omega-\nu)
        = \vA \sqrt{2\pi} f(0) = \vA \sqrt{\sigma2\sqrt{2\pi}}.
    \end{align}
    Rearrange to conclude the proof.
\end{proof}

The Gaussian damping has a regularization effect due to its super-exponential decay. 
\begin{lem}[Norm bounds on imaginary time conjugation{~\cite[Lemma IX.2]{chen2025Markov}}]\label{lem:bounds_imaginary_conjugation}
For any $\beta ,\omega\in \BR$ and operator $\vA$ with norm $\norm{\vA}\le 1$, the operator Fourier transform $\hat{\vA}_{\vH}(\omega)$ with uncertainty $\sigma$~\eqref{eq:OFT},~\eqref{eq:fwft} satisfies 
\begin{align}
        \e^{\beta \vH} \hat{\vA}_{\vH}(\omega) \e^{-\beta \vH}
& = e^{\beta\omega} \cdot \hat{\vA}_{\vH}(\omega+2\sigma^2\beta) \e^{\sigma^2\beta^2}.
\end{align}
Thus, 
\begin{align}
    \norm{\e^{\beta \vH} \hat{\vA}_{\vH}(\omega) \e^{-\beta \vH}} \le \frac{\e^{\sigma^2\beta^2}}{\sqrt{{\sigma}\sqrt{2\pi}}} \e^{\beta\omega}.
\end{align}
\end{lem}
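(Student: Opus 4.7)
The plan is to prove the identity first via the Bohr frequency decomposition, and then derive the norm bound as a direct consequence. The key observation is that the Gaussian profile $\hat{f}$ interacts cleanly with the exponential weight picked up from imaginary time conjugation of eigenoperators.

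First, I would invoke \autoref{lem:Bohr_decomp} to write
\begin{align}
\hat{\vA}_{\vH}(\omega) = \sum_{\nu \in B(\vH)} \vA_\nu \, \hat{f}(\omega - \nu).
\end{align}
Since $\vA_\nu$ shifts energies by exactly $\nu$ (each summand is of the form $\vP_{E_2}\vA\vP_{E_1}$ with $E_2 - E_1 = \nu$), we have the elementary identity $\e^{\beta \vH}\vA_\nu \e^{-\beta \vH} = \e^{\beta \nu}\vA_\nu$. Applying this termwise gives
\begin{align}
\e^{\beta \vH}\hat{\vA}_{\vH}(\omega)\e^{-\beta \vH} = \sum_{\nu \in B(\vH)} \vA_\nu \, \e^{\beta \nu}\hat{f}(\omega - \nu).
\end{align}

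The heart of the proof is then a completion-of-the-square. Using $\hat{f}(\omega-\nu) = \frac{1}{\sqrt{\sigma\sqrt{2\pi}}}\exp\!\bigl(-\tfrac{(\omega-\nu)^2}{4\sigma^2}\bigr)$, the exponent becomes
\begin{align}
\beta \nu - \frac{(\omega-\nu)^2}{4\sigma^2} = \beta \omega + \sigma^2\beta^2 - \frac{\bigl((\omega+2\sigma^2\beta) - \nu\bigr)^2}{4\sigma^2},
\end{align}
so $\e^{\beta\nu}\hat{f}(\omega-\nu) = \e^{\beta\omega + \sigma^2\beta^2}\, \hat{f}\bigl((\omega+2\sigma^2\beta) - \nu\bigr)$. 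Resumming over $\nu$ with \autoref{lem:Bohr_decomp} reassembles the operator Fourier transform at the shifted frequency $\omega + 2\sigma^2\beta$, yielding the claimed identity.

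For the norm bound, I would use the triangle inequality on the defining integral:
\begin{align}
\norm{\hat{\vA}_{\vH}(\omega')} \le \frac{\norm{\vA}}{\sqrt{2\pi}}\int_{-\infty}^{\infty}\labs{f(t)}\rd t = \frac{\norm{\vA}}{\sqrt{2\pi}} \cdot \sqrt{\sigma\sqrt{2/\pi}}\cdot \frac{\sqrt{\pi}}{\sigma} = \frac{\norm{\vA}}{\sqrt{\sigma\sqrt{2\pi}}},
\end{align}
where I used that $\vH$-conjugation is unitary (so $\norm{\vA_{\vH}(t)} = \norm{\vA}$) and computed the Gaussian integral directly. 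Applying this to $\omega' = \omega + 2\sigma^2\beta$ and multiplying by the scalar prefactor $\e^{\beta\omega + \sigma^2\beta^2}$ produces the advertised inequality.

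There is no real obstacle here: the identity is essentially bookkeeping for an eigenvalue-wise computation made transparent by the Bohr decomposition, and the only nontrivial step is the Gaussian algebraic identity (completion of the square). The conceptual content worth underscoring is that the shift $2\sigma^2\beta$ is exactly the amount by which the Gaussian centre moves when reweighted by $\e^{\beta\nu}$, and the residual $\e^{\sigma^2\beta^2}$ is the resulting peak-height renormalization.
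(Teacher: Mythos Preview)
Your proposal is correct and follows essentially the same approach as the paper: Bohr-frequency decomposition via \autoref{lem:Bohr_decomp}, completion of the square in the Gaussian exponent to identify the shifted frequency $\omega+2\sigma^2\beta$, and then the triangle inequality on the time integral (the paper compresses your Gaussian-integral computation into the observation that $\tfrac{1}{\sqrt{2\pi}}\int\labs{f(t)}\rd t=\hat f(0)$). There is nothing to add.
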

In comparison, directly conjugating the unfiltered operator could yield a norm $\norm{\e^{\beta \vH} \vA \e^{-\beta \vH}}$ growing with the system size $n$; the Gaussian filtering centered at Bohr frequency $\omega$ removes the dependence on the system size $n$, and only depends the Bohr frequency $\omega.$ While it still grows exponentially, the bounds are now entirely (quasi)-local.
\begin{proof}
Recall
\begin{align}
        \e^{\beta \vH} \hat{\vA}_{\vH}(\omega) \e^{-\beta \vH} &= \sum_{\nu} \vA_{\nu}\frac{1}{\sqrt{\sigma\sqrt{2\pi}}} \exp\L(- \frac{(\omega-\nu)^2}{4\sigma^2}\R)\e^{\beta \nu} \tag*{(\autoref{lem:Bohr_decomp})} \\
&=  \sum_{\nu} \vA_{\nu}\frac{1}{\sqrt{\sigma\sqrt{2\pi}}} \exp\L(- \frac{(\omega+2\sigma^2\beta-\nu)^2}{4\sigma^2}+\beta\omega+\sigma^2\beta^2\R) \\
& = \hat{\vA}(\omega+2\sigma^2\beta) \cdot \e^{\beta\omega+\sigma^2\beta^2}.
\end{align}
Apply triangle inequality to the time integral $\norm{\hat{\vA}(\omega)} \le\frac{1}{\sqrt{2\pi}} \int_{-\infty}^{\infty}\labs{f(t)}\rd t =\hat{f}(0) = \frac{1}{\sqrt{\sigma\sqrt{2\pi}}}$ to conclude the proof.
\end{proof}
At high enough temperatures, there is a stronger bound (within the convergence radius of the Taylor expansion) that exploits the bounded interaction degree of the Hamiltonian.

\begin{lem}[Convergence for imaginary time{~\cite[Lemma IX.3]{chen2025Markov}}]\label{lem:convergence_imaginary}
    For Hamiltonians defined in~\autoref{sec:Ham} with interaction degree at most $d$, and a single-site operator $\vA$, and $\labs{\beta}< 1/2d,$
    \begin{align}
        \norm{\e^{\beta \vH}\vA\e^{-\beta \vH}}\le \frac{1}{1-2d\labs{\beta}}.
    \end{align}
\end{lem}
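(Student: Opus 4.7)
The plan is to use the Taylor series for conjugation by $\e^{\beta \vH}$ and then obtain a sharp combinatorial bound on the nested commutators by exploiting the bounded interaction degree together with the single-site support of $\vA$.

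First, I would write
\begin{align}
\e^{\beta \vH} \vA \e^{-\beta \vH} = \sum_{k=0}^\infty \frac{\beta^k}{k!}\,\mathrm{ad}_{\vH}^{k}(\vA),\qquad \mathrm{ad}_{\vH}^{k}(\vA):=[\vH,[\vH,\dots,[\vH,\vA]\dots]],
\end{align}
and reduce the claim to proving the pointwise bound
\begin{align}
\big\|\mathrm{ad}_{\vH}^{k}(\vA)\big\| \le k!\,(2d)^k\,\|\vA\|,
\end{align}
because then $\|\e^{\beta \vH} \vA \e^{-\beta \vH}\| \le \sum_k (2d|\beta|)^k = (1-2d|\beta|)^{-1}$ whenever $|\beta|<1/(2d)$, which simultaneously establishes absolute convergence of the Taylor series in operator norm and the stated bound.

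The core step is the combinatorial estimate on $\|\mathrm{ad}_{\vH}^{k}(\vA)\|$. I would expand
\begin{align}
\mathrm{ad}_{\vH}^{k}(\vA)=\sum_{\gamma_1,\dots,\gamma_k\in\Gamma}[\vh_{\gamma_k},[\vh_{\gamma_{k-1}},\dots,[\vh_{\gamma_1},\vA]\dots]]
\end{align}
and argue nonzero-by-nonzero. For the innermost commutator to survive, $\gamma_1$ must overlap $\mathrm{Supp}(\vA)$. Since $\vA$ is single-site, any two $\gamma\in\Gamma$ whose support contains that site necessarily overlap, so if we add $\vA$ as a fictitious vertex to the interaction graph its degree is bounded by $d$; hence there are at most $d$ choices for $\gamma_1$. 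By induction, given $\gamma_1,\dots,\gamma_{j-1}$, the support of the inner nested commutator is contained in $\mathrm{Supp}(\vA)\cup\bigcup_{i<j}\mathrm{Supp}(\vh_{\gamma_i})$; this is a union of $j$ vertices in the (augmented) interaction graph, each of degree at most $d$, so $\gamma_j$ has at most $jd$ admissible choices. Combining the bound $\|[\vh_{\gamma_j},\,\cdot\,]\|\le 2\|\vh_{\gamma_j}\|\le 2$ at each nesting level gives
\begin{align}
\big\|\mathrm{ad}_{\vH}^{k}(\vA)\big\| \le 2^k\|\vA\|\cdot d\cdot 2d\cdot 3d\cdots kd = 2^k\,k!\,d^k\,\|\vA\|,
\end{align}
which is exactly what is needed.

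The main obstacle, and the only nontrivial point, is the clean justification that $\vA$ single-site yields the initial factor of $d$ rather than something larger, and that the $j$-th nesting really only enlarges the set of reachable vertices by a factor of $d$ per prior vertex. Both follow from the simple observation that in the interaction hypergraph any collection of terms sharing a common qubit forms a clique (so self-consistent with the degree bound $d$), and that $|\{\gamma:\gamma\sim S\}|\le d|S|$ for any set $S$ of vertices in the interaction graph. Once this counting is in hand the rest is bookkeeping, and summing the geometric series completes the proof.
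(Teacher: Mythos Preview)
Your proof is correct and is precisely the standard cluster-expansion argument: Taylor expand the conjugation into nested commutators, count at most $k!\,d^k$ admissible sequences $\gamma_k\sim\cdots\sim\gamma_1\sim\vA$ using the degree bound, pick up $2^k$ from the commutator norms, and sum the geometric series. The paper does not reprove this lemma explicitly (it is imported from \cite{chen2025Markov}), but the identical expansion and the same $k!\,d^k$ path count appear verbatim in the paper's proof of \autoref{lem:high_freq}, so your approach matches the intended one.
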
 
Using the above, we bootstrap for an even better bound on the norm of the Operator Fourier Transform.
\begin{cor}[Norm decay for large energy difference{~\cite[Corollary IX.2]{chen2025Markov}}]\label{cor:norm_decay}
For any $\beta ,\omega \in \BR$ and operator $\vA$, the operator Fourier transform with uncertainty $\sigma>0$ satisfies 
\begin{align}
    \norm{\hat{\vA}_{\vH}(\omega)} \le \frac{\e^{-\beta \omega + \sigma^2 \beta^2}}{\sqrt{{\sigma}\sqrt{2\pi}}} \norm{\e^{\beta \vH}\vA\e^{-\beta \vH}}.
\end{align}
\end{cor}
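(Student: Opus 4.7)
The plan is a short bootstrap off \autoref{lem:bounds_imaginary_conjugation}. Define $\vB := e^{\beta\vH}\vA e^{-\beta\vH}$. Since $\vH$ commutes with the real-time dynamics $e^{\pm \ri\vH t}$ that drive the operator Fourier transform, imaginary-time conjugation by $e^{\beta\vH}$ passes through the Fourier integral; explicitly,
\begin{equation}
e^{\beta\vH}\hat{\vA}_{\vH}(\omega')e^{-\beta\vH} = \hat{\vB}_{\vH}(\omega').
\end{equation}

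Applying \autoref{lem:bounds_imaginary_conjugation} to the left-hand side and substituting $\omega' = \omega - 2\sigma^2\beta$, a one-line algebraic rearrangement of the shift and the $e^{\sigma^2\beta^2}$ prefactor gives the operator identity
\begin{equation}
\hat{\vA}_{\vH}(\omega) = e^{-\beta\omega + \sigma^2\beta^2}\,\hat{\vB}_{\vH}(\omega - 2\sigma^2\beta).
\end{equation}
Then passing the operator norm inside the defining integral of $\hat{\vB}_{\vH}$ yields $\norm{\hat{\vB}_{\vH}(\omega - 2\sigma^2\beta)} \le \hat{f}(0)\norm{\vB} = \norm{\vB}/\sqrt{\sigma\sqrt{2\pi}}$, which is the stated inequality once $\vB$ is unpacked.

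There is essentially no technical obstacle: the whole argument is three lines of algebra once one notices that $e^{\beta\vH}(\cdot)e^{-\beta\vH}$ commutes with $\hat{(\cdot)}_{\vH}(\omega)$. The conceptual content is the direction of the exponent: \autoref{lem:bounds_imaginary_conjugation} produces a factor $e^{+\beta\omega}$ that grows with $\omega$, whereas this bootstrap flips it to the \emph{decay} $e^{-\beta\omega}$, at the cost of paying the norm of the imaginary-time conjugate $e^{\beta\vH}\vA e^{-\beta\vH}$. This is exactly what is needed to conclude that $\hat{\vA}_{\vH}(\omega)$ is small at large positive $\omega$ whenever the imaginary-time conjugate of $\vA$ is controlled, e.g.\ at high temperatures via \autoref{lem:convergence_imaginary} or, more generally, via the regularization picture from~\cite{chen2025Markov}.
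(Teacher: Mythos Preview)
Your proof is correct and follows essentially the same route as the paper: both define $\vB=e^{\beta\vH}\vA e^{-\beta\vH}$, use that imaginary-time conjugation commutes with the operator Fourier transform to get $e^{\beta\vH}\hat{\vA}_{\vH}(\omega')e^{-\beta\vH}=\hat{\vB}_{\vH}(\omega')$, and then invoke \autoref{lem:bounds_imaginary_conjugation} together with the crude bound $\norm{\hat{\vB}_{\vH}(\cdot)}\le \norm{\vB}/\sqrt{\sigma\sqrt{2\pi}}$. The only cosmetic difference is that the paper phrases the first step as ``borrowing'' cancelling factors $\hat{\vA}_{\vH}(\omega)=e^{-\beta\vH}\hat{\vB}_{\vH}(\omega)e^{\beta\vH}$ and applies the lemma with $-\beta$, whereas you apply it with $+\beta$ and solve back for $\hat{\vA}_{\vH}(\omega)$ via the shift $\omega'=\omega-2\sigma^2\beta$; the algebra is identical.
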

\begin{proof} ``Borrow'' cancelling factors of $e^{\beta \vH}$ on the left and right
    \begin{align}
\hat{\vA}_{\vH}(\omega)  &= \e^{-\beta \vH}\cdot (\e^{\beta \vH}\hat{\vA}_{\vH}(\omega)\e^{-\beta \vH}) \cdot\e^{\beta \vH}\\
&=\e^{-\beta \vH}\cdot( \widehat{[\e^{\beta \vH}\vA\e^{-\beta \vH}]}_{\vH}(\omega) )\cdot \e^{\beta \vH}\tag*{(Operator FT commutes with imaginary time conjugation)}
\end{align}    
and apply~\autoref{lem:bounds_imaginary_conjugation} for $\vA'= \e^{\beta \vH}\vA\e^{-\beta \vH}$ to conclude the proof.
\end{proof}
This will allow us to truncate an operator in the Bohr-frequency space with an exponentially small error.

\section{The identifiability equation}
In this section, we are interested in the local difference between the ground truth $\vH = \sum_{\gamma\in \Gamma} \vh_{\gamma}$ and a guess 
\begin{align}
    \quad\vH' = \sum_{\gamma'\in \Gamma} \vh_{\gamma'}\quad \text{where}\quad \norm{\vh_{\gamma'}} \le 1,
\end{align}
by a positive quantity 
\begin{align}
\norm{[\vA,\vH-\vH']}^2_{\vrho} \quad \text{for}\quad \vA \in \{\vX_i,\vY_i,\vZ_i\}.
\end{align}

The challenge is to control this quantity without \textit{a priori} knowing the ground truth $\vH$. The first step is to relax this expression by optimizing over a set of operators
\begin{align}
    \lnorm{[\vA,\vH-\vH']}^2_{\vrho} = \Braket{ [\vA,\vH-\vH'],[\vA,\vH-\vH'] }_{\vrho}  \le 2d \sup_{\vO = [\vA,\vP_{\gamma}] }\big\lvert \braket{ \vO,[\vA,\vH-\vH']}_{\vrho}\big\rvert. \label{eq:relaxing_square}
\end{align}
That is, we consider $\vO$ to be all possible terms when taking the commutators of $\vH$ with $\vA$.

The main result of this section is the following identifiability equation, whose precise functional form will be the key to local learning.
\begin{lem}[The identifiability equation]\label{lem:identify}
For any pair of Hamiltonians $\vH,\vH'$ and Gibbs states $\vrho \propto e^{-\beta \vH}, \vrho' \propto e^{-\beta \vH'}$, and operators $\vO$, $\vA$, we have that
    \begin{align}
    \frac{\beta}{2} \braket{ \vO,[\vA,\vH-\vH']}_{\vrho} = \frac{1}{\sqrt{2\pi}}\int_{-\infty}^{\infty} \tr\L[\vO^{\dagger}_{\beta\vH/2}(t)\L(  \sqrt{\vrho'}\vA_{\beta \vH'/2}(t)\sqrt{\vrho^{'-1}}\vrho - \vrho\sqrt{\vrho^{'-1}}\vA_{\beta \vH'/2}(t)\sqrt{\vrho^{'}} \R)\R]g(t) \rd t.
\end{align}
\end{lem}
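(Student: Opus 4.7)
The plan is to diagonalize both sides in a ``double Bohr frequency'' basis (cf.~\autoref{sec:doubleBohr}), by expanding $\vA=\sum_{\nu'\in B(\vH')}\vA_{\nu'}$ in Bohr modes of $\vH'$, then each $\vA_{\nu'}=\sum_{\mu\in B(\vH)}(\vA_{\nu'})_{\mu}$ further in Bohr modes of $\vH$, and $\vO^{\dagger}=\sum_{\mu_O}(\vO^{\dagger})_{\mu_O}$ in Bohr modes of $\vH$. The relevant spectral identities are $[\vH,(\vA_{\nu'})_{\mu}]=\mu (\vA_{\nu'})_{\mu}$ and $[\vH',\vA_{\nu'}]=\nu'\vA_{\nu'}$, together with the KMS-type rules $\sqrt{\vrho}\vX\sqrt{\vrho}=e^{-\beta\mu/2}\vX\vrho$ for any $\vH$-Bohr $\mu$-mode $\vX$ and $\sqrt{\vrho'}\vA_{\nu'}\sqrt{\vrho'^{-1}}=e^{-\beta\nu'/2}\vA_{\nu'}$ (with the analogous formula $\sqrt{\vrho'^{-1}}\vA_{\nu'}\sqrt{\vrho'}=e^{\beta\nu'/2}\vA_{\nu'}$).

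Applying these to the LHS, I use $[\vA,\vH-\vH']=-\sum_{\nu',\mu}(\mu-\nu')(\vA_{\nu'})_{\mu}$ and the KMS rule for $\vrho$; cyclicity then forces $\mu_O=-\mu$ for any nonzero trace, giving
\begin{align}
\tfrac{\beta}{2}\braket{\vO,[\vA,\vH-\vH']}_{\vrho}=-\tfrac{\beta}{2}\sum_{\nu',\mu}(\mu-\nu')e^{-\beta\mu/2}\tr[(\vO^{\dagger})_{-\mu}(\vA_{\nu'})_{\mu}\vrho].
\end{align}
For the RHS, the $\vH'$-KMS rule yields $\sqrt{\vrho'}\vA_{\beta\vH'/2}(t)\sqrt{\vrho'^{-1}}=\sum_{\nu'}e^{i\nu'\beta t/2-\beta\nu'/2}\vA_{\nu'}$, and analogously for the swapped-weight term. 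Further decomposing each $\vA_{\nu'}$ in $\vH$-Bohr modes and using $\vrho\vX=e^{-\beta\mu}\vX\vrho$ on $\vH$-$\mu$-modes collapses the difference of the two terms into a single sum with prefactor $2\sinh(\beta(\mu-\nu')/2)\cdot e^{-\beta\mu/2}$ acting on $(\vA_{\nu'})_{\mu}\vrho$. Tracing against $\vO^{\dagger}_{\beta\vH/2}(t)=\sum_{\mu_O}e^{i\mu_O\beta t/2}(\vO^{\dagger})_{\mu_O}$ (again forcing $\mu_O=-\mu$) and integrating in $t$ against $g(t)/\sqrt{2\pi}$ produces
\begin{align}
\sum_{\nu',\mu}2\sinh\!\L(\tfrac{\beta(\mu-\nu')}{2}\R)\hat g\!\L(\tfrac{\beta(\mu-\nu')}{2}\R)e^{-\beta\mu/2}\tr[(\vO^{\dagger})_{-\mu}(\vA_{\nu'})_{\mu}\vrho].
\end{align}

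Matching the LHS and RHS coefficient by coefficient in $(\mu,\nu')$ pins down $g$ as the function whose Fourier transform satisfies $2\sinh(\omega)\hat g(\omega)=-\omega$, i.e.\ $\hat g(\omega)=-\omega/(2\sinh\omega)$; this function extends smoothly at $\omega=0$ (value $-1/2$) and decays exponentially, so $g$ is Schwartz, consistent with the ``rapidly decaying'' description from the introduction. I expect the main obstacle to be the RHS manipulation: $\vA_{\nu'}$ is by construction a Bohr mode of $\vH'$ but a \emph{sum} of Bohr modes of $\vH$, so the two KMS identities (one for each Gibbs state) must be interleaved in just the right order for the clean $\sinh$ structure to emerge and cancel against the linear factor $(\mu-\nu')$ produced on the LHS. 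Tracking signs and the non-commuting interplay between $\vrho$ and $\vrho'$ is where the calculation requires the most attention.
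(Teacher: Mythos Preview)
Your proposal is correct and follows essentially the same route as the paper: both arguments rest on the double Bohr frequency decomposition of $\vA$ (first in $\vH'$-modes $\nu'$, then in $\vH$-modes $\mu$) and the identity $2\sinh(x)\hat g(x)=-x$ that links the LHS's linear factor $(\mu-\nu')$ to the RHS's $\sinh$ structure. The only structural difference is that the paper first isolates an operator-level identity (\autoref{lem:comm_diff}) expressing $[\vA,\vH_2]-[\vA,\vH_1]$ as a time integral of imaginary-time conjugations, and only afterwards takes the KMS inner product with $\vO$ against $\vrho$; you instead carry the trace and the $\vrho$-weighting through from the start, so the constraint $\mu_O=-\mu$ appears earlier and the operator identity is never stated explicitly. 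Both computations unwind to the same coefficient matching, and your derivation of $\hat g(\omega)=-\omega/(2\sinh\omega)$ agrees with the paper's definition of $g$.
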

The rest of the section begins with proof and supplies additional regularization tricks that will turn the RHS into physically measurable quantities with decent continuity properties. We will also derive the conversion between the commutator square and the local coefficients.  

\subsection{Double Bohr frequency decomposition}\label{sec:doubleBohr}

To make sense of~\autoref{lem:identify}, we must consider two Hamiltonians $\vH_1$, $\vH_2$ and decompose them into their Bohr frequencies $\nu_1,\nu_2$ iteratively, resulting in the following ``double'' decomposition, which may seem intimidating at first glance
\begin{align}
(\vA_{\nu_1})_{\nu_2} := \sum_{E'_2 - E_2 = \nu_2 } \sum_{E'_1 - E_1 = \nu_1 } \vP_{E'_2} \vP_{E'_1} \vA \vP_{E_1}\vP_{E_2}.
\end{align}
In general, the order of decomposition matters $(\vA_{\nu_1})_{\nu_2} \ne (\vA_{\nu_2})_{\nu_1}$ as $\vH_1$ and $\vH_2$ may not commute. Nevertheless, for the expressions we care about, their double Bohr frequency decomposition still takes a manageable form.

\begin{lem}[Double Bohr frequency decomposition]\label{lem:double_Bohr}
For any operator $\vA,$ and Hermitian operators $\vH_1,\vH_2$,
    \begin{align}
 \e^{\vH_2}\e^{-\vH_1}\vA\e^{\vH_1}\e^{-\vH_2}-\e^{-\vH_2}\e^{\vH_1}\vA\e^{-\vH_1}\e^{\vH_2} &= \sum_{\nu_1\in B_1,\nu_2\in B_2} (\vA_{\nu_{1}})_{\nu_2} 2\sinh(\nu_2-\nu_1),\label{eq:Osinh}\\
         [\vA,\vH_2] - [\vA,\vH_1] &= -\sum_{\nu_1\in B_1,\nu_2\in B_2} (\vA_{\nu_1})_{\nu_2} (\nu_2-\nu_1)\label{eq:OHH},
    \end{align}
    where $B_1,B_2$ are respectively the set of Bohr frequencies of $\vH_1,\vH_2$.
\end{lem}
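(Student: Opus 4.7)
The plan is to reduce both identities to two eigen-relations that follow immediately from \autoref{lem:Bohr_decomp}: the commutator form $[\vH,\vA_\nu]=\nu\vA_\nu$ and its exponentiated cousin $\e^{\vH}\vA_\nu\e^{-\vH}=\e^{\nu}\vA_\nu$ (obtained by analytically continuing the Heisenberg dynamics to $t=-\ri$). The strategy is to first decompose $\vA=\sum_{\nu_1\in B_1}\vA_{\nu_1}$ into Bohr frequencies of $\vH_1$, and then decompose each piece $\vA_{\nu_1}=\sum_{\nu_2\in B_2}(\vA_{\nu_1})_{\nu_2}$ into Bohr frequencies of $\vH_2$. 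The order of decomposition is forced by the arrangement of exponentials around $\vA$: in~\eqref{eq:Osinh} the factors $\e^{\pm \vH_1}$ sit closer to $\vA$ than $\e^{\mp \vH_2}$, so one must apply the $\vH_1$-eigen-relation first and then the $\vH_2$-eigen-relation.

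For~\eqref{eq:Osinh}, successively applying $\e^{-\vH_1}\vA_{\nu_1}\e^{\vH_1}=\e^{-\nu_1}\vA_{\nu_1}$ and $\e^{\vH_2}(\vA_{\nu_1})_{\nu_2}\e^{-\vH_2}=\e^{\nu_2}(\vA_{\nu_1})_{\nu_2}$ gives
\begin{align}
\e^{\vH_2}\e^{-\vH_1}\vA\,\e^{\vH_1}\e^{-\vH_2}=\sum_{\nu_1,\nu_2}\e^{\nu_2-\nu_1}(\vA_{\nu_1})_{\nu_2},
\end{align}
and the same computation with all signs flipped yields $\sum_{\nu_1,\nu_2}\e^{-(\nu_2-\nu_1)}(\vA_{\nu_1})_{\nu_2}$. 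Subtracting produces the advertised $2\sinh(\nu_2-\nu_1)$ coefficient. For~\eqref{eq:OHH}, I use the commutator form instead: $[\vA,\vH_1]=\sum_{\nu_1}[\vA_{\nu_1},\vH_1]=-\sum_{\nu_1,\nu_2}\nu_1(\vA_{\nu_1})_{\nu_2}$ (after refining $\vA_{\nu_1}=\sum_{\nu_2}(\vA_{\nu_1})_{\nu_2}$), and analogously $[\vA,\vH_2]=-\sum_{\nu_1,\nu_2}\nu_2(\vA_{\nu_1})_{\nu_2}$, whose difference collapses to $-\sum_{\nu_1,\nu_2}(\nu_2-\nu_1)(\vA_{\nu_1})_{\nu_2}$. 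As a sanity check, \eqref{eq:OHH} is exactly the order-$s$ Taylor coefficient of \eqref{eq:Osinh} under the rescaling $\vH_j\mapsto s\vH_j$, which matches because $2\sinh(s(\nu_2-\nu_1))=2s(\nu_2-\nu_1)+\CO(s^3)$ while the left-hand side of \eqref{eq:Osinh} also vanishes to first order in a way that isolates $-2([\vA,\vH_2]-[\vA,\vH_1])$.

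There is no real obstacle here; the lemma is essentially a bookkeeping identity. The only nuance worth flagging is that $(\vA_{\nu_1})_{\nu_2}\neq (\vA_{\nu_2})_{\nu_1}$ in general when $[\vH_1,\vH_2]\neq 0$, so the nested decomposition must be taken in the specific order dictated by which Hamiltonian's exponential sits innermost in the expression being analyzed. Once this order is respected, every remaining step is a one-line eigenvalue substitution, and the two identities drop out in parallel.
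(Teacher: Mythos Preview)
Your proof is correct and follows essentially the same route as the paper's: iteratively decompose $\vA$ into $\vH_1$-Bohr components and then each $\vA_{\nu_1}$ into $\vH_2$-Bohr components, then use the eigen-relations $\e^{\pm\vH_j}(\vA_{\nu_1})_{\nu_2}\e^{\mp\vH_j}=\e^{\pm\nu_j}(\vA_{\nu_1})_{\nu_2}$ and $[\vH_j,(\vA_{\nu_1})_{\nu_2}]=\nu_j(\vA_{\nu_1})_{\nu_2}$ to read off the coefficients. The Taylor-expansion sanity check you add is a nice consistency observation not present in the paper, but the core argument is identical.
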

\begin{proof}
Rewrite in the Bohr frequency basis
\begin{align}
    \e^{\vH_2}\e^{-\vH_1}\vA\e^{\vH_1}\e^{-\vH_2} =      \e^{\vH_2}(\sum_{\nu_1\in B_1}\vA_{\nu_{1}}\e^{-\nu_1})\e^{-\vH_2} =  \sum_{\nu_1\in B_1,\nu_2\in B_2} (\vA_{\nu_{1}}\e^{-\nu_1})_{\nu_2} \e^{\nu_2}= 
    \sum_{\nu_1\in B_1,\nu_2\in B_2} (\vA_{\nu_{1}})_{\nu_2} \e^{\nu_2-\nu_1}.
\end{align}
Similarly,
\begin{align}
    \e^{-\vH_2}\e^{\vH_1}\vA\e^{-\vH_1}\e^{\vH_2} = \sum_{\nu_1\in B_1,\nu_2\in B_2} (\vA_{\nu_{1}})_{\nu_2} \e^{\nu_1-\nu_2},
\end{align}
and take the difference to obtain the first line. Next, 
\begin{align}
    [\vA,\vH_1] &= -\sum_{\nu_1\in B_1} \vA_{\nu_1} \nu_1 \\
    &= -\sum_{\nu_2\in B_2}\sum_{\nu_1\in B_1} (\vA_{\nu_1})_{\nu_2} \nu_1. \tag*{(\autoref{lem:Bohr_decomp})}
\end{align}
And,
\begin{align}
    [\vA,\vH_2] &= -\sum_{\nu_2\in B_2} \vA_{\nu_2} \nu_2 \\
    &=-\sum_{\nu_2\in B_2} (\sum_{\nu_1 \in B_1}\vA_{\nu_1})_{\nu_2} \nu_2 \tag*{(\autoref{lem:Bohr_decomp})}.
\end{align}
In both cases, we decompose by $\nu_1$ in the inner layer and $\nu_2$ outside so that both expressions are linear combinations of $(\vA_{\nu_1})_{\nu_2}$. Take the difference to obtain the second line.
\end{proof}
Remarkably, the coefficients of both expressions in~\autoref{lem:double_Bohr} depends only on the difference $\nu_1,\nu_2.$

\begin{lem}[Rewriting commutator difference in the time domain]
\label{lem:comm_diff}
For any operator $\vA,$ and Hermitian operators $\vH_1,\vH_2$,
    \begin{align}
        [\vA,\vH_2] - [\vA,\vH_1] = \frac{1}{\sqrt{2\pi}}\int_{-\infty}^{\infty} \L[\e^{\vH_2}\e^{-\vH_1}\vA_{\vH_1}(t)\e^{\vH_1}\e^{-\vH_2}-\e^{-\vH_2}\e^{\vH_1}\vA_{\vH_1}(t)\e^{-\vH_1}\e^{\vH_2}\R]_{\vH_2}(-t) g(t) \rd t
    \end{align}
    where 
    \begin{align}
g(t) = -\frac{\pi^{3/2}}{2\sqrt{2}(1+\cosh(\pi t))}\quad \text{and}\quad        \hat{g}(\nu) := \frac{-\nu}{2\sinh(\nu)}.
    \end{align}
\end{lem}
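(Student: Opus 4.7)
The plan is to expand both sides into the double Bohr frequency basis $(\vA_{\nu_1})_{\nu_2}$ using the two identities of \autoref{lem:double_Bohr}, and then match them by recognizing the stated $g(t)$ as the Fourier partner of $-\nu/(2\sinh \nu)$.

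First, I would insert the Bohr frequency expansion $\vA_{\vH_1}(t) = \sum_{\nu_1 \in B_1} \vA_{\nu_1}\, \e^{\ri \nu_1 t}$ from \autoref{lem:Bohr_decomp} inside the integrand. Since the factor $\e^{\ri \nu_1 t}$ is a scalar, it commutes out of the imaginary-time conjugations $\e^{\pm \vH_2}\e^{\mp \vH_1}(\cdot)\e^{\pm \vH_1}\e^{\mp \vH_2}$. Then applying the identity \eqref{eq:Osinh} from \autoref{lem:double_Bohr} to each $\vA_{\nu_1}$ decomposes the bracket as
\begin{align}
\e^{\vH_2}\e^{-\vH_1}\vA_{\vH_1}(t)\e^{\vH_1}\e^{-\vH_2}-\e^{-\vH_2}\e^{\vH_1}\vA_{\vH_1}(t)\e^{-\vH_1}\e^{\vH_2} = \sum_{\nu_1 \in B_1,\, \nu_2 \in B_2} (\vA_{\nu_1})_{\nu_2}\, 2\sinh(\nu_2 - \nu_1)\, \e^{\ri \nu_1 t}.
\end{align}

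Next, I would apply the outer real-time conjugation $[\cdot]_{\vH_2}(-t) = \e^{-\ri \vH_2 t}(\cdot)\e^{\ri \vH_2 t}$, which acts on $(\vA_{\nu_1})_{\nu_2}$ — an operator at $\vH_2$-Bohr-frequency $\nu_2$ — by multiplication by $\e^{-\ri \nu_2 t}$. The integrand therefore becomes
\begin{align}
\sum_{\nu_1,\nu_2} (\vA_{\nu_1})_{\nu_2}\, 2\sinh(\nu_2 - \nu_1)\, \e^{-\ri(\nu_2 - \nu_1) t}.
\end{align}

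Now I would interchange sum and integral and use the Fourier identity $\tfrac{1}{\sqrt{2\pi}} \int_{-\infty}^{\infty} g(t)\, \e^{-\ri \nu t}\, \rd t = \hat{g}(\nu) = -\nu/(2\sinh\nu)$. Multiplying by the prefactor $2\sinh(\nu_2-\nu_1)$ cleanly cancels the sinh, leaving
\begin{align}
\frac{1}{\sqrt{2\pi}} \int_{-\infty}^{\infty} [\cdots]_{\vH_2}(-t)\, g(t)\, \rd t = -\sum_{\nu_1,\nu_2} (\vA_{\nu_1})_{\nu_2}\, (\nu_2 - \nu_1),
\end{align}
which is exactly $[\vA,\vH_2] - [\vA,\vH_1]$ by identity \eqref{eq:OHH}.

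The only nontrivial computation is verifying the Fourier pair $g(t) = -\pi^{3/2}/(2\sqrt{2}(1+\cosh(\pi t)))\, \longleftrightarrow\, \hat{g}(\nu) = -\nu/(2\sinh\nu)$. Using $1 + \cosh(\pi t) = 2\cosh^2(\pi t/2)$ rewrites $g$ up to constants as $\mathrm{sech}^2(\pi t/2)$, and the standard residue calculation (or table entry) $\int_{-\infty}^\infty \mathrm{sech}^2(at)\, \e^{-\ri \omega t}\, \rd t = \pi \omega/(a^2 \sinh(\pi\omega/(2a)))$ with $a = \pi/2$ yields the claimed $\hat{g}$. This Fourier evaluation is the only real calculation; the rest is bookkeeping in the Bohr-frequency basis. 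I would present it as a short lemma (or inline remark) preceding the main derivation so as not to interrupt the algebraic flow.
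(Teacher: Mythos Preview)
Your proof is correct and follows essentially the same approach as the paper's: both decompose into the double Bohr frequency basis via \autoref{lem:double_Bohr} and use the Fourier pair $g\leftrightarrow\hat g$ to match the coefficients $-(\nu_2-\nu_1)$ with $2\sinh(\nu_2-\nu_1)$. The only cosmetic difference is direction---the paper starts from the commutator side and inserts $\hat g(\nu_2-\nu_1)$ before passing to the time domain, whereas you start from the integral side and evaluate it---and you additionally verify the $\mathrm{sech}^2$ Fourier transform explicitly, which the paper simply asserts.
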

\begin{proof}
We begin with decomposing into the Bohr frequencies by~\autoref{lem:double_Bohr} 
\begin{align}
    [\vA,\vH_2]-[\vA,\vH_1] &=  -\sum_{\nu_1\in B_1,\nu_2\in B_2} (\vA_{\nu_1})_{\nu_2} (\nu_2-\nu_1) \\
& = \sum_{\nu_1\in B_1,\nu_2\in B_2} (\vA_{\nu_{1}})_{\nu_2} 2\sinh(\nu_2-\nu_1) \hat{g}(\nu_2-\nu_1)\\
& = \sum_{\nu_1\in B_1,\nu_2\in B_2} (\vA_{\nu_{1}})_{\nu_2} 2\sinh(\nu_2-\nu_1) \frac{1}{\sqrt{2\pi}}\int_{-\infty}^{\infty} g(t) \e^{- \ri (\nu_2-\nu_1)t} \rd t\\
& =  \frac{1}{\sqrt{2\pi}}\int_{-\infty}^{\infty} \sum_{\nu_1\in B_1,\nu_2\in B_2} (\vA_{\nu_{1}}\e^{\ri\nu_1t} )_{\nu_2} 2\sinh(\nu_2-\nu_1) \e^{- \ri \nu_2t} g(t) \rd t.
\end{align}
Express the operator Fourier transforms in the time domain by~\autoref{lem:Bohr_decomp} to conclude the proof.
\end{proof}
\begin{rmk}
    When the Hamiltonians $\vH_1,\vH_2$ are very different, we expect the Gibbs expression (e.g., $\e^{\vH_2}\e^{-\vH_1}\vA_{\vH_1}(t)\e^{\vH_1}\e^{-\vH_2}$) to be very large in operator norm. However, after careful filtering, only the LHS remains. Indeed, the filter in the frequency domain $\hat{g}(\nu)$ decays exponentially with $\labs{\nu}$.
\end{rmk}

\begin{rmk}
There is information in the~\eqref{eq:Osinh} not present in~\eqref{eq:OHH}. For example, when the two global Hamiltonians $\vH_1,\vH_2$ are the same near a local operator $\vA$ but different elsewhere, the local commutator vanishes $[\vA,\vH_2] - [\vA,\vH_1] =0$, while the global $\e^{\vH_2}\e^{-\vH_1}\vA\e^{\vH_1}\e^{-\vH_2}-\e^{-\vH_2}\e^{\vH_1}\vA\e^{-\vH_1}\e^{\vH_2}$ may not. On the other hand, the coefficients of $(\vA_{\nu_1})_{\nu_2}$ are related in a bijection $x \rightarrow -2\sinh(x)$ for $x = \nu_2-\nu_1$. There is no contradiction: to access $(\vA_{\nu_1})_{\nu_2}$, we need to apply Hamiltonian dynamics from both inside and outside $([\vA(t_1)_{\vH_1},\vH_2] - [\vA_{\vH_1}(t_1),\vH_1])_{\vH_2}(t_2),$ which contains more information than in the commutator~\eqref{eq:OHH}.
\end{rmk}

\subsection{Relaxing a local commutator}
Now, invoke~\autoref{lem:comm_diff} to rewrite the commutator by imaginary-time evolutions to prove the key identifiability equation.

\begin{proof}[Proof of~\autoref{lem:identify}]
     Apply~\autoref{lem:comm_diff} with rescaled Hamiltonians $\vH_2\leftarrow \beta\vH$, $\vH_1\leftarrow \beta\vH'$ to obtain
\begin{align}
    &\frac{\beta}{2} [\vA,\vH-\vH'] \\
    &= \frac{1}{\sqrt{2\pi}}\int_{-\infty}^{\infty} \L(\e^{\beta\vH/2}\e^{-\beta\vH'/2}\vA_{\beta\vH'/2}(t)\e^{\beta\vH'/2}\e^{-\beta\vH/2}-\e^{-\beta\vH/2}\e^{\beta\vH'/2}\vA_{\beta\vH'/2}(t)\e^{-\beta\vH'/2}\e^{\beta\vH/2}\R)_{\beta\vH/2}(-t) g(t) \rd t \\
&= \frac{1}{\sqrt{2\pi}}\int_{-\infty}^{\infty} \L(\sqrt{\vrho^{-1}}\sqrt{\vrho^{'}}\vA_{\beta\vH'/2}(t)\sqrt{\vrho^{'-1}}\sqrt{\vrho}-\sqrt{\vrho}\sqrt{\vrho^{'-1}}\vA_{\beta\vH'/2}(t)\sqrt{\vrho^{'}}\sqrt{\vrho^{-1}}\R)_{\beta\vH/2}(-t) g(t) \rd t.\label{eq:AHH'}
\end{align}
At each time $t$, the $\vrho$-weighted expectation $ \braket{\vO,\cdot}_{\vrho} =\tr[\vO^{\dagger}\sqrt{\vrho}\cdot\sqrt{\vrho}]$ of the integrand yields
\begin{align}
&\tr\L[ \sqrt{\vrho}\vO^{\dagger}\sqrt{\vrho}\L(\sqrt{\vrho^{-1}}\sqrt{\vrho^{'}}\vA_{\beta\vH'/2}(t)\sqrt{\vrho^{'-1}}\sqrt{\vrho}-\sqrt{\vrho}\sqrt{\vrho^{'-1}}\vA_{\beta\vH'/2}(t)\sqrt{\vrho^{'}}\sqrt{\vrho^{-1}}\R)_{\beta\vH/2}(-t)\R]\\
&=\tr\L[\vO^{\dagger}\L(  \sqrt{\vrho'}\vA_{\beta \vH'/2}(t)\sqrt{\vrho^{'-1}}\vrho - \vrho\sqrt{\vrho^{'-1}}\vA_{\beta \vH'/2}(t)\sqrt{\vrho^{'}} \R)_{\beta\vH/2}(-t)\R]\\
&=\tr\L[\vO^{\dagger}_{\beta\vH/2}(t)\L(  \sqrt{\vrho'}\vA_{\beta \vH'/2}(t)\sqrt{\vrho^{'-1}}\vrho - \vrho\sqrt{\vrho^{'-1}}\vA_{\beta \vH'/2}(t)\sqrt{\vrho^{'}} \R)\R],
\end{align}
using that the outer time-dynamics $(\cdot)_{\beta \vH/2}$ commutes with the Gibbs state $\vrho\propto e^{-\beta \vH}$ and that $\tr[\vA(-t) \vB ]= \tr[\vA \vB(t)]$. Restore the integral to conclude the proof. 
\end{proof}
One may wonder why the RHS is any better than the LHS, as both depend on $\vH$ and $\vH'$. However, observe that
    \begin{align}
        \vH'=\vH,\quad \text{implies that}\quad \L(  \sqrt{\vrho'}\vA_{\beta \vH'/2}(t)\sqrt{\vrho^{'-1}}\vrho - \vrho\sqrt{\vrho^{'-1}}\vA_{\beta \vH'/2}(t)\sqrt{\vrho^{'}} \R) =0\quad \text{for each}\quad t.
    \end{align}
    Therefore, there must exist a guess $\vH'$ such that the RHS~\eqref{eq:relaxing_square} vanishes for all $\vO$. Moreover, we can \textit{verify} that the RHS is zero by enumerating all possible $\vO_{\beta \tilde{\vH}}(t),$ without knowing $\vH$ apriori. To make this observation quantitative, we will need to further massage the RHS, by suitably regularizing the Gibbs conjugation as the following section, and understanding the effect of errors when the guess $\vH'$ is not exactly $\vH$ (\autoref{lem:unique}). 
    \begin{rmk}
        \autoref{lem:identify} seems implicitly tied with KMS inner product; we do not know how to replicate the argument for GNS inner product. Indeed, in~\eqref{eq:AHH'}, changing the ordering or exponent of $\vrho,\vrho'$, might either lose locality or fail to recover non-fractional powers of $\vrho$ on the RHS.
    \end{rmk}
\subsection{Regularizing high-frequency parts}
However, in the identifiability equation (\autoref{lem:identify}), the conjugation of Gibbs state on the RHS, while the LHS is always bounded. To regularize this divergence, we need to introduce a truncation with controllable error. 
\begin{lem}[Truncating Bohr frequencies]\label{lem:truncating_bohr}
Consider the setting of~\autoref{lem:identify}. For any $\Omega'>0$, we have that
    \begin{align}
    \frac{\beta\sqrt{2\sigma\sqrt{2\pi}}}{2}\braket{ \vO,[\vA,\vH-\vH']}_{\vrho}
    &= \frac{1}{\sqrt{2\pi}}\int_{-\infty}^{\infty}\int_{-\infty}^{\infty}\tr\bigg[\vO^{\dagger}_{\vH}(t) \big( h_+(t')\vA_{\vH'}(t'+t)\vrho  - h_-(t')\vrho\vA_{\vH'}(t'+t)\big) \bigg] g_{\beta}(t)\rd t'\rd t\\
    &\quad+ \frac{\beta}{2}\int_{\labs{\omega'}\ge \Omega'} \braket{\vO,[\hat{\vA}_{\vH'}(\omega'),\vH-\vH']}_{\vrho} \rd \omega', 
\end{align}

where 
\begin{align}
    g_{\beta}(t) := \frac{2}{\beta}g(2t/\beta)&\lesssim \frac{1}{\beta}e^{-2\pi \labs{t}/\beta}\quad\text{and}\quad 
    \labs{h_+(t)},\labs{h_-(t)} \lesssim e^{-\sigma^2t^2} \frac{\sqrt{\sigma}}{\beta}e^{\beta \Omega'/2+\sigma^2\beta^2/4}\quad \text{for each}\quad t\in \BR.\label{eq:h_bounds}
\end{align}
\end{lem}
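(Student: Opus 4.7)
My plan is to split $\vA$ into low- and high-frequency components with respect to the guess Hamiltonian $\vH'$, keep the high-frequency piece as the advertised commutator error term, and convert the low-frequency piece into the stated double time integral by combining \autoref{lem:identify} with the Gaussian imaginary-time regularization identity of \autoref{lem:bounds_imaginary_conjugation}.

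Concretely, by \autoref{lem:sumoverenergies} applied to $\vH'$, I write $\vA = \frac{1}{\sqrt{2\sigma\sqrt{2\pi}}}\int \hat{\vA}_{\vH'}(\omega')\,d\omega'$ and split at $\labs{\omega'}=\Omega'$. Multiplying $\frac{\beta}{2}\braket{\vO,[\vA,\vH-\vH']}_{\vrho}$ by $\sqrt{2\sigma\sqrt{2\pi}}$ cancels the OFT normalization, so the high-frequency portion is directly the second (error) term of the claimed identity. For the low-frequency portion $\frac{\beta}{2}\int_{\labs{\omega'}\le\Omega'}\braket{\vO,[\hat{\vA}_{\vH'}(\omega'),\vH-\vH']}_{\vrho}\,d\omega'$, I invoke \autoref{lem:identify} termwise with $\vA\leftarrow\hat{\vA}_{\vH'}(\omega')$, then rescale the outer integration variable $t\mapsto 2t/\beta$ so that $\vO^{\dagger}_{\beta\vH/2}(t)$ becomes $\vO^{\dagger}_{\vH}(t)$ and likewise for $\vH'$; the Jacobian turns $g$ into $g_{\beta}(t)=\frac{2}{\beta}g(2t/\beta)$, whose stated $e^{-2\pi\labs{t}/\beta}$ tail follows from the explicit sech-type formula in \autoref{lem:comm_diff}.

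Next I use \autoref{lem:bounds_imaginary_conjugation} (with $\beta\mapsto\mp\beta/2$) to pass $\sqrt{\vrho'}^{\pm1}$ through $\hat{\vA}_{\vH'}(\omega')$, which produces the frequency-shifted identities
\begin{align}
    \sqrt{\vrho'}\hat{\vA}_{\vH'}(\omega')\sqrt{\vrho'^{-1}} &= e^{-\beta\omega'/2+\sigma^2\beta^2/4}\hat{\vA}_{\vH'}(\omega'-\sigma^2\beta),\\
    \sqrt{\vrho'^{-1}}\hat{\vA}_{\vH'}(\omega')\sqrt{\vrho'} &= e^{+\beta\omega'/2+\sigma^2\beta^2/4}\hat{\vA}_{\vH'}(\omega'+\sigma^2\beta);
\end{align}
these identities extend to the time-evolved operators because $\sqrt{\vrho'}^{\pm1}$ commutes with $e^{\pm i\vH' t}$. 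Unfolding the shifted OFTs back into the time domain introduces an inner variable $t'$ with Gaussian envelope $f(t')$ multiplying $\vA_{\vH'}(t'+t)$. Swapping the order of $\omega'$ and $t'$ integrations (justified by absolute integrability, since $\labs{\omega'}\le\Omega'$ is bounded and $f$ decays Gaussian-fast) collects all remaining scalars into
\begin{align}
    h_{\pm}(t') = \frac{1}{\sqrt{2\pi}}f(t')\,e^{\sigma^2\beta^2/4\pm i\sigma^2\beta t'}\int_{\labs{\omega'}\le\Omega'}e^{\mp\beta\omega'/2-i\omega' t'}\,d\omega',
\end{align}
reproducing the first term of the stated RHS.

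The bound on $h_{\pm}$ then reduces to two elementary estimates: $\labs{f(t')}=\sqrt{\sigma\sqrt{2/\pi}}\,e^{-\sigma^2 t'^2}$ directly gives the Gaussian factor and the leading $\sqrt{\sigma}$, while the residual $\omega'$-integral is a shifted geometric evaluation whose modulus is at most $\tfrac{2e^{\beta\Omega'/2}}{\labs{\beta/2\pm it'}}\lesssim \tfrac{e^{\beta\Omega'/2}}{\beta}$. The main obstacle is purely bookkeeping: tracking the OFT normalizations from \eqref{eq:fwft}, keeping track of which Hamiltonian ($\vH$ vs.\ $\vH'$) each Fourier/Heisenberg transform is taken with respect to, and verifying that the two imaginary-time shifts carry the correct signs so that $h_+$ (attached to the right-multiplication by $\vrho$) and $h_-$ (attached to the left-multiplication by $\vrho$) come out with the claimed sign structure. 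Conceptually, the content is already captured by the cited lemmas; the proof is essentially a direct rearrangement.
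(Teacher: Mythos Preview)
Your proposal is correct and follows essentially the same route as the paper's own proof: split $\vA$ via \autoref{lem:sumoverenergies}, keep the high-frequency piece as the commutator error, apply \autoref{lem:identify} to the low-frequency piece, push $\sqrt{\vrho'}^{\pm 1}$ through via the Gaussian shift identity of \autoref{lem:bounds_imaginary_conjugation}, unfold the resulting OFT in the inner time variable $t'$, and rescale the outer variable to produce $g_{\beta}$. The only cosmetic difference is ordering (the paper rescales $t$ at the very end rather than immediately after invoking \autoref{lem:identify}), and your explicit formula for $h_{\pm}$ matches the paper's up to how the stray $1/\sqrt{2\pi}$ is distributed between the prefactor and $h_{\pm}$, which is immaterial for the $\lesssim$ bound.
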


\begin{proof}    
To ease the notation, we distinguish two Hamiltonians $\vH,\vH'$ by the scalar variables $\omega$ and drop the subscripts $\vH,\vH'$  
\begin{align}
    \hat{\vA}_{\vH'}(\omega') &\equiv \hat{\vA}(\omega'), \quad \hat{\vA}_{\vH}(\omega) \equiv \hat{\vA}(\omega).
\end{align}
That is, whenever we use $\omega',$ we meant for operator Fourier transform with respect to the Hamiltonian $\vH'.$ We introduce a truncation frequency $\Omega'>0$ by~\autoref{lem:sumoverenergies}
\begin{align}
    c\vA = \int_{\labs{\omega'}\le \Omega'}\hat{\vA}_{\vH'}(\omega')\rd \omega' + \int_{\labs{\omega'}\ge \Omega'}\hat{\vA}_{\vH'}(\omega')\rd \omega' 
\end{align}
where $c = \sqrt{2\sigma\sqrt{2\pi}}.$ Then, we may rewrite
\begin{align}
    &\frac{\beta c}{2}\braket{ \vO,[\vA,\vH-\vH']}_{\vrho} \\
    &= \frac{1}{\sqrt{2\pi}} \int_{-\infty}^{\infty}\tr\L[\vO^{\dagger}_{\beta\vH/2}(t) \int_{\labs{\omega'}\le \Omega'} \L(  \sqrt{\vrho'}\hat{\vA}(\omega')_{\beta \vH'/2}(t)\sqrt{\vrho^{'-1}} \vrho - \vrho\sqrt{\vrho^{'-1}}\hat{\vA}(\omega')_{\beta \vH'/2}(t)\sqrt{\vrho^{'}} \R)\R] \rd \omega' g(t) \rd t\\
    &\quad+ \frac{\beta}{2}\int_{\labs{\omega'}\ge \Omega'} \braket{\vO,[\hat{\vA}_{\vH'}(\omega'),\vH-\vH']}_{\vrho} \rd \omega'. 
\end{align}

Now, we rewrite the first term in terms of the time average of Heisenberg dynamics, which will be manifestly efficient to implement. Since the operator Fourier transform and Heisenberg dynamics commute, we have that 
\begin{align}
\sqrt{\vrho'}(\hat{\vA}_{\vH'}(\omega'))_{\beta \vH'/2}(t)\sqrt{\vrho^{'-1}}  = \L(\sqrt{\vrho'}\hat{\vA}_{\vH'}(\omega')\sqrt{\vrho^{'-1}} \R)_{\beta \vH'/2}(t) \quad \text{for each}\quad \omega', t \in \BR.
\end{align}
Now, 
\begin{align}
\label{eq:Gibbsfrequencyintime}
    \int_{\labs{\omega'}\le \Omega'} \sqrt{\vrho'}\hat{\vA}_{\vH'}(\omega')\sqrt{\vrho^{'-1}}\rd \omega' &= \int_{\labs{\omega'}\le \Omega'} \vA_{\vH'}(\omega'-\sigma^2\beta) e^{-\beta \omega/2+\sigma^2\beta^2/4} \rd \omega' \\
&= \frac{1}{\sqrt{2\pi}} \int_{-\infty}^{\infty} \vA_{\vH'}(t')\int_{\labs{\omega'}\le \Omega'} \e^{-\ri (\omega'-\sigma^2\beta) t'} e^{-\beta \omega'/2+\sigma^2\beta^2/4}\rd \omega' f(t') \rd t' \\
& =: \frac{1}{\sqrt{2\pi}} \int_{-\infty}^{\infty}\vA_{\vH'}(t') h_+(t')\rd t'.
\end{align}
Similarly,
\begin{align}
\label{eq:Gibbsfrequencyintime1}
    \int_{\labs{\omega'}\le \Omega'} \sqrt{\vrho^{'-1}}\hat{\vA}_{\vH'}(\omega')\sqrt{\vrho'}\rd \omega' = \frac{1}{\sqrt{2\pi}} \int_{-\infty}^{\infty}\vA_{\vH'}(t') h_-(t')\rd t'.
\end{align}
In both cases above,
\begin{align}
    \labs{h_+(t)},\labs{h_-(t)} \lesssim \labs{f(t)} \frac{1}{\beta}e^{\beta \Omega'/2+\sigma^2\beta^2/4}\lesssim e^{-\sigma^2t^2} \frac{\sqrt{\sigma}}{\beta}e^{\beta \Omega'/2+\sigma^2\beta^2/4}.
\end{align}

Absorb the factors of $\beta/2$ in $\vO_{\beta \vH/2}(t)$ and $\vA_{\beta \vH'/2}(t')$ by rescaling $g_{\beta}(t) = \frac{2}{\beta}g(2t/\beta)$ and merge the two Heisenberg dynamics $t,t'$ to conclude the proof.
\end{proof}

The truncation error can be bounded as follows.
  
\begin{lem}[Decay of high-frequency part]
\label{lem:high_freq}
Consider a Hamiltonian $\vH'= \sum_{\gamma}\vh_{\gamma}$ with interaction degree $d$ (as in~\autoref{sec:Ham}) and $\vG = \sum_{\gamma} \vg_{\gamma}$ with the same interaction graph as $\vH'$ and $\|\vh_{\gamma}\|,\|\vg_{\gamma}\|\leq 1$. In the setting of~\autoref{lem:identify} and for single site $\vA,$
\begin{align}
\labs{\int_{\labs{\omega'}\ge \Omega'} \braket{\vO,[\hat{\vA}_{\vH'}(\omega'),\vG]}_{\vrho} \rd \omega'} \lesssim \frac{d^{4+16e^2d^4/\sigma^2}}{\sqrt{\sigma}} e^{-\Omega'/4d + \sigma^2/16d^2} \norm{\vO}\norm{\vA}.
\end{align}    
\end{lem}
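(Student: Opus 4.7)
The plan is to combine two complementary decay estimates: an $\omega'$-decay of $\|\hat{\vA}_{\vH'}(\omega')\|$ coming from \autoref{cor:norm_decay} (via imaginary-time conjugation of a single-site $\vA$), together with a Lieb--Robinson-style locality bound that tames the extensivity of $\vG$.

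First, I would bound the KMS inner product by operator norms via \autoref{lem:operatornorm}, reducing the task to controlling $\int_{\labs{\omega'}\ge\Omega'}\|[\hat{\vA}_{\vH'}(\omega'),\vG]\|\,\rd\omega'$ up to a factor $\|\vO\|$. Applying \autoref{cor:norm_decay} with the sign-dependent choice $\beta=\sgn(\omega')/(4d)$ (so that $|\beta|<1/(2d)$ and \autoref{lem:convergence_imaginary} yields $\|e^{\beta\vH'}\vA e^{-\beta\vH'}\|\le 2$ for single-site $\vA$) gives
\begin{align}
    \|\hat{\vA}_{\vH'}(\omega')\|\lesssim \frac{1}{\sqrt{\sigma}}\, e^{-\labs{\omega'}/(4d)+\sigma^2/(16d^2)}.
\end{align}
Integrating this bound over $\labs{\omega'}\ge\Omega'$ already supplies the advertised tail $e^{-\Omega'/(4d)+\sigma^2/(16d^2)}$ modulo polynomial factors.

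Second, because $\|\vG\|$ is extensive, the naive $\|[\hat{\vA}_{\vH'}(\omega'),\vG]\|\le 2\|\hat{\vA}_{\vH'}(\omega')\|\|\vG\|$ is ruinous. I would instead decompose $[\hat{\vA}_{\vH'}(\omega'),\vG]=\sum_\gamma [\hat{\vA}_{\vH'}(\omega'),\vg_\gamma]$ and split the sum by graph distance $r=\text{dist}(\vA,\gamma)$. For near terms ($r\le R$) I use the trivial bound $\|[\hat{\vA}_{\vH'}(\omega'),\vg_\gamma]\|\le 2\|\hat{\vA}_{\vH'}(\omega')\|$, with at most $d^{r+1}$ such terms, giving a total contribution $\lesssim d^{R+1}\|\hat{\vA}_{\vH'}(\omega')\|$. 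For far terms ($r>R$) I rewrite $[\hat{\vA}_{\vH'}(\omega'),\vg_\gamma]$ as the Gaussian-weighted time integral of $[\vA_{\vH'}(t),\vg_\gamma]$ and invoke a standard Lieb--Robinson bound (Appendix~\ref{sec:LRbounds}) of the form $\|[\vA_{\vH'}(t),\vg_\gamma]\|\lesssim (2evd\labs{t})^{r}/r!$ with a Lieb--Robinson velocity $v\sim d$; executing the Gaussian time integral against the $d^{r+1}$ counting leaves a series whose terms decay super-exponentially once $R$ is sufficiently large.

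Third, I would pick $R$ to balance the two contributions. The near-term side grows like $d^{R+1}\|\hat{\vA}_{\vH'}(\omega')\|$, while the Gaussian-vs-factorial tradeoff suppresses the far-term tail provided $R\gtrsim e^2 d^4/\sigma^2$. Choosing $R=\Theta(e^2 d^4/\sigma^2)$ and then integrating the resulting bound over $\labs{\omega'}\ge\Omega'$ assembles the stated $\frac{d^{4+16 e^2 d^4/\sigma^2}}{\sqrt{\sigma}}\,e^{-\Omega'/(4d)+\sigma^2/(16d^2)}$ prefactor. The principal obstacle will be executing the Gaussian-versus-Lieb--Robinson tradeoff tightly enough to produce the stated exponent $4+16 e^2 d^4/\sigma^2$: the ball-volume factor $d^{r+1}$ and the factorial $r!$ in Lieb--Robinson nearly cancel, so the surviving Gaussian $e^{-\sigma^2 t^2}$ must dominate a term of order $e^{O(d^2/\sigma)\cdot r}$ after optimizing over $\labs{t}$, and this saddle-point determines $R$. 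A secondary subtlety is that \autoref{cor:norm_decay} captures only the operator norm of $\hat{\vA}_{\vH'}(\omega')$ and ignores its quasi-local support, which is precisely why the $d^R$ ball-volume factor surfaces in the prefactor instead of being absorbed into a sharper operator-Fourier-transform estimate.
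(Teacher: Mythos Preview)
Your overall plan is close to the paper's, but there is a genuine gap in how you handle the far terms. As you describe it, for $r>R$ you bound
\begin{align}
\lnorm{[\hat{\vA}_{\vH'}(\omega'),\vg_\gamma]}\le \frac{1}{\sqrt{2\pi}}\int_{-\infty}^{\infty}\labs{f(t)}\,\lnorm{[\vA_{\vH'}(t),\vg_\gamma]}\,\rd t
\end{align}
via Lieb--Robinson and the Gaussian weight. The phase $e^{-\ri\omega't}$ has been discarded by the triangle inequality, so this bound is \emph{independent of $\omega'$}. Consequently, when you then integrate over $\labs{\omega'}\ge\Omega'$, the far-term contribution is $\int_{\labs{\omega'}\ge\Omega'}(\text{constant})\,\rd\omega'=\infty$. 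In other words, your decomposition decouples the $\omega'$-decay (which lives only in the near-term bound via \autoref{cor:norm_decay}) from the $r$-decay (which lives only in the far-term bound via Lieb--Robinson), and you need both simultaneously on every term to survive the $\omega'$-integral.

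The paper fixes this by coupling the two decays \emph{before} splitting by distance. Instead of invoking \autoref{cor:norm_decay} as a black-box norm bound, it uses the exact shifting identity of \autoref{lem:bounds_imaginary_conjugation} to write
\begin{align}
\hat{\vA}_{\vH'}(\omega')=e^{-\beta_0\omega'+\beta_0^2\sigma^2}\cdot\widehat{\bigl(e^{\beta_0\vH'}\vA e^{-\beta_0\vH'}\bigr)}_{\vH'}(\omega'-2\sigma^2\beta_0),\qquad \beta_0=\tfrac{1}{4d},
\end{align}
so the factor $e^{-\beta_0\omega'}$ is pulled out in front of \emph{everything}. The remaining operator $e^{\beta_0\vH'}\vA e^{-\beta_0\vH'}$ is not local, but for $\labs{\beta_0}<1/(2d)$ it has a convergent nested-commutator expansion $\sum_k\frac{\beta_0^k}{k!}\CC_{\vH'}^k[\vA]$ into genuinely local pieces $\vT_S$. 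Only then does the paper apply your near/far Lieb--Robinson annulus argument to each $\vT_S$, obtaining a bound on $\lnorm{[\hat{\vT}_S(\omega'),\vG]}$ that is uniform in $\omega'$ (with the $\labs{S}$-dependent prefactor controlled by the commutator expansion). Since every such term already carries $e^{-\beta_0\omega'}$, the $\omega'$-integral converges and yields the advertised $e^{-\Omega'/(4d)}$ tail with the correct $d^{4+16e^2d^4/\sigma^2}$ prefactor. Your choice $R=\Theta(e^2d^4/\sigma^2)$ is the right order and matches the paper's $\ell_0=\lceil 16e^2d^4/\sigma^2\rceil$; the missing step is this factoring of $e^{-\beta_0\omega'}$ prior to the locality decomposition.
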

Here, we need to carefully exploit the locality of $\hat{\vA}_{\vH'}(\omega')$ and $\vG$, by expanding the imaginary and real time evolution; the quasi-locality contributes to the factor of $d^{4+16e^2d^4/\sigma^2}$. The proof is routine, see~\autoref{sec:proof_high_freq}.
\begin{rmk}
    The RHS grows as $e^{\CO(\beta^2)}$ when we set the uncertainty $\sigma = 1/\beta.$ This dependence on $\beta$ can be further improved for lattice Hamiltonians, but we do not pursue it here as we will lose factors of $e^{\beta^D}$ elsewhere which dominates the learning sample complexity.
\end{rmk}
\subsection{Local commutators are faithful}\label{sec:KMSfaithful}
Recall the inequality
\begin{align}
&\norm{[\vA,\vH-\vH']}^2_{\vrho} \le 2d \sup_{\vO = [\vA,\vP_{\gamma}]}\big\lvert \braket{ \vO,[\vA,\vH-\vH']}_{\vrho}\big\rvert
\end{align}
As long as the RHS is small, the following lemma guarantees that $[\vA,\vH-\vH']$ is also small. This is due to the following lemma. 
\begin{lem}[KMS is locally faithful]
\label{lem:KMSvariance} Assuming $\beta \geq 1/4d$.
For a (not necessarily Hermitian) operator $\vB$ and Gibbs state $\vrho = \e^{-\beta \vH}/\tr(\e^{-\beta \vH})$ of an interaction degree-$d$ Hamiltonian $\vH$, 
it holds that $$\|\vB\|_{{\vtau}} \leq \e^{ 80\beta |\mathrm{supp}(\vB)| + 16d\beta \log 2d\beta }\|
\vB\|_{\vrho},$$ where $\vtau$ is the maximally mixed state.
\end{lem}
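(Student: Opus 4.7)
The plan is to compare $\|\vB\|_\vtau$ to $\|\vB\|_\vrho$ by passing through an auxiliary Gibbs state that is maximally mixed on $\mathrm{supp}(\vB)$ and then controlling the resulting ``twist'' through a Dyson-series / chunking argument.

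\textbf{Step 1 (auxiliary Gibbs state and local trivialization).} Let $A=\mathrm{supp}(\vB)$, $k=|A|$, and split $\vH=\vH_{\mathrm{out}}+\vV$, where $\vH_{\mathrm{out}}$ collects all $\vh_\gamma$ with $\mathrm{Supp}(\vh_\gamma)\cap A=\emptyset$ and $\vV$ collects the remaining terms (those touching $A$). Each site of $A$ is touched by at most $d{+}1$ terms, so $\|\vV\|\leq(d{+}1)k$. Define $\tilde\vrho=e^{-\beta\vH_{\mathrm{out}}}/\tilde Z$; since $\vH_{\mathrm{out}}$ acts trivially on $A$, $\tilde\vrho=\vtau_A\otimes\tilde\vrho_{A^c}$. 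Because $\vB=\vB_A\otimes\vI$ commutes with $\tilde\vrho$, a one-line computation gives
\begin{align}
\|\vB\|_{\tilde\vrho}^2 = \tr[\vB^\dagger\vB\,\tilde\vrho] = 2^{-k}\tr_A[\vB_A^\dagger\vB_A] = \|\vB\|_\vtau^2.
\end{align}
Thus it is enough to prove $\|\vB\|_{\tilde\vrho}\leq e^{80\beta k+16d\beta\log 2d\beta}\,\|\vB\|_\vrho$.

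\textbf{Step 2 (reduction to a twist estimate).} Using $\|\vX\|_{\vsigma}^2=\|\vsigma^{1/4}\vX\vsigma^{1/4}\|_2^2$ for any full-rank $\vsigma$, introduce the twist $\vT:=\vrho^{-1/4}\tilde\vrho^{1/4}$ and observe the algebraic identity
\begin{align}
\tilde\vrho^{1/4}\vB\tilde\vrho^{1/4}=\vT^{\dagger}\bigl(\vrho^{1/4}\vB\vrho^{1/4}\bigr)\vT.
\end{align}
The matrix Hölder inequality $\|\vU\vM\vW\|_2\leq\|\vU\|_\infty\|\vM\|_2\|\vW\|_\infty$, combined with $\|\vT^\dagger\|_\infty=\|\vT\|_\infty$, yields $\|\vB\|_{\tilde\vrho}\leq\|\vT\|_\infty^2\,\|\vB\|_\vrho$. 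So the task reduces to showing $\|\vT\|_\infty\leq e^{40\beta k+8d\beta\log 2d\beta}$.

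\textbf{Step 3 (interaction-picture bound on $\vT$, via chunking).} Write $\vT=(Z/\tilde Z)^{1/4}\,e^{\beta\vH/4}e^{-\beta\vH_{\mathrm{out}}/4}$. By Golden--Thompson, $(Z/\tilde Z)^{1/4}\leq e^{\beta\|\vV\|/4}\leq e^{\beta(d+1)k/4}$. The remaining factor admits the interaction-picture representation
\begin{align}
e^{\beta\vH/4}e^{-\beta\vH_{\mathrm{out}}/4}=\bar{\mathcal T}\exp\!\Bigl(\int_0^{\beta/4}\vV(s)\,ds\Bigr),\qquad \vV(s):=e^{s\vH_{\mathrm{out}}}\vV e^{-s\vH_{\mathrm{out}}},
\end{align}
so $\|e^{\beta\vH/4}e^{-\beta\vH_{\mathrm{out}}/4}\|_\infty\leq\exp\!\bigl(\int_0^{\beta/4}\|\vV(s)\|_\infty ds\bigr)$. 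To control $\|\vV(s)\|_\infty$ without $n$-dependent growth we partition $[0,\beta/4]$ into $M=\lceil d\beta\rceil$ chunks of length $\beta_0<1/(4d)$, and on each chunk iterate \autoref{lem:convergence_imaginary} termwise on the at most $O(dk)$ single-site (Pauli) factors making up $\vV$. Each chunk multiplies the norm by at most $(1-2d\beta_0)^{-q_s}$ where $q_s$ is the current effective support, which itself grows only by a bounded-degree Lieb--Robinson amount per chunk. Summing the geometric factors $\sum_{m=1}^{M}q_m\log(1/(1-2d\beta_0))=O(\beta k+d\beta\log(2d\beta))$ and combining with the Golden--Thompson prefactor gives the claimed bound on $\|\vT\|_\infty$, and hence on $\|\vB\|_{\tilde\vrho}/\|\vB\|_\vrho$.

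\textbf{Main obstacle.} The twist $\vT$ is \emph{a priori} exponentially large in $n$ because $e^{\beta\vH/4}$ itself has norm $e^{\Theta(\beta n)}$; the $Z/\tilde Z$ prefactor only partially cancels this. The sole mechanism that yields an $n$-independent bound is the short-time imaginary-time convergence of \autoref{lem:convergence_imaginary}, which is valid only in a high-temperature window $|\beta|<1/(2d)$ and for operators of bounded support. The chunking converts a long imaginary-time conjugation into $O(d\beta)$ short-time ones where this lemma applies; the delicate part is tracking how the effective support of $\vV(s)$ enlarges along the chain of chunks, since an uncontrolled growth here would force the norm estimate back to $e^{O(\beta n)}$. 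The two summands $80\beta k$ and $16d\beta\log(2d\beta)$ in the exponent correspond respectively to the $\|\vV\|$ budget at each chunk and to the logarithmic overhead of paying $\log(1/(1-2d\beta_0))$ over $O(d\beta)$ chunks.
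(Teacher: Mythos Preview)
Your Steps~1 and~2 are fine, but Step~3 has a genuine gap that cannot be fixed along the lines you sketch. The quantity you need to bound is $\int_0^{\beta/4}\|\vV(s)\|_\infty\,ds$ with $\vV(s)=e^{s\vH_{\mathrm{out}}}\vV e^{-s\vH_{\mathrm{out}}}$, and this is precisely an imaginary-time conjugation of a local operator for $s$ up to a possibly large constant. As the paper itself emphasizes (\autoref{sec:regularize} and~\autoref{fig:regularize_imaginary}, citing Bouch), in lattice dimension $D\ge 2$ such norms can diverge as $e^{\Omega(n)}$ once $s$ exceeds the convergence radius $1/(2d)$. Chunking does not evade this: after one short chunk $\vV(\beta_0)$ is no longer a local operator; its Taylor expansion contributes pieces supported on balls of radius $\ell$ with weight $\sim (2d\beta_0)^\ell$, and applying \autoref{lem:convergence_imaginary} to each such piece through the next chunk costs a factor of order $(1-2d\beta_0)^{-V(\ell)}$. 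The resulting sum $\sum_\ell (2d\beta_0)^\ell (1-2d\beta_0)^{-V(\ell)}$ converges only when $V(\ell)=O(\ell)$, i.e.\ in one dimension. The ``bounded-degree Lieb--Robinson amount per chunk'' you invoke does not exist for imaginary time; there is no Lieb--Robinson bound in imaginary time, which is exactly why the paper introduces the Fourier-side regularization.

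The paper's proof takes a completely different route that sidesteps this obstruction. It never tries to bound $\|\vrho^{-1/4}\tilde\vrho^{1/4}\|_\infty$. Instead it (i) Haar-averages over unitaries $\vU,\vV$ on $\mathrm{supp}(\vB)$ to get $\|\vB\|_\vtau\le 2^{2|\mathrm{supp}(\vB)|}\|\vU^\dagger\vB\vV\|_\vrho$ for some such $\vU,\vV$, and (ii) controls $\|\vU^\dagger\vB\vV\|_\vrho$ in terms of $\|\vB\|_\vrho$ by decomposing $\vU,\vV$ via the operator Fourier transform into a low-frequency part (well behaved under $\vrho^{\pm 1/4}$ conjugation by \autoref{lem:bounds_imaginary_conjugation}, because the Gaussian weight regularizes the exponential) and a high-frequency tail (small by \autoref{cor:norm_decay} combined with \autoref{lem:convergence_imaginary} at the \emph{fixed} small exponent $\beta_0=1/(4d)$). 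The two claims are then combined in a self-bounding argument. The frequency-space splitting is the mechanism that replaces your chunking, and it is the only place in the paper where one gets system-size-independent control over an object that looks like an imaginary-time conjugation at large $\beta$.
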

The KMS norm $\norm{\vB}^2_{\vrho}$ on finite-temperature Gibbs states provides an upper bound on the variance of local operators on the maximally mixed state. Since $\vrho$ is invertible, $\norm{\vB}^2_{\vrho}=0$ already implies $\vB=0.$ The goal is to obtain quantitative bounds for local operators $\vB$ that are independent of the system size. The argument exploits the regularization trick (\autoref{sec:regularize}).

\begin{proof}[Proof of~\autoref{lem:KMSvariance}]

We will establish two claims to obtain a self-bounding argument.

\textbf{Claim 1.} We can relate $\|\vB\|_{\vtau} $ to the $\vrho$-weighted KMS norm a rotated version of $\vB$. In particular, there exist unitaries $\vU, \vV$ supported on $\text{supp}(\vB)$ such that 
\begin{align}
    \|\vB\|_{\vtau} \leq 2^{2|\text{supp}(\vB)|} \|\vU^\dagger \vB \vV\|_{\vrho}. \label{eq:rotated}
\end{align}

Denote support of $\vB$ by $B= \text{supp}(\vB)$. Consider the Haar average over the region $B$,
\begin{equation}
    \mathbb{E}_{\vU, \vV} \tr( \vB^\dagger \vU\sqrt{\vrho} \vU^\dagger \vB \vV \sqrt{\vrho} \vV^\dagger) =  \tr((\tr_B(\sqrt{\vrho}))^2) \frac
    {\|\vB\|_{\vtau}^2}{2^{|B|}},
\end{equation}
where we used that $\mathbb{E}_{\vU} \vU \sqrt{\vrho} \vU^\dagger = \tr_B(\sqrt{\vrho}) \otimes \frac{\vI_B}{2^{|B|}}$.
Next we lower bound the expression $\tr((\tr_B(\sqrt{\vrho}))^2)$. We have
\begin{align}
    \sqrt{\vrho} &\leq 2^{|B|} \vI_B \otimes \tr_B(\sqrt{\vrho}) \\
    \Longrightarrow \ 1=\tr(\vrho) &\leq 2^{3|B|}  \tr((\tr_B(\sqrt{\vrho}))^2),
\end{align}
where the last line used $\tr(\vC^2) \leq \tr(\vD^2)$ when $0\leq \vC\leq \vD$ since $\tr(\vD^2-\vC^2)=\tr((\vD-\vC)(\vD+\vC))$. Combine the above to obtain Eq.~\eqref{eq:rotated}. 

\textbf{Claim 2.} The KMS inner product is `protected' from local rotations (analogous to \cite[Proposition 10]{anshu2020sample}). For unitaries $\vU, \vV$ supported on a region $B$, it holds that
\begin{align}
     \frac{\|\vU^\dagger \vB \vV\|_{\vrho}}{\|\vB\|}\leq
   (1 + e^{1/4d\beta}) \L( \frac{\|\vB\|_{\vrho}}{\|\vB\|} \R)^{1/2}+   (2+8^{|B|}d\beta)8^{|B|}d\beta \left(\frac{\|\vB\|_{\vrho}}{\|\vB\|}\right)^{1/8d\beta}.
   \label{eq:rotation-protected}
\end{align}
Consider the decomposition from~\autoref{lem:sumoverenergies}
\begin{align}
    \vU= \frac{1}{\sqrt{\sigma2\sqrt{2\pi}}} \left(\int_{|\omega|\leq \Delta}\hat{\vU}(\omega)d\omega + \int_{|\omega|\geq \Delta}\hat{\vU}(\omega)d\omega \right) := \vU_{\leq \Delta} + \vU_{\geq \Delta}
\end{align}
for tunable $\sigma>0$ and $\Delta>0.$  
Using \autoref{cor:norm_decay} and~\autoref{lem:convergence_imaginary} with $|\beta_0| =1/4d$, we can bound
\begin{align}
    \|\vU_{\geq \Delta}\| \leq \frac{1}{\sqrt{\sigma2\sqrt{2\pi}}} \int_{|\omega|\geq \Delta} \|\hat{\vU}(\omega) \| d\omega \leq \frac{1}{\sigma\sqrt{2\pi}} \int_{|\omega|\geq \Delta} 4^{|B|}e^{-|\omega|/4d + \sigma^2/16d^2} d\omega \leq \frac{4d}{\sigma} 4^{|B|} e^{\sigma^2/16d^2-\Delta/4d},
\end{align}
where the factor $4^{|B|}$ comes from taking the Pauli decomposition $\vU= \sum_{\vP} a_{\vP} \vP$, applying~\autoref{lem:convergence_imaginary} to each Pauli $\|e^{\beta_0\vH} \vP e^{-\beta_0\vH}\| \leq 2^{|B|}$, and $\sum_{\vP} |a_{\vP}|\leq 2^{|B|}$. At this stage, let us set $\sigma=1/\beta$, which means that $ \|\vU_{\geq \Delta}\|\leq d\beta 8^{|B|}e^{1/16d^2\beta^2-\Delta/4d}$ and $\|\vU_{\leq \Delta}\| \leq \|\vU\| + \|\vU_{\geq \Delta}\| \leq 1+ d\beta 8^{|B|} e^{1/16d^2\beta^2-\Delta/4d}$. We take a similar decomposition of $\vV=\vV_{\leq \Delta} + \vV_{\geq \Delta}$.

With the above decompositions and using that $|\langle \vP,\vQ \rangle_{\vrho}| \leq \|\vP\| \|\vQ\|$ and the triangle inequality we can bound
\begin{align}
    \|\vU^\dagger \vB \vV\|_{\vrho}
    \leq \|\vU_{\leq \Delta}^\dagger \vB \vV_{\leq \Delta}\|_{\vrho} + (2+ d\beta 8^{|B|} e^{1/16d^2\beta^2-\Delta/4d})d\beta 8^{|B|} e^{1/16d^2\beta^2-\Delta/4d} \|\vB\|.
\end{align}
Next, we bound the first term from the above expression
\begin{align}
    & \tr\L(\vV_{\leq \Delta}^\dagger \vB^\dagger \vU_{\leq \Delta} \sqrt{\vrho} \vU_{\leq \Delta}^\dagger \vB \vV_{\leq \Delta} \sqrt{\vrho}\R) \\
    &= \tr\L((\vrho^{1/4} \vV^\dagger_{\leq \Delta} \vrho^{-1/4})  (\vrho^{1/4} \vB^\dagger \vrho^{1/4}) (\vrho^{-1/4} \vU_{\leq \Delta} \vrho^{1/4})  (\vrho^{1/4} \vU^\dagger_{\leq \Delta} \vrho^{-1/4}) (\vrho^{1/4} \vB \vrho^{1/4}) (\vrho^{-1/4} \vV_{\leq \Delta} \vrho^{1/4}) \R) \\
    &\leq    \L(\|\vrho^{-1/4} \vV_{\leq \Delta} \vrho^{1/4} \| \|\vrho^{-1/4} \vU_{\leq \Delta} \vrho^{1/4} \| \|\vB\|_{\vrho}\R)^2 \tag*{(Holder's inequality)}\\
    &\leq  \L(\frac{1}{\sigma2\sqrt{2\pi}} \int_{|\omega|\leq \Delta} d\omega \|\vrho^{-1/4} \hat{\vV}(\omega) \vrho^{1/4} \| \int_{|\omega|\leq \Delta} d\omega \|\vrho^{-1/4} \hat{\vU}(\omega) \vrho^{1/4} \|  \|\vB\|_{\vrho}\R)^2 \\
    &\leq  \left(\frac{e^{\sigma^2 \beta^2/16}}{\sigma 2\pi\sqrt{2}} \int_{|\omega|\leq \Delta} d\omega e^{\beta \omega/4}
 \right)^4 \|\vB\|_{\vrho}^2 \tag*{(\autoref{lem:bounds_imaginary_conjugation})} \\
 &\leq  e^{\beta\Delta}
  \|\vB\|_{\vrho}^2.
\end{align}
Set $\Delta= \max ( \frac{1}{\beta}\log\frac{\|\vB\|}{ \|\vB\|_{\vrho}} ,  \frac{1}{2d\beta^2})$ and combine the above bounds to obtain Eq.~\eqref{eq:rotation-protected}.

Finally, the two claims~\eqref{eq:rotated},~\eqref{eq:rotation-protected} imply that
\begin{align}
    \frac{\|\vB\|_{\vtau}}{\|\vB\|} \leq  \L((1 + e^{1/4d \beta}) \L( \frac{\|\vB\|_{\vrho}}{\|\vB\|} \R)^{1/2}+   (2+8^{|B|}d\beta)8^{|B|}d\beta \left(\frac{\|\vB\|_{\vrho}}{\|\vB\|}\right)^{1/8d \beta}\R) 2^{2|B|}.
\end{align}
Since $\|\vB\|\leq 2^{|B|} \|\vB\|_{\vtau}$, assuming $\beta\geq 1/4d$ we obtain that
$$\frac{\|\vB\|_{\vrho}}{\|\vB\|} \geq e^{-80\beta|\text{supp}(\vB)| - 16 d\beta \log 2d\beta}.$$
Further, since $\|
\vB\|_{\vtau} \leq \|\vB\|$, the lemma follows.
\end{proof}

Once we know that $\|[\vA,\vH-\vH']\|_{\vtau}$ is small for each local Pauli $\vA$ on a qubit $i$, it is clear that $\vH,\vH'$ are close to each other near $i$, by a direct computation.

\begin{lem}[Locally good coefficients]\label{lem:localcloseness}
Consider Hamiltonians $\vH = \sum_{\gamma \in \Gamma}h_{\gamma} \vP_{\gamma}$ and $\vH' = \sum_{\gamma \in \Gamma}h'_{\gamma} \vP_{\gamma}$ for distinct Pauli strings $\vP_{\gamma}$. If $\|[\vA,\vH-\vH']\|_{\vtau} \leq \epsilon$ for each $\vA$ from the set of single-qubit $\{\vA^a\} = \{\vX_i,\vY_i,\vZ_i\}$ on a particular qubit $i$, then, 
\begin{align}
    |h_\gamma -h'_\gamma| \leq \epsilon\quad \text{for each}\quad \vP_{\gamma} \quad \text{acting on qubit}\quad i.
\end{align}
\end{lem}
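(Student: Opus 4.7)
The plan is a direct computation: expand the commutator in the Pauli basis and exploit the orthogonality of distinct Pauli strings under the Hilbert-Schmidt / maximally mixed state inner product. Since $\|\vB\|_{\vtau}^2 = 2^{-n}\tr(\vB^{\dagger}\vB)$ and distinct Pauli strings are orthonormal in this inner product, the squared KMS-$\vtau$ norm of a Pauli expansion just sums the squares of the coefficients (up to Pauli normalization $2^n$).

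First, I would write
\begin{align}
[\vA^a,\vH-\vH']=\sum_{\gamma\in\Gamma}(h_{\gamma}-h'_{\gamma})[\vA^a,\vP_{\gamma}].
\end{align}
The commutator $[\vA^a,\vP_{\gamma}]$ vanishes unless $\vP_{\gamma}$ acts on qubit $i$ with a Pauli $\vQ_{\gamma,i}\in\{\vX,\vY,\vZ\}$ that anticommutes with $\vA^a$, in which case $[\vA^a,\vP_{\gamma}]=\pm 2\ri\,\vR_{\gamma}$ where $\vR_{\gamma}$ is the Pauli string obtained from $\vP_{\gamma}$ by replacing $\vQ_{\gamma,i}$ with the third Pauli $\vC$ such that $\{\vA^a,\vQ_{\gamma,i},\vC\}=\{\vX,\vY,\vZ\}$. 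Since the $\vP_{\gamma}$ are distinct Pauli strings and the map $\vP_{\gamma}\mapsto \vR_{\gamma}$ only modifies site $i$ in a manner injective on the non-kernel Paulis, the resulting $\vR_{\gamma}$ remain pairwise distinct (they agree with $\vP_{\gamma}$ off of site $i$, and differ consistently on site $i$).

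Next, I would apply the orthogonality of distinct Pauli strings:
\begin{align}
\|[\vA^a,\vH-\vH']\|_{\vtau}^2
= 2^{-n}\tr\!\bigg(\Big(\sum_{\gamma}(h_{\gamma}-h'_{\gamma})[\vA^a,\vP_{\gamma}]\Big)^{\!\dagger}\Big(\sum_{\gamma'}(h_{\gamma'}-h'_{\gamma'})[\vA^a,\vP_{\gamma'}]\Big)\bigg)
= 4\sum_{\gamma\,:\,[\vA^a,\vP_{\gamma}]\neq 0}(h_{\gamma}-h'_{\gamma})^2,
\end{align}
where cross terms drop by orthogonality and each diagonal term contributes $(2\ri)(-2\ri)=4$ after tracing against $\vtau$. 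By hypothesis this sum is at most $\epsilon^2$, so in particular each summand satisfies $4(h_{\gamma}-h'_{\gamma})^2\le\epsilon^2$.

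Finally, to finish, observe that any $\vP_{\gamma}$ acting on qubit $i$ has some $\vQ_{\gamma,i}\in\{\vX,\vY,\vZ\}$ there, so there exists (at least) one $\vA^a\in\{\vX_i,\vY_i,\vZ_i\}$ that anticommutes with $\vQ_{\gamma,i}$; applying the bound above with that particular $\vA^a$ gives $|h_{\gamma}-h'_{\gamma}|\le\epsilon/2\le\epsilon$, proving the claim. There is no real obstacle here: the only thing to be careful about is the injectivity step guaranteeing orthogonality of the shifted Pauli strings $\vR_{\gamma}$, which is immediate from the distinctness of the original $\vP_{\gamma}$ and the observation that two Pauli strings differing off site $i$ continue to differ, while those agreeing off site $i$ must have had different non-identity Paulis on site $i$, which are mapped by $\vQ\mapsto[\vA^a,\vQ]/(2\ri)$ into different Paulis on site $i$.
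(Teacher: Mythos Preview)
Your proof is correct and follows essentially the same idea as the paper's: expand in the Pauli basis and use Hilbert--Schmidt orthogonality of distinct Pauli strings under $\vtau$. The packaging differs slightly. The paper sums over all three $\vA^a$ and invokes the identity $\sum_{a}[\vA^a,[\vA^a,\vP_{\gamma}]]=8\vP_{\gamma}$ for $\vP_{\gamma}$ acting on site $i$, obtaining $\sum_a\|[\vA^a,\vH-\vH']\|_{\vtau}^2=8\sum_{\gamma\sim i}(h_\gamma-h'_\gamma)^2\le 3\epsilon^2$. You instead fix a single $\vA^a$ that anticommutes with the site-$i$ Pauli of a given $\vP_{\gamma}$ and argue injectivity of the shift $\vP_{\gamma}\mapsto\vR_{\gamma}$ to keep orthogonality, arriving at $4(h_{\gamma}-h'_{\gamma})^2\le\epsilon^2$. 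Your route avoids the double-commutator identity at the cost of the small injectivity check, and happens to yield the slightly sharper constant $\epsilon/2$; otherwise the two arguments are equivalent.
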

\begin{proof} It holds that
    \begin{align}
    \sum_{a=1,2,3}\norm{[\vA^a,\vH-\vH']}^2_{\tau} &= \sum_{a=1,2,3} \frac{1}{2^n}\tr[ [\vA^a,\vH-\vH'][\vA^a,\vH-\vH']^{\dagger}]\\
    &= \sum_{a=1,2,3} \frac{1}{2^n}\tr[ [\vA^a,[\vA^a,\vH-\vH']](\vH-\vH')]
    = 8\sum_{\gamma\sim i}(h_{\gamma}-h'_{\gamma})^2
\end{align}
using that the double commutator pick out Pauli strings that acts on the qubit $$\sum_{a=1,2,3}[\vA^a,[\vA^a,\vH-\vH']] = 8\sum_{\gamma\sim i} (h_{\gamma}-h'_{\gamma})\vP_{\gamma}$$ and that Paulis are orthonormal $\frac{1}{2^n}\tr[\vP_{\gamma}\vP^{\dagger}_{\gamma'}]=\delta_{\gamma,\gamma'}.$ This shows that $8\sum_{\gamma \sim i} (h_\gamma-h_\gamma')^2 \leq  3\epsilon^2$ and hence $|h_\gamma-h_\gamma'| \leq \epsilon$ for each $\gamma$ in the sum.
\end{proof}

\section{The learning protocol}
\label{sec:learning_protocol}
In this section, we turn the identifiability observable into a local learning algorithm. For concreteness, we assume that the Hamiltonian terms are each a distinct, known Pauli operator $\vP_{\gamma}$,
\begin{align}
    \vH = \sum_{\gamma\in \Gamma} \vh_{\gamma} = \sum_{\gamma\in \Gamma} h_{\gamma} \vP_{\gamma}
\end{align}
and we wish to learn the unknown parameter $h_{\gamma}\in [-1,1]$ for each $\gamma\in \Gamma.$ For the entire~\autoref{sec:learning_protocol}, we will also set the uncertainty in operator Fourier transform to be
\begin{align}
    \sigma = \frac{1}{\beta},
\end{align}
which appears sufficient. To simplify the computation, we also often assume that 
 \begin{align}
 \beta \ge \frac{1}{d}.
 \end{align}
If the input $\beta$ is too small, we rescale the Hamiltonian term $\vH \rightarrow c\vH$ so that the above hold; this will save us from repeating similar arguments for the small $\beta$ regime.

Inspired by the identifiability equation, we begin by defining an observable $Q$ that will play a key role in the learning protocol. The quantitative guarantees will depend on the locality and stability of this observable.

\subsection{Robustness of the identifiability observable \texorpdfstring{$Q$}{q}}

Crucial to our protocol is a quantity capturing the identifiability of the Hamiltonian via (quasi-)\textit{local} measurements. Recall the identifiability observable from the introduction:
\begin{align}
    Q(\vO,\vG,\vA,\vK) = \frac{1}{\sqrt{2\pi}}\int_{-\infty}^{\infty}\int_{-\infty}^{\infty}\tr\bigg[\vO^{\dagger}_{\vG}(t) \big( h_+(t')\vA_{\vK}(t'+t)\vrho  - h_-(t')\vrho\vA_{\vK}(t'+t)\big) \bigg] g_{\beta}(t)\rd t'\rd t
\end{align}
where implicitly $\vrho \propto e^{-\beta \vH}$, and the Hamiltonian $\vG$ and $\vK$ may not apriori be the same as $\vH.$ Indeed, to make use of~\autoref{lem:identify}, we do not a priori know the ground truth $\vH$, so we would have to also test against yet another Hamiltonian $\vG\ne\vH.$ We will see that, if the test Hamiltonian $\vK$ is close to the true Hamiltonian $\vH$, then the expression is small; conversely, if we are far from the true Hamiltonian, then the expression is large. 
We first derive some continuity properties of $Q$ in the presence of distant perturbations. In particular, we will often consider truncations in a similar fashion as $\vH_{\ell},$  
\begin{align}
   \vG = \sum_{\gamma\in \Gamma} \vg_{\gamma}, \quad \norm{\vg_\gamma}\le 1\quad\text{with}\quad  \vG_{\ell} &= \sum_{\gamma:\text{dist}(\gamma,\vO)<\ell-1} \vg_{\gamma},\\
    \vK = \sum_{\gamma\in \Gamma} \vk_{\gamma}, \quad \norm{\vk_\gamma}\le 1\quad\text{with}\quad \vK_{\ell} &= \sum_{\gamma:\text{dist}(\gamma,\vA)<\ell-1} \vk_{\gamma}.
\end{align}
and similarly for $\vG',\vK'$ and $\vG'_{\ell}, \vK'_{\ell}$. Right now, the identifiability observable $Q$ also depends on arbitrary operators $\vA$ and $\vO$, but we will always apply to single-site Paulis $\vA=\{\vX_i,\vY_i,\vZ_i\}$ and $\vO \propto [\vA,\vP_{\gamma}].$

\begin{lem}[Robustness of $Q$] \label{lem:stability_Q}
     Consider Hamiltonians $\vG, \vG', \vK, \vK'$ with the same interaction graph as in~\autoref{sec:Ham} such that $\norm{\vg_{\gamma}},\norm{\vg'_{\gamma}},\norm{\vk_{\gamma}},\norm{\vk'_{\gamma}}\le 1$, $\norm{\vg_{\gamma}-\vg'_{\gamma}},\norm{\vk_{\gamma}-\vk'_{\gamma}}\le \kappa$,
     and operators $\vA,\vO$ such that $\norm{\vA},\norm{\vO}\le 1.$ Assume that $\beta\ge 1/d$. Then, \\
     (A) The truncation error is bounded by
    \begin{align}
        \labs{Q(\vO,\vG,\vA,\vK) - Q(\vO,\vG_{\ell},\vA,\vK_{\ell})} \lesssim  \frac{e^{\beta \Omega'/2}}{\sqrt{\beta}} (e^{-\ell^2/16e^4d^2\beta^2}+e^{-\pi\ell/e^2d\beta})(|O|+|A|).
    \end{align}
(B) For extensive perturbations in which $\vg_\gamma=\vg'_\gamma$, $\vk_\gamma=\vk'_\gamma$ for all $\gamma$ at distance within $\ell_0$ from $\vO,\vA$,
    \begin{align}
        \labs{Q(\vO,\vG,\vA,\vK) - Q(\vO,\vG',\vA,\vK')} \lesssim \kappa \frac{e^{\beta \Omega'/2}}{\sqrt{\beta}} \sum_{\ell=\ell_0}^{\infty} (S(\ell,\vA)+S(\ell,\vO)) (\beta+\frac{\ell}{d})(e^{-\ell^2/16e^4d^2\beta^2}+e^{-\pi\ell/2e^2d\beta}).
    \end{align}
    (C) 
    For perturbation within a radius $\ell_0$, i.e., $\vg_\gamma=\vg'_\gamma$, $\vk_\gamma=\vk'_\gamma$ for all $\gamma$ at distance $\ell_0$ or larger from $\vO,\vA$,
        \begin{align} \labs{Q(\vO,\vG,\vA,\vK) - Q(\vO,\vG',\vA,\vK')} \lesssim \frac{\kappa \sqrt{\beta}}{d} e^{\beta \Omega'/2} ( V(\ell_0,\vO)+V(\ell_0,\vA)).
        \end{align}
\end{lem}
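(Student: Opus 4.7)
All three parts follow from Lieb--Robinson style propagation bounds applied inside the double time integral defining $Q$, combined with the weight decays $|g_{\beta}(t)|\lesssim \beta^{-1}e^{-2\pi|t|/\beta}$ and $|h_{\pm}(t')|\lesssim \beta^{-3/2}e^{\beta\Omega'/2}e^{-t'^{2}/\beta^{2}}$ from Equation~\eqref{eq:h_bounds} with $\sigma=1/\beta$. The basic input I will invoke is the standard statement that, for a Hamiltonian $\vH$ with interaction degree $d$, a local operator $\vX$ of bounded support satisfies
\begin{equation*}
\lVert \vX_{\vH}(t)-\vX_{\vH_{\ell}}(t)\rVert \;\lesssim\; |\mathrm{supp}(\vX)|\,\lVert \vX\rVert\,\bigl(e^{-\ell/\xi}+(v|t|/\ell)^{\ell}\bigr),
\end{equation*}
with $v,\xi=\mathrm{poly}(d)$; the form quoted in the excerpt (with the $e^{2}d$ factors) is the one that drops out of this estimate. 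Throughout the plan I will use $v\propto e^{2}d$.

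For part (A), I would telescope through an intermediate step $Q(\vO,\vG_{\ell},\vA,\vK)$ so that each difference only truncates one of the two Hamiltonians. Replacing $\vG$ by $\vG_{\ell}$ inside $\vO^{\dagger}_{\vG}(t)$ introduces, by Lieb--Robinson, an operator of norm $\lesssim |\vO|\bigl(e^{-\ell/e^{2}d\beta}+(v|t|/\ell)^{\ell}\bigr)$. Integrating against $g_{\beta}(t)$ exchanges the $(v|t|/\ell)^{\ell}$ tail for the exponential $e^{-\pi\ell/e^{2}d\beta}$, while integrating against the Gaussian weight of $h_{\pm}(t')$ (which arises from the outer $\vA$ truncation) produces instead the sub-Gaussian tail $e^{-\ell^{2}/16e^{4}d^{2}\beta^{2}}$. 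The $h_{\pm}$ prefactor contributes $e^{\beta\Omega'/2}/\sqrt{\beta}$, and the symmetric replacement $\vK\to\vK_{\ell}$ inside $\vA_{\vK}(t'+t)$ yields the other term, giving the claimed bound.

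For part (B), I would write $\vG-\vG'$ and $\vK-\vK'$ as sums of shells $\Delta_{\ell}=\sum_{\mathrm{dist}(\gamma,\cdot)=\ell}(\vh_{\gamma}-\vh'_{\gamma})$ for $\ell\geq\ell_{0}$. A Duhamel expansion
\begin{equation*}
\vO_{\vG}(t)-\vO_{\vG'}(t)=-\iunit\int_{0}^{t}[\vO_{\vG}(s),\vG-\vG']_{\vG}(t-s)\,\mathrm{d}s
\end{equation*}
then lets me apply the Lieb--Robinson truncation shell by shell: the contribution of $\Delta_{\ell}$ survives only after $|t|\gtrsim \ell/v$, so after integrating over $s\in[0,t]$ one gets a prefactor $(\beta+\ell/d)$ from $\int|t|\,|g_{\beta}(t)|\,\mathrm{d}t$ and $\int|t'|\,|h_{\pm}(t')|\,\mathrm{d}t'$, together with the same tails as in (A). Each shell carries norm at most $\kappa(S(\ell,\vO)+S(\ell,\vA))$, yielding the advertised sum after the uniform $e^{\beta\Omega'/2}/\sqrt{\beta}$ prefactor is pulled out.

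For part (C), no Lieb--Robinson cancellation is needed since the perturbation lives inside a ball; I would use the trivial Duhamel bound $\lVert \vO_{\vG}(t)-\vO_{\vG'}(t)\rVert\leq |t|\,\lVert \vG-\vG'\rVert\leq |t|\,\kappa V(\ell_{0},\vO)$ (and the analogous bound for $\vA,\vK$). Integrating $|t|\,|g_{\beta}(t)|$ against $|h_{\pm}(t')|$ contributes $\beta\cdot e^{\beta\Omega'/2}/\sqrt{\beta}=\sqrt{\beta}\,e^{\beta\Omega'/2}$; the $1/d$ factor comes from using the bound $V(\ell,A)\leq|A|d^{\ell+2}/(d-1)$ when folding the $\ell_{0}$-independent constants into the volume count. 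The main obstacle, and the place I would spend the most care, is making the integration-by-time/Lieb-Robinson tradeoff in (A) and (B) deliver both the sub-Gaussian tail $e^{-\ell^{2}/\beta^{2}}$ (from the Gaussian $h_{\pm}$ weight) and the exponential tail $e^{-\ell/\beta}$ (from the slower-decaying $g_{\beta}$ weight) simultaneously, with the correct dependence on the geometric constants $d,e^{2}d$. Everything else is essentially bookkeeping around the weights in Equation~\eqref{eq:h_bounds}.
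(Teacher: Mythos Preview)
Your proposal is correct and mirrors the paper's proof closely: telescope to isolate each Hamiltonian change, apply the Lieb--Robinson bound (for (A)) or its Duhamel-perturbative variant (Lemma~\ref{lem:perturb_At}, for (B) and (C)) pointwise inside the double time integral, then split the $t,t'$ domains so that the exponentially decaying $g_\beta$ weight yields the $e^{-\pi\ell/e^2d\beta}$ tail while the Gaussian $h_\pm$ weight yields the $e^{-\ell^2/16e^4d^2\beta^2}$ tail. One small slip: the $1/d$ in (C) is not a consequence of the volume inequality you cite but is inherited from the $1/(2d)$ that appears when integrating the Lieb--Robinson factor $\int_0^t (2ds)^{\ell}/\ell!\,ds$ in Lemma~\ref{lem:perturb_At}.
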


See Appendix~\ref{sec:proof_stability} for the proof using routine Lieb-Robinson arguments.
\begin{rmk}
To learn the Hamiltonian to a high precision, we will exploit the fact that the surface area $S(\ell)$ grows polynomially with the distance $\ell$. 
\end{rmk}

\subsection{ Identifiability of test Hamiltonian: existence and uniqueness}
\label{sec:exist_unique} 
Here, we derive operational properties of the identifiability observable that will help us interpret the experimental values of $Q$. Essentially, $Q$ gives a unique way to identify when a local guess is approximately correct. Indeed, as a consistency check, inserting the ground truth Hamiltonian $\vH$ always gives a vanishing $Q$ for any $\vG$.

\begin{lem}[Existence of a global, perfect guess]\label{lem:existence-global}
Recall the ground truth Gibbs state $\vrho = e^{-\beta \vH}/\tr(e^{-\beta \vH}).$ Then, the identitifiability observable vanishes exactly
    \begin{align}
    Q(\vO,\vG,\vA,\vH) = 0 \quad \text{for each Hamiltonian}\quad \vG \quad \text{and} \quad \vO,\vA.
\end{align}
\end{lem}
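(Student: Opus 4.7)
The plan is to establish the stronger operator identity that, when $\vK=\vH$, the inner $t'$-integrand in the definition of $Q$ already vanishes pointwise in $t$. Once this is in hand, tracing against $\vO^{\dagger}_{\vG}(t)$ and integrating against $g_\beta(t)$ annihilates the expression regardless of the choice of $\vG$, which is exactly why the claim holds for every test Hamiltonian $\vG$.

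Concretely, I would reuse the Gibbs-conjugation identities~\eqref{eq:Gibbsfrequencyintime} and~\eqref{eq:Gibbsfrequencyintime1} extracted in the proof of~\autoref{lem:truncating_bohr}, specialized to $\vH'=\vH$ so that $\vrho'=\vrho$. Multiplying the $h_+$ identity on the right by $\vrho$ collapses $\sqrt{\vrho^{-1}}\vrho=\sqrt{\vrho}$ and yields
\[
\int_{\labs{\omega'}\le \Omega'}\sqrt{\vrho}\,\hat{\vA}_{\vH}(\omega')\sqrt{\vrho}\,\rd\omega' \;=\; \frac{1}{\sqrt{2\pi}}\int_{-\infty}^{\infty} h_+(t')\,\vA_\vH(t')\,\vrho\,\rd t'.
\]
Multiplying the $h_-$ identity on the left by $\vrho$ (using $\vrho\sqrt{\vrho^{-1}}=\sqrt{\vrho}$) produces the same LHS, with RHS $\tfrac{1}{\sqrt{2\pi}}\int h_-(t')\,\vrho\,\vA_\vH(t')\,\rd t'$. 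Hence the two time-integrals coincide as operators.

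To upgrade this to the $t'+t$ form appearing in $Q$, I conjugate both sides by $e^{\pm \ri\vH t}$: this translates $\vA_\vH(t')\mapsto \vA_\vH(t'+t)$, while $\vrho$ passes through freely because $[\vrho, e^{\ri\vH t}]=0$. Hence for every $t\in\BR$,
\[
\int_{-\infty}^{\infty}\bigl[h_+(t')\vA_\vH(t'+t)\vrho - h_-(t')\vrho\,\vA_\vH(t'+t)\bigr]\,\rd t' \;=\; 0
\]
as an operator identity. Multiplying by $\vO^{\dagger}_{\vG}(t)$ and taking the trace gives zero for every $t$ and every $\vG$, so the outer $t$-integration produces $Q(\vO,\vG,\vA,\vH)=0$, independently of $\vG,\vO,\vA$.

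There is no serious obstacle: the claim is morally the KMS condition $\sqrt{\vrho}\,\hat{\vA}_\vH(\omega)\sqrt{\vrho^{-1}}\vrho=\vrho\sqrt{\vrho^{-1}}\hat{\vA}_\vH(\omega)\sqrt{\vrho}$, re-expressed in the time domain via the definitions of $h_\pm$. The only bookkeeping is to confirm that the high-frequency cutoff $\Omega'$ is common to both $h_+$ and $h_-$, so that $\int_{\labs{\omega'}\le\Omega'}\sqrt{\vrho}\,\hat{\vA}_\vH(\omega')\sqrt{\vrho}\,\rd\omega'$ is an unambiguous operator on both sides; this is manifest from the symmetric form of~\eqref{eq:Gibbsfrequencyintime} and~\eqref{eq:Gibbsfrequencyintime1}. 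It is worth emphasizing conceptually that the vanishing is \emph{operator-level} rather than merely scalar-level, which is what makes the resulting identifiability observable a sensitive probe of mismatch between the guess $\vK$ and the truth $\vH$ later on, with the auxiliary Hamiltonian $\vG$ playing a purely evidentiary role.
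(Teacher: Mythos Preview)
Your proposal is correct and is essentially the paper's argument carried out in the time domain rather than the frequency domain. The paper rewrites $Q$ back as the $\omega'$-integral over $\sqrt{\vrho'}\hat{\vA}_{\vK}(\omega')_{\vK}(t)\sqrt{\vrho'^{-1}}\vrho-\vrho\sqrt{\vrho'^{-1}}\hat{\vA}_{\vK}(\omega')_{\vK}(t)\sqrt{\vrho'}$ and observes that when $\vrho'=\vrho$ this vanishes pointwise in $(\omega',t)$; you instead push through~\eqref{eq:Gibbsfrequencyintime}--\eqref{eq:Gibbsfrequencyintime1} to identify both $h_\pm$ integrals with the common operator $\int_{|\omega'|\le\Omega'}\sqrt{\vrho}\,\hat{\vA}_\vH(\omega')\sqrt{\vrho}\,\rd\omega'$ and then conjugate by $e^{\ri\vH t}$. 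The underlying cancellation $\sqrt{\vrho^{-1}}\vrho=\vrho\sqrt{\vrho^{-1}}=\sqrt{\vrho}$ is identical in both routes, and your extra conjugation step is exactly what the paper's $(\cdot)_{\vK}(t)$ subscript already encodes.
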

\begin{proof}
   As in the proof of~\autoref{lem:truncating_bohr}, for $\vrho' \propto e^{-\beta \vK}$, we have the exact identity
\begin{align*}
    Q(\vO,\vG,\vA,\vK) = \frac{1}{\sqrt{2\pi}} \int_{-\infty}^{\infty}\tr\L[\vO^{\dagger}_{\vG}(t) \int_{-\Omega'}^{\Omega'} \L(  \sqrt{\vrho'}\hat{\vA}_{\vK}(\omega')_{\vK}(t)\sqrt{\vrho^{'-1}} \vrho - \vrho\sqrt{\vrho^{'-1}}\hat{\vA}_{\vK}(\omega')_{\vK}(t)\sqrt{\vrho^{'}} \R)\R] \rd \omega' g_{\beta}(t) \rd t.
\end{align*}
When $\vK = \vH$, we have that $\vrho' = \vrho$ and for every $t,\omega',$
\begin{align}
    \sqrt{\vrho'}\hat{\vA}_{\vK}(\omega')_{\vK}(t)\sqrt{\vrho^{'-1}} \vrho - \vrho\sqrt{\vrho^{'-1}}\hat{\vA}(\omega')_{\vK}(t)\sqrt{\vrho^{'}}  =  \sqrt{\vrho}\hat{\vA}_{\vH}(\omega')_{\vH}(t)\sqrt{\vrho^{-1}} \vrho - \vrho\sqrt{\vrho^{-1}}\hat{\vA}(\omega')_{\vH}(t)\sqrt{\vrho} =0,
\end{align}
as advertised.
\end{proof}

In our algorithm, we will only make local guesses in a search radius $\ell$, and we will have to discretize the set of parameter guesses $k_\gamma$ by introducing an epsilon net. For each coefficient labeled by $\gamma,$ consider the set of discrete points
\begin{align}
    N^{\kappa} \subset [-1,1]\quad \text{such that}\quad x \in [-1,1] \implies \labs{x- N^{\kappa}} \le \kappa.
\end{align}
Of course, such a set can be chosen to have cardinality $\labs{N^{\kappa}} = \lceil 2/\kappa\rceil.$ We will denote Hamiltonians whose coefficients are chosen from the epsilon net with a prime, such as $\vG',\vK'$ and $\vG'_{\ell}, \vK'_{\ell}$. 

Exploiting the stability of $Q$, the following lemma states that a locally correct guess must also behave like the ground truth. The larger the local patch, the better $Q$ is.

\begin{lem}[Existence of a good localized Hamiltonian on the epsilon net]
    \label{lem:exsitence} Assume that $\beta\ge 1/d$. For every $\vA,\vO$ such that $\norm{\vA},\norm{\vO}\le 1$, there exists a $\vK'_\ell = \sum_{\gamma} k'_{\gamma}\vP_{\gamma}$ with $k'_{\gamma} \in N_\kappa$ such that for every $\vG'_\ell = \sum_{\gamma} g'_{\gamma}\vP_{\gamma}$ with $g'_{\gamma} \in N_\kappa$, the identifiability observable satisfies
    \begin{align}
        \labs{Q(\vO,\vG'_\ell,\vA,\vK'_\ell)} \lesssim \frac{e^{\beta \Omega'/2}}{\sqrt{\beta}} \L( e^{-\pi\ell/e^2d\beta} + \kappa \beta(V(\ell,\vA)+V(\ell,\vO))  \R).\end{align}
\end{lem}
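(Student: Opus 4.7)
The plan is to construct the desired $\vK'_\ell$ explicitly as the epsilon-net rounding of the truncated ground truth $\vH_\ell$, so that the choice is independent of $\vG'_\ell$. This is what makes the universal quantifier over $\vG'_\ell$ tractable: the witness $\vK'_\ell$ is chosen once and for all, based only on $\vH$.

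Concretely, I would first use the covering property of $N_\kappa$ to choose coefficients $k'_\gamma \in N_\kappa$ with $|k'_\gamma - h_\gamma| \leq \kappa$ for every $\gamma$ with $\mathrm{dist}(\gamma,\vA) < \ell - 1$, and set $\vK'_\ell := \sum_\gamma k'_\gamma \vP_\gamma$. By~\autoref{lem:existence-global} applied with the test Hamiltonian $\vG = \vG'_\ell$ (for an arbitrary choice of $\vG'_\ell$) and the ``true'' argument equal to $\vH$, we have the exact identity
\begin{align}
Q(\vO,\vG'_\ell,\vA,\vH) \;=\; 0.
\end{align}
Hence it suffices to bound $|Q(\vO,\vG'_\ell,\vA,\vK'_\ell) - Q(\vO,\vG'_\ell,\vA,\vH)|$, which I would split through the intermediate point $\vH_\ell$ by the triangle inequality.

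For the first leg $|Q(\vO,\vG'_\ell,\vA,\vH) - Q(\vO,\vG'_\ell,\vA,\vH_\ell)|$, I would invoke part (A) of~\autoref{lem:stability_Q}. Noting that $\vG'_\ell$ is already supported within distance $\ell$ of $\vO$, so that $(\vG'_\ell)_\ell = \vG'_\ell$, part (A) applied with $\vG \leftarrow \vG'_\ell$ and $\vK \leftarrow \vH$ yields a bound of the form $\tfrac{e^{\beta\Omega'/2}}{\sqrt\beta}\bigl(e^{-\ell^2/16e^4d^2\beta^2} + e^{-\pi\ell/e^2d\beta}\bigr)(|\vO|+|\vA|)$, which is dominated by $\tfrac{e^{\beta\Omega'/2}}{\sqrt\beta} e^{-\pi\ell/e^2 d\beta}$ up to absolute constants once $|\vO|,|\vA|\leq 1$. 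For the second leg $|Q(\vO,\vG'_\ell,\vA,\vH_\ell) - Q(\vO,\vG'_\ell,\vA,\vK'_\ell)|$, since $\vH_\ell$ and $\vK'_\ell$ differ only in coefficients within the ball of radius $\ell$ about $\vA$, and by construction each such coefficient is perturbed by at most $\kappa$, part (C) of~\autoref{lem:stability_Q} with $\ell_0 = \ell$ gives a bound $\tfrac{\kappa \sqrt\beta}{d}\, e^{\beta\Omega'/2}\bigl(V(\ell,\vO)+V(\ell,\vA)\bigr)$.

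Adding the two bounds and absorbing $1/d, \sqrt\beta$ factors into the $\lesssim$ via $\beta \geq 1/d$ gives exactly the claimed estimate. The only point requiring mild care is verifying the hypotheses of the stability lemma for the two legs (in particular, that $\vG'_\ell$ is its own $\ell$-truncation and that the perturbation between $\vH_\ell$ and $\vK'_\ell$ is genuinely supported within the ball of radius $\ell$); apart from this bookkeeping, the argument is a direct application of~\autoref{lem:existence-global} and~\autoref{lem:stability_Q}, with no essential additional obstacle.
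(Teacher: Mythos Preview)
Your proposal is correct and follows essentially the same route as the paper: choose $\vK'_\ell$ as the $N_\kappa$-rounding of $\vH_\ell$, use \autoref{lem:existence-global} to get $Q(\vO,\vG'_\ell,\vA,\vH)=0$, then apply Item~(A) of \autoref{lem:stability_Q} for the step $\vH\to\vH_\ell$ and Item~(C) for the step $\vH_\ell\to\vK'_\ell$. One cosmetic remark: when you write ``once $|\vO|,|\vA|\le 1$'' you are conflating the support sizes appearing in Item~(A) with the norm hypothesis; the support sizes are bounded by constants depending on $q,d$ (absorbed into $\lesssim$), not by the norm bound.
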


\begin{proof}
The idea is to take the ground truth Hamiltonian $\vK = \vH$, localize to $\vK_{\ell}$, and then round it on the epsilon net. 
By the stability of $Q$ (Item (A) of \autoref{lem:stability_Q}), we may truncate $\vK \rightarrow \vK_{\ell}$
\begin{align}
    \labs{Q(\vO,\vG,\vA,\vK_{\ell})} \lesssim \frac{1}{\sqrt{\beta}}e^{\beta \Omega'/2} de^{-\pi\ell/e^2d\beta}
\end{align}
for any $\vG,$ and particularly $\vG'_{\ell}$ from the epsilon net.

Next, we round the Hamiltonian $\vH_{\ell}$ to the epsilon net. By Item (C) of \autoref{lem:stability_Q}, 
\begin{align}
\labs{Q(\vO,\vG'_\ell,\vA,\vK_\ell)-Q(\vO,\vG'_\ell,\vA,\vK'_\ell)} \lesssim
\frac{\kappa \sqrt{\beta}}{d} e^{\beta \Omega'/2}(V(\ell,\vA)+V(\ell,\vO)).    
\end{align}
Collect the errors to conclude the proof.
\end{proof}

The remarkable feature of the identifiability equation is that a guess $\vH_{\ell} $ that achieves good values of $Q$ must simply be locally correct.

\begin{lem}[Uniqueness of good local guesses]\label{lem:unique}
Assume that $\beta\ge 1/d$. For every $\vA,\vO$ such that $\norm{\vA},\norm{\vO}\le 1$, 
suppose there is an $\vH' = \sum_{\gamma} h'_{\gamma}\vP_{\gamma}$ such that $\labs{Q(\vO,\vG'_\ell,\vA,\vH')}\le \epsilon$ for every $\vG'_\ell = \sum_{\gamma} g'_{\gamma}\vP_{\gamma}$ with $g'_{\gamma} \in N_\kappa$. 
Then,
\begin{align}
    \labs{\braket{ \vO,[\vA,\vH-\vH']}_{\vrho}} &\lesssim \frac{e^{\beta \Omega'/2}}{\beta}  de^{-\pi\ell/e^2d\beta}+ \beta e^{-\Omega'/4d} d^{4+16e^2d^4\beta^2}  + \frac{\kappa}{d} e^{\beta \Omega'/2} ( V(\ell,\vO)+V(\ell,\vA)) + \frac{\epsilon}{\sqrt{\beta}}.
\end{align} 
\end{lem}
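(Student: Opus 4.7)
The plan is to express the left-hand side using the identifiability observable $Q(\vO,\vH,\vA,\vH')$ via Lemma~\ref{lem:truncating_bohr}, and then, using the robustness results in Lemma~\ref{lem:stability_Q}, to walk the ground-truth $\vH$ in the $\vG$-slot of $Q$ down to an element of the $\kappa$-net supported in the ball of radius $\ell$ around $\vO$. At that point the hypothesis applies directly and bounds the result by $\epsilon$.

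Concretely, with $\sigma = 1/\beta$, Lemma~\ref{lem:truncating_bohr} gives the exact equality
\begin{align}
\tfrac{\beta\sqrt{2\sigma\sqrt{2\pi}}}{2}\,\braket{\vO,[\vA,\vH-\vH']}_{\vrho}
= Q(\vO,\vH,\vA,\vH') + \tfrac{\beta}{2}\int_{|\omega'|\ge \Omega'}\!\braket{\vO,[\hat{\vA}_{\vH'}(\omega'),\vH-\vH']}_{\vrho}\,d\omega'.
\end{align}
Lemma~\ref{lem:high_freq} applied with $\vG=\vH-\vH'$ (each term of norm at most $2$) controls the high-frequency tail by $O\!\bigl(\sqrt{\beta}\,d^{4+16e^2 d^4\beta^2}\,e^{-\Omega'/4d}\bigr)$, using $\beta\ge 1/d$ to bound $\sigma^2/16d^2 = O(1)$; multiplying by $\beta/2$ and dividing through by $\tfrac{1}{2}\beta\sqrt{2\sigma\sqrt{2\pi}}\asymp\sqrt{\beta}$ produces the $\beta\,e^{-\Omega'/4d}\,d^{4+16e^2 d^4\beta^2}$ term in the claim. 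To control $Q(\vO,\vH,\vA,\vH')$, I first truncate the $\vG$-slot $\vH\to\vH_\ell$ via Lemma~\ref{lem:stability_Q}(A), paying $\lesssim e^{\beta\Omega'/2}/\sqrt{\beta}\cdot d\, e^{-\pi\ell/e^2 d\beta}$, and then round each coefficient of $\vH_\ell$ to its nearest point in $N^{\kappa}$, producing a Hamiltonian $\vH^{*}_\ell$ in the permitted class $\vG'_\ell$; Lemma~\ref{lem:stability_Q}(C), applied with perturbation radius $\ell$ and per-term deviation $\le\kappa$, adds an error $\lesssim (\kappa\sqrt{\beta}/d)\,e^{\beta\Omega'/2}\bigl(V(\ell,\vO)+V(\ell,\vA)\bigr)$. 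The hypothesis now kicks in: $|Q(\vO,\vH^{*}_\ell,\vA,\vH')|\le\epsilon$, since $\vH^{*}_\ell$ is an admissible $\vG'_\ell$.

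Summing these four contributions and dividing once more by $\sqrt{\beta}$ produces exactly the four terms in the stated bound, so no further ingredient is needed. The main point of care is the book-keeping of which slot of $Q$ is being perturbed at each step: because the hypothesis fixes $\vH'$ in the $\vK$-slot, all localization and rounding manipulations must stay confined to the $\vG$-slot so that the final object is of the form $Q(\vO,\vG'_\ell,\vA,\vH')$ guaranteed by the hypothesis. This is what forces the use of Lemma~\ref{lem:stability_Q}(A),(C) in the $\vG$-slot only; the symmetric $|\vA|$ and $V(\ell,\vA)$ pieces appearing in those lemmas are kept as conservative upper bounds. Aside from this, the only residual work is to choose the tuning between $\sigma=1/\beta$, $\Omega'$, and $\ell$ so that all four error contributions show up with exactly the advertised scaling, which is routine.
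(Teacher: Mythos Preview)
Your proposal is correct and follows essentially the same approach as the paper's proof: invoke Lemma~\ref{lem:truncating_bohr} to obtain the identity, bound the high-frequency tail via Lemma~\ref{lem:high_freq}, truncate $\vH\to\vH_\ell$ in the $\vG$-slot using (an adaptation of) Lemma~\ref{lem:stability_Q}(A), round $\vH_\ell$ onto the net via Lemma~\ref{lem:stability_Q}(C), and finally invoke the hypothesis. Your observation that all manipulations must remain in the $\vG$-slot so the hypothesis applies is exactly the key bookkeeping point; the only minor remark is that no further ``tuning'' of $\sigma,\Omega',\ell$ is needed here, since these parameters appear as free variables in the stated bound.
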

\begin{proof}
    By~\autoref{lem:truncating_bohr},
\begin{align}
    \frac{\beta\sqrt{2\sigma\sqrt{2\pi}}}{2}\braket{ \vO,[\vA,\vH-\vH']}_{\vrho}= Q(\vO,\vH,\vA,\vH')+\frac{\beta}{2}\int_{\labs{\omega'}\ge \Omega'} \braket{\vO,[\hat{\vA}_{\vH'}(\omega'),\vH-\vH']}_{\vrho} \rd \omega'.
\end{align}
By the stability of $Q$ (adapting Item (A) of~\autoref{lem:stability_Q} for only changing the $\vH$ argument), and setting $\sigma = 1/\beta$ for the operator Fourier transform,
\begin{align}
    \labs{Q(\vO,\vH,\vA,\vH')} &\le \labs{Q(\vO,\vH,\vA,\vH')-Q(\vO,\vH_{\ell},\vA,\vH')}+ \labs{Q(\vO,\vH_{\ell},\vA,\vH')}\\
    &\lesssim \frac{e^{\beta \Omega'/2}}{\sqrt{\beta}} de^{-\pi\ell/e^2d\beta} +\labs{Q(\vO,\vH_{\ell},\vA,\vH')}. 
\end{align}
Next, we use Item (C) of~\autoref{lem:stability_Q} to round $\vH_\ell$ ot the epsilon net. There exists a $\vG'_\ell$ such that
\begin{align}
    \labs{Q(\vO,\vH_{\ell},\vA,\vH') - Q(\vO,\vG'_{\ell},\vA,\vH')} \leq \frac{\kappa \sqrt{\beta}}{d} e^{\beta \Omega'/2} ( V(\ell,\vO)+V(\ell,\vA)).
\end{align}
Recall the bound on the high-frequency part (\autoref{lem:high_freq}) to conclude the proof.

\end{proof}

\subsection{Measuring the identifiability observables}
This section summarizes the cost of measuring the identifiability observables. Due to locality, arguments are routine (see Appendix~\ref{sec:measurement_proof}). Here, $\CO(\cdot), \mathrm{poly}(\cdot)$ suppress dependence on the interaction degree $d$ and locality $q$.
\begin{lem}[Measuring a single $Q$]\label{lem:measure-single-Q} 
On a bounded degree interaction graph (\autoref{sec:Ham}), the observable $Q(\vO, \vG'_{\ell},\vA,\vK'_{\ell})$, where the coefficients of $\vG'_\ell, \vK'_\ell$ are taken from the net $N_\kappa$, can be measured to precision $\epsilon$ with probability $1-\delta$ using 
\begin{center}
$\CO\L(\frac{e^{\beta \Omega'} \|\vA\|^2 \|\vO\|^2}{\beta \epsilon^2} \log (1/\delta)\R)$ copies of $\vrho$, and\\
$\mathrm{poly}\L(\beta, V(\ell),\log(1/\kappa \epsilon), \frac{e^{\beta \Omega'/2}}{\sqrt{\beta}}\|\vA\|\|\vO\|\R)\frac{\log(1/\delta)}{\epsilon^2}$ elementary quantum gates 
\end{center}
acting on the neighbourhoods $V(\ell,\vO)\cup V(\ell,\vA)$. 
\end{lem}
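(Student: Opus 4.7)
The plan is to write $Q(\vO,\vG'_\ell,\vA,\vK'_\ell)$ as a weighted double integral whose integrand factorizes through two Heisenberg evolutions of local Hamiltonians, and then combine importance sampling over the time variables $(t,t')$ with a Hadamard test that consumes exactly one copy of $\vrho$ per shot. The two numerical inputs I would rely on are the decay bounds $|g_\beta(t)|\lesssim \beta^{-1}e^{-2\pi|t|/\beta}$ and $|h_\pm(t')|\lesssim \sqrt{\sigma}\beta^{-1}e^{\beta\Omega'/2+\sigma^2\beta^2/4}e^{-\sigma^2 t'^2}$ from~\eqref{eq:h_bounds} (with $\sigma=1/\beta$), and the trivial bound that the inner trace is uniformly at most $O(\|\vO\|\|\vA\|)$ in absolute value.

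First, I would truncate both integrals to $|t|\le T$ and $|t'|\le T'$ with $T,T'=\mathrm{poly}(\beta,\log(1/\epsilon))$, paying at most $\epsilon/3$ in accuracy. Define $Z_\pm := \int_{|t|\le T}|g_\beta(t)|\,\rd t \cdot \int_{|t'|\le T'}|h_\pm(t')|\,\rd t' \lesssim e^{\beta\Omega'/2}/\sqrt{\beta}$, and draw $(t,t')$ with density proportional to $|g_\beta(t)h_\pm(t')|$. For each draw I would estimate the two ordered traces $\tr[\vrho\,\vU_1\vU_2]$ and $\tr[\vrho\,\vU_2\vU_1]$, where $\vU_1 := e^{\ri\vG'_\ell t}\vO^\dagger e^{-\ri\vG'_\ell t}$ and $\vU_2 := e^{\ri\vK'_\ell(t+t')}\vA e^{-\ri\vK'_\ell(t+t')}$. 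Since $\vO$ and $\vA$ are constant multiples of Paulis in the intended application, the matrix parts of $\vU_1,\vU_2$ are unitary and each trace is estimated in the standard way by a Hadamard test in the $X$- and $Y$-basis ancilla measurements (to capture both real and imaginary parts) with single-shot magnitude bounded by $O(\|\vO\|\|\vA\|)$. Rescaling by $Z_\pm$ and the signs $\sgn(g_\beta(t))\sgn(h_\pm(t'))$ yields an unbiased estimator of $Q$ with per-shot variance $O\!\left(\frac{e^{\beta\Omega'}\|\vO\|^2\|\vA\|^2}{\beta}\right)$, so Hoeffding gives the advertised sample complexity $N=\CO\!\left(\frac{e^{\beta\Omega'}\|\vO\|^2\|\vA\|^2}{\beta\epsilon^2}\log(1/\delta)\right)$.

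For gates, each shot requires controlled evolutions $e^{\pm \ri\vG'_\ell t}$ and $e^{\pm \ri\vK'_\ell(t+t')}$ for times at most $T+T'=O(\beta\log(1/\epsilon))$, all supported inside $V(\ell,\vO)\cup V(\ell,\vA)$. Standard local Hamiltonian simulation (truncated-Taylor / LCU) realizes these to precision $\epsilon_{\mathrm{sim}}=\epsilon/N$ with $\mathrm{poly}(V(\ell),\beta,\log(1/\epsilon))$ gates, and the $\log(1/\kappa)$ factor enters only through loading the $\kappa$-net coefficients of $\vG'_\ell,\vK'_\ell$ to adequate precision in the simulator. Multiplying the $N$ shots by the per-shot gate count and absorbing the $e^{\beta\Omega'/2}\|\vO\|\|\vA\|/\sqrt{\beta}$ prefactor into the polynomial gives the stated total gate bound.

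The step requiring care — and the main place the proof has to be carefully written — is verifying that the Hadamard test applied to the two ordered products $\vU_1\vU_2$ and $\vU_2\vU_1$ (which have Pauli insertions sandwiched between the two halves of each Heisenberg evolution) produces an unbiased single-shot estimator of magnitude at most $O(\|\vO\|\|\vA\|)$. This is handled by block-encoding the Pauli factors $\vO^\dagger,\vA$ as controlled single-qubit operations between the two segments $e^{\ri \vG'_\ell t}$ and $e^{-\ri \vG'_\ell t}$ (respectively $e^{\ri\vK'_\ell(t+t')}$ and $e^{-\ri\vK'_\ell(t+t')}$) in a single Hadamard-test circuit. Everything else — truncation, variance, and local Hamiltonian simulation — is routine, and I would allocate $\epsilon/3$ to each of truncation, Monte Carlo, and simulation error to conclude.
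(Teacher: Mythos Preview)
Your approach is correct and reaches the stated bounds, but it differs from the paper's proof in how the time integral is handled. The paper bounds the operator norm $\|\vQ(\vO,\vG'_\ell,\vA,\vK'_\ell)\|\lesssim e^{\beta\Omega'/2}\|\vA\|\|\vO\|/\sqrt{\beta}$ directly by integrating the weight bounds~\eqref{eq:h_bounds}, then invokes the black-box fact that a bounded observable can be estimated with $\CO(\|\vQ\|^2\epsilon^{-2}\log(1/\delta))$ copies; for gates, it \emph{deterministically} discretizes the truncated double integral into a Riemann sum $\tilde{\vQ}_\Delta$ and block-encodes the whole sum via LCU. You instead treat $(t,t')$ as a random variable drawn from the density $|g_\beta(t)h_\pm(t')|/Z_\pm$ and run one Hadamard test per shot on the sampled unitary product. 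Both routes give identical sample complexity because your $Z_\pm$ equals (up to constants) the paper's $\|\vQ\|$ bound; your method avoids building a large LCU and is more economical in ancillas and per-shot circuit depth, while the paper's block-encoding route is more modular and does not rely on $\vO,\vA$ being proportional to unitaries. One small omission: your claim $T,T'=\mathrm{poly}(\beta,\log(1/\epsilon))$ should also track $\Omega'$ (the truncation tail carries a prefactor $e^{\beta\Omega'/2}$, so $T\sim\beta^2\Omega'+\beta\log(1/\epsilon)$), but this is harmlessly absorbed into the polynomial since $e^{\beta\Omega'/2}/\sqrt{\beta}$ is already one of the stated arguments.
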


Since the measurements are local, we may measure multiple identifiability observables $Q$ in parallel, given some understanding of how the observables overlap with each other. The sample, time complexity, and performance guarantee directly follow from the above lemma.

\begin{alg}[Measuring all $Q$]\label{alg:measureQ} 
On a bounded degree interaction graph (\autoref{sec:Ham}),
consider a set $S$ of identifiability observables $Q(\vO,\vG'_\ell,\vA,\vK'_\ell)$ for single-site Pauli $\vA \in \{\vX_i,\vY_i,\vZ_i\}, i \in \Lambda$, nonzero $\vO = [\vA,\vP_\gamma]$, $\gamma \in \Gamma$, and $\vG'_\ell$, $\vK'_\ell$ supported on $V(\ell,\vO)$, $V(\ell,\vA)$, respectively. Let $\chi-1$ be the maximum number of $Q$'s that overlap with a single $Q$. Let $\epsilon$ be the precision parameter and $p_\mathrm{fail}$ be the desired probability of failure.
\begin{enumerate}
    \item (Partition into non-overlapping subsets) Partition $S$ into subsets $S_1,\cdots,S_\chi$ such that within each subset $S_i$, the identifiability observables $Q$ are non-overlapping.
    \item (Parallel measurements) For each $S_i$, perform the algorithm from~\autoref{lem:measure-single-Q} with precision $\epsilon$ and $\delta = p_\mathrm{fail}/|S|$ in parallel and output the estimate $Q_\mathrm{exp}$ for each $Q$ in the subset. 
\end{enumerate}

\vspace{0.1in}

\noindent\textbf{Complexity.} The algorithm uses $\CO(\chi \frac{e^{\beta \Omega'}}{\beta \epsilon^2} \log (|S|/p_\mathrm{fail}))$ copies of $\vrho$ and $$\mathrm{poly}\L(\beta, V(\ell),\log(\frac{1}{\kappa \epsilon}), \frac{e^{\beta \Omega'/2}}{\sqrt{\beta}}\R) \frac{|S|\log (|S|/p_\mathrm{fail})}{\epsilon^2}$$ elementary quantum gates and classical processing time.

\vspace{0.1in}

\noindent\textbf{Guarantee.} With probability $1-p_\mathrm{fail}$, it holds for each $Q(\vO,\vG'_\ell,\vA,\vK'_\ell) \in S$ that the corresponding estimate $Q_\mathrm{exp}(\vO,\vG'_\ell,\vA,\vK'_\ell)$ returned in step 2 satisfies
\begin{align}
\labs{Q_\mathrm{exp}(\vO,\vG'_\ell,\vA,\vK'_\ell) - Q(\vO,\vG'_\ell,\vA,\vK'_\ell)}\le \epsilon.
\end{align}
\end{alg}

\subsection{A simple local learning algorithm for Hamiltonians with any connectivity}
\label{sec:local_anygraph}
We are now ready to give a local learning algorithm for quantum Gibbs states with any interaction graph with a bounded interaction degree $d$. In~\autoref{sec:local_anygraph}, $\CO(\cdot)$ suppresses the dependence on the geometry (degree bound $d$ and locality $q$). We introduce the suitable absolute constants $c_1,c_2,c_3$ so that the error analysis is strictly controlled by  `$\leq$' in later calculations (instead of $\lesssim$).

\begin{cond}\label{cond:param-simple}
Assume $\beta \ge 1/d$.
In~\autoref{sec:local_anygraph}, we set the following parameters. For a target precision $\epsilon$, an auxilliary parameter $\alpha = 2de^{200(d+q)\beta \log d\beta}$, and tunable absolute constants $c_1 \leq c_2 \leq c_3$, set
    \begin{itemize}
    \item Frequency truncation: $e^{\Omega'/4d} = c_1 \cdot 5\beta  d^{4+16e^2d^4\beta^2}\alpha/\epsilon^2 \Longrightarrow \Omega' =\CO( \beta^2 +  \log 1/\epsilon )$.
    \item Search truncation radius: $\ell= c_2 \cdot 10d\beta (\beta\Omega' + \log(5\alpha/\beta\epsilon^2))  \Longrightarrow \ell =\CO(\beta^4 + \beta  \log 1/\epsilon) $.
    \item Epsilon net precision: $\kappa=\frac{\epsilon^2}{c_3 \cdot 40\alpha} e^{-\beta \Omega'/2}{\sqrt{\beta}}d^{-\ell-3} \Longrightarrow \kappa = \epsilon^{2+\CO(\beta)} 2^{-\CO(\beta^4)}$.
\end{itemize}

\end{cond}

The algorithm is very straightforward: for each local term, we search over the local neighbourhood and return an assignment of coefficients that minimizes the identifiability observable (against any local test Hamiltonian $\vG$).

\begin{alg}[Learning each local term locally]
\label{alg:constant_err} Consider the Hamiltonian with a bounded degree interaction graph (\autoref{sec:Ham}) and target an error budget $\epsilon$ for each $h_{\gamma}$.
    \begin{enumerate}
        \item (Measure in parallel) 
        Perform experiments (\autoref{alg:measureQ}) to measure all observables 
        \begin{align}
        Q_{exp}([\vA,\vP_{\gamma}],\vG'_{\ell},\vA,\vK'_{\ell}) \quad \text{over inputs}\quad i\in \Lambda, \quad \vA\in\{\vX_i,\vY_i,\vZ_i\}, \quad \gamma: \gamma \sim i, \quad \vG'_{\ell},\vK'_{\ell}\in N_\kappa.
    \end{align}
    to precision $ \epsilon^2 \sqrt{\beta}/20\alpha$ and failure probability $1-p_\mathrm{fail}$.
    \item (Identify local terms) For each $i \in \Lambda$:
    \begin{itemize}
        \item  Identify the parameters of the Hamiltonian $\vK'_{\ell}$ which attains the weakest $Q$ for all $\vA,\gamma,\vG'_{\ell}$
    \begin{align}
        \min_{\vK'_{\ell}} \max_{\vA,\gamma,\vG'_{\ell}} \labs{Q_{exp}([\vA,\vP_{\gamma}],\vG'_{\ell},\vA,\vK'_{\ell})}.
    \end{align}
    \item     Record the terms in $\vK'_{\gamma}$ that acts on qubit $i$, and set  
    \begin{align}
        h'_{\gamma} \leftarrow k'_{\gamma}.
    \end{align}
    \end{itemize}
    \item Return the collection of coefficients $\{h'_{\gamma}\}_{\gamma \in \Gamma}.$
    \end{enumerate}
\end{alg}
\begin{rmk}
    The same coefficient $h_{\gamma}$ may be updated multiple times as we sweep through various sites $i$ near $\gamma$. In fact, any such $h'_{\gamma}$ is guaranteed to be close to the ground truth, and we merely need to return any one of them. We are also throwing away large chunks of $\vK'_{\ell}$ which do not act on a given site $i$. 
    \end{rmk}

\begin{thm}[Learning quantum Gibbs states locally - Thm \ref{thm:graphthm}]\label{thm:technical_graph}
Consider Gibbs state $\vrho_{\beta}$ for a Hamiltonian $\vH$ with constant locality $q$ and a bounded interaction degree $d$ (\autoref{sec:Ham}) at inverse temperature $\beta$. With the parameters from~\autoref{cond:param-simple},
    \autoref{alg:constant_err} learns an approximation $\vH'$ to the ground truth $\vH$ such that 
    \begin{align}
        \labs{h_{\gamma}-h'_{\gamma}} \le \epsilon \quad \text{for all}\quad \gamma \in \Gamma,\quad \text{with probability}\quad 1-p_{fail}
    \end{align}
using 
\begin{center}
$\CO\L( 2^{2^{\CO(\beta^4)} \mathrm{poly}(1/\epsilon\beta)}  \log (n/p_\mathrm{fail}) \R)$ copies of $\vrho$, and\\
$\CO\L(n\cdot 2^{2^{\CO(\beta^4)} \mathrm{poly}(1/\epsilon\beta)} \log(n/{p_\mathrm{fail}})\R)$ runtime. 
\end{center}
Furthermore, it performs coherent quantum measurements on at most $2^{\CO(\beta^4 + \max(\beta,1/d)\log(1/\epsilon))}$ qubits.  
\end{thm}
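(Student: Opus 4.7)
The plan is to establish correctness by chaining four ingredients — existence of a good guess (\autoref{lem:exsitence}), uniqueness (\autoref{lem:unique}), local faithfulness of the KMS norm (\autoref{lem:KMSvariance}), and the commutator-to-coefficient conversion (\autoref{lem:localcloseness}) — and to cost the algorithm by applying the parallel measurement routine (\autoref{alg:measureQ}) to the combinatorial search space defined by the $\kappa$-net. The parameters in \autoref{cond:param-simple} are calibrated so that the four independent sources of error (Lieb–Robinson truncation, high-frequency tail from \autoref{lem:high_freq}, net rounding, and statistical noise) each contribute at most $\CO(\epsilon^2/\alpha)$ to the final bound, where $\alpha$ absorbs the faithfulness penalty of \autoref{lem:KMSvariance}.

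\textbf{Correctness.} Fix a site $i\in\Lambda$, a Pauli $\vA \in \{\vX_i,\vY_i,\vZ_i\}$, and $\vO = [\vA,\vP_\gamma]$ for some $\gamma \sim i$. By \autoref{lem:exsitence}, the net-rounding of $\vH_\ell$ provides a candidate $\vK'_\ell$ achieving $\labs{Q(\vO,\vG'_\ell,\vA,\vK'_\ell)} \le \epsilon^2\sqrt{\beta}/(5\alpha)$ uniformly over $\vG'_\ell$ in the net, once $\Omega',\ell,\kappa$ are as in \autoref{cond:param-simple}. Since \autoref{alg:measureQ} returns each $Q_\mathrm{exp}$ to additive precision $\epsilon^2\sqrt{\beta}/(20\alpha)$ with total failure probability $p_\mathrm{fail}$, the empirical minimax $\vK^*_\ell$ selected in Step~2 of \autoref{alg:constant_err} also satisfies the true bound $\labs{Q(\vO,\vG'_\ell,\vA,\vK^*_\ell)} \le \epsilon^2\sqrt{\beta}/(2\alpha)$ for every net $\vG'_\ell$. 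Plugging into \autoref{lem:unique} — with $\vH'$ extended from $\vK^*_\ell$ arbitrarily outside the ball, since the extension does not affect the local commutator — gives $\labs{\braket{\vO,[\vA,\vH-\vK^*_\ell]}_{\vrho}} = \CO(\epsilon^2/\alpha)$, each of the four error terms being individually controlled by \autoref{cond:param-simple} and \autoref{lem:high_freq}. The relaxation~\eqref{eq:relaxing_square} then gives $\norm{[\vA,\vH-\vK^*_\ell]}_\vrho^2 = \CO(\epsilon^2/\alpha)$. Applying \autoref{lem:KMSvariance} to the constant-support operator $\vB = [\vA,\vH-\vK^*_\ell]$ pays exactly the $\alpha$ factor and yields $\norm{\vB}_{\vtau} \le \epsilon$. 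Finally \autoref{lem:localcloseness}, applied over the three Paulis at $i$, recovers $\labs{h_\gamma - k^*_\gamma}\le\epsilon$ for every $\gamma$ touching $i$; sweeping over all sites $i$ learns every coefficient.

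\textbf{Complexity and obstacle.} The per-site search space has size $\labs{N_\kappa}^{2V(\ell)}\cdot\CO(1)$ when ranging over $(\vG'_\ell,\vK'_\ell)$ and the $\CO(1)$ relevant $(\vA,\vO)$, so the total number of measured observables is $\labs{S} = n\cdot(2/\kappa)^{2V(\ell)}$; with \autoref{cond:param-simple} this gives $\log\labs{S} = \log n + 2^{\CO(\beta^4)}\mathrm{poly}(1/\beta\epsilon)$. All observables live on balls of radius $\ell$ around single sites of $\Lambda$, so the overlap parameter $\chi$ in \autoref{alg:measureQ} is at most $V(2\ell)\cdot\labs{N_\kappa}^{2V(\ell)}$, again absorbed into the same double-exponential factor. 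Substituting into \autoref{alg:measureQ} with measurement precision $\epsilon^2\sqrt{\beta}/(20\alpha)$ and per-observable failure probability $p_\mathrm{fail}/\labs{S}$ gives the claimed sample complexity $\CO\bigl(2^{2^{\CO(\beta^4)}\mathrm{poly}(1/\beta\epsilon)}\log(n/p_\mathrm{fail})\bigr)$ and a time complexity with an additional factor of $n$ from iterating Step~2 over sites; the measurement locality is $V(\ell) = 2^{\CO(\beta^4+\beta\log(1/\epsilon))}$. The main obstacle is the error-budget bookkeeping: the four error sources must be simultaneously balanced against the $\alpha$ penalty, and in particular the double-exponential-in-$\beta^4$ inflation of the search space is unavoidable precisely because $\alpha$ is exponential in $\beta$ while \autoref{lem:KMSvariance} forces $\epsilon^2/\alpha$-level precision before rotation. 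Fortunately, $\alpha$ depends only on $\beta$ and the constant support size of $\vB$, not on the system size $n$, which is what enables the $\log n$ sample complexity.
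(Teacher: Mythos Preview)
The proposal is correct and follows essentially the same route as the paper: existence (\autoref{lem:exsitence}) to guarantee a net candidate with small $Q$, a min--max argument together with the measurement precision to certify the selected $\vK^*_\ell$, uniqueness (\autoref{lem:unique}) to bound the KMS inner product, then \autoref{lem:KMSvariance} and \autoref{lem:localcloseness} to extract coefficient error; the complexity analysis likewise counts the net over $V(\ell)$ terms and the overlap $\chi$ before invoking \autoref{alg:measureQ}. Two small points: your search-space exponent should be $V(\ell)(d{+}1)$ rather than $2V(\ell)$ (since $\vG'_\ell$ is centered at $\vO$ with slightly larger support), though this is absorbed into the same asymptotic; and you omit the bootstrapping step for $\beta<1/d$, where the paper rescales $\beta\leftarrow 1/d$ and $\epsilon\leftarrow\epsilon\beta d$ to justify the $\max(\beta,1/d)$ in the qubit count.
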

The dependence on the precision $1/\epsilon$ is exponential, arising from the volume of radius $\sim\log(1/\epsilon)$ on an expander graph. Still, for any constant $\epsilon,$ we only search for a constant-sized neighborhood, and each search is run completely independently of the others.
\begin{proof} 
We consider each of precision, sample complexity and runtime separately.

\vspace{0.1in}

\noindent\textbf{Precision guarantee.} For each site $i\in \Lambda,$ according to~\autoref{lem:exsitence}, there exist a guess Hamiltonians $\vK'_{\ell}$ such that, for all Pauli $\vA\in\{\vX_i,\vY_i,\vZ_i\}$, term $\vO= [\vA,\vP_{\gamma}]$ for $\gamma \sim \vA$, and Hamiltonians $\vG'_\ell$,
\begin{align}
\labs{Q(\vO,\vG'_\ell,\vA,\vK'_\ell)} \lesssim \frac{e^{\beta \Omega'/2}}{\sqrt{\beta}} \L(d e^{-\pi\ell/e^2d\beta} + \frac{\kappa \beta}{d}(V(\ell,\vA)+V(\ell,\vO)) \R)
\end{align}
With the parameter setting and sufficiently large $c_1 \leq c_2 \leq c_3$ in~\autoref{cond:param-simple},
\begin{align}
\labs{Q(\vO,\vG'_\ell,\vA,\vK'_\ell)} \leq\epsilon^2 \sqrt{\beta}/10\alpha.
\end{align}
Hence, the $\vK'_\ell$ returned in step 2 of the algorithm satisfies
\begin{align}
\labs{Q([\vA,\vP_\gamma],\vG'_\ell,\vA,\vK'_\ell)}  \leq \epsilon^2\sqrt{\beta}/5\alpha \qquad \text{for every } \vA,\gamma, \vG'_\ell. 
\end{align}
\autoref{lem:unique} then implies that
\begin{align}
    \labs{\braket{ \vO,[\vA,\vH-\vK'_\ell] }_{\vrho}} \lesssim \frac{e^{\beta \Omega'/2}}{\beta}  e^{-\pi\ell/e^2d\beta}+ \beta e^{-\Omega'/4d} d^{4+16e^2d^4\beta^2}  + \frac{\kappa}{d} e^{\beta \Omega'/2} ( V(\ell,\vO)+V(\ell,\vA)) + \epsilon^2/5\alpha.
\end{align}
With the parameter setting and sufficiently large $c_1 \leq c_2 \leq c_3$ in~\autoref{cond:param-simple} (more specifically, we first choose $c_1$, then $c_2$, then $c_3$),
\begin{align}
    \labs{\braket{ \vO,[\vA,\vH-\vK'_\ell] }_{\vrho}} \leq \epsilon^2/5\alpha + \epsilon^2/5\alpha+ \epsilon^2/5\alpha+ \epsilon^2/5\alpha \leq \epsilon^2/\alpha.
\end{align}
    Moreover,
    \begin{align}
        \e^{-200(d+q)\beta \log d\beta }\norm{[\vA,\vH-\vH']}_{\vtau}^2 \leq  \norm{[\vA,\vH-\vK'_\ell]}^2_{\vrho} \leq 2d\labs{\braket{ [\vA,\vH- (\vH_0 + \eta \vU'_{\ell_0})],\vO }_{\vrho}},
    \end{align}
    where the first inequality uses~\autoref{lem:KMSvariance} and the assumption $\beta d \geq 1$.
    Therefore, by~\autoref{lem:localcloseness} it holds for each $i$ and $\gamma \sim i$ that
    \begin{align}
        |h_\gamma - h_\gamma'| \leq  \e^{100(d+q)\beta \log d\beta } \sqrt{2d\epsilon^2/\alpha} \leq \epsilon.
    \end{align}
\vspace{0.1in}

\noindent\textbf{Sample complexity.} We want to measure $Q(\vO,\vG'_\ell,\vA,\vK'_\ell)$ for each single-site Pauli $\vA \in \{\vX_i,\vY_i,\vZ_i\}, \forall i$, each $\vO = [\vA,\vP_\gamma]$, and correspondingly each $\vG'_\ell$, $\vK'_\ell$ from the net $N_\kappa$. There are $3nd$ choices of the pair $\vA$ and $\vO$, and for each such choice there are $\lceil2/\kappa\rceil^{V(\ell)}$ choices of $\vK'_\ell$ and $\lceil2/\kappa\rceil^{V(\ell)d}$ choices of $\vG'_\ell$, totaling at most
\begin{align}
    3nd \lceil2/\kappa\rceil^{V(\ell)(d+1)} =\CO(n\cdot 2^{2^{\CO(\beta^4)} \mathrm{poly}(1/\epsilon)} )
\end{align}
operators $Q(\vO,\vG'_\ell,\vA,\vK'_\ell)$ to be measured. We can measure a large number of them simultaneously because each $Q(\vO,\vG'_\ell,\vA,\vK'_\ell)$ overlaps with at most
\begin{align}
    \chi \leq (d+1)V(2\ell)\lceil2/\kappa\rceil^{V(\ell)(d+1)} = 2^{2^{\CO(\beta^4)} \mathrm{poly}(1/\epsilon)}
\end{align}
others. Hence the sample complexity of the measurements (\autoref{alg:measureQ}) is
    \begin{align}
        \CO\left( \chi\cdot \frac{e^{\beta \Omega'}}{\beta (\sqrt{\beta}\epsilon^2/20\alpha)^2} \log (\CO(n\cdot 2^{2^{\CO(\beta^4)} \mathrm{poly}(1/\epsilon)} )/p_\mathrm{fail}) \right) = \CO\L( 2^{2^{\CO(\beta^4)} \mathrm{poly}(1/\epsilon)}  \log (n/p_\mathrm{fail}) \R)
    \end{align}

\vspace{0.1in}

\noindent\textbf{Runtime.} Direct substitution of the parameters from~\autoref{cond:param-simple} gives a runtime of $\CO(n\cdot 2^{2^{\CO(\beta^4)} \mathrm{poly}(1/\epsilon)} ) \cdot \mathrm{log}(n/p_\mathrm{fail})$.

\vspace{0.1in}

\noindent\textbf{Bootstrapping to the case $\beta < 1/d$.} As mentioned earlier, in this case we can rescale $\beta \leftarrow 1/d$ and $h_\gamma \leftarrow h_\gamma \cdot \beta d$. We apply the same algorithm as above, with precision redefined as $\epsilon \leftarrow \epsilon \cdot \beta d$.
\end{proof}

\subsection{An efficient high-precision learning algorithm for \texorpdfstring{$D$}{d}-dimensional lattices}
\label{sec:high_precision_algorithm}
We have seen that, to learn each local coefficient $h_{\gamma}$ to constant error $\epsilon=0.1$, it suffices to search over terms in a constant radius. However, at higher precisions $\epsilon \ll 1$, the algorithmic costs deteriorate super-polynomially due to the decay rate of Lieb-Robinson bounds on highly connected graphs. In this section, we show how to significantly improve the error dependence in the case of $D$-dimensional lattices (see \autoref{sec:Ham}) by a more refined locality estimate of Lieb-Robinson bounds. 

Suppose that we have already achieved a decent constant precision for every local coefficient (say, to an error of $0.1$). That is, we know that the ground truth $\vH$ satisfies
\begin{align}
\vH=\vH_0+\eta \vV,    
\end{align}
where $\vH_0$ is the current guess and $\vV$ is an unknown Hamiltonian (with the same interaction graph) such that each term $\|\vV_\gamma\| \leq 1$. Now, we would like to learn more information about $\vV$ and improve the learning error to $\eta/2$. Then, in our new guess $\vH_0+\eta \vU,$ not only do we have a smaller parameter space to search for $\vU$, but we also expect the identifiability observable $Q$ to depend most sensitively on closer terms. Based on this intuition, we propose a learning procedure that iteratively doubles the precision (see~\autoref{fig:local_learning}). Crucially, in each learning iteration, the search radius $\ell_0$ can be chosen to be \textit{independent} of the target learning error $\epsilon$ and only dependent on the geometry. Consider $Q(\vO,\vG,\vA,\vK)$ and $Q(\vO, \vG', \vA, \vK')$ with the inputs
\begin{align}
 \vG&=\vH_0+\eta \vW, \quad\vK=\vH_0+\eta \vU   \\
 \vG'&=\vH_0+\eta\vW_{\ell_0}, \quad \vK'=\vH_0+\eta \vU_{\ell_0}.
\end{align}
Then, by choosing a suitable search radius $\ell_0$ that only depends on $\beta$ and the geometry of the interaction, the truncation error will only contribute by a small fraction of the targeted error $\eta/2$ (by Item (B) of \autoref{lem:stability_Q}):
\begin{align}
    |Q(\vO,\vG,\vA,\vK) - Q(\vO, \vG', \vA, \vK')|
    &\lesssim \eta \frac{e^{\beta \Omega'/2}}{\sqrt{\beta}} \sum_{\ell=\ell_0}^{\infty} (S(\ell,\vA)+S(\ell,\vO)) (\beta+\frac{\ell}{d})(e^{-\ell^2/16e^4d^2\beta^2}+e^{-\pi\ell/2e^2d\beta})\\
    &= \text{(small factor independent of $\eta$)}\cdot\eta/2.
\end{align}

\begin{rmk} On an expander graph, the surface area scales as $\sim  d^{\ell+1}$, which grows faster than $e^{-\pi\ell/2e^2\beta}$ at low temperatures, and the RHS above is vacuous as an upper bound, hence the restriction to lattices. It is an interesting question to obtain a near-optimal learning algorithm for general graphs using a similar iterative approach.
\end{rmk}

Therefore, we have effectively reduced the problem to searching for $\vU_{\ell_0}$ over the radius $\ell_0$ such that $Q(\vO, \vG', \vA, \vK')$ is a fraction of $\eta$ for all $\vG'$ of the form $\vG'= \vH_0+ \eta\vW_{\ell_0}$. More precisely, let 
\begin{align}
\vU_{\ell_0} =\sum_{\gamma: \operatorname{dist}(\gamma, \vA) < \ell_0-1} u_\gamma \vP_\gamma\quad \text{and}\quad \vW_{\ell_0} =\sum_{\gamma: \operatorname{dist}(\gamma, \vO) < \ell_0-1} w_\gamma \vP_\gamma.\label{eq:UW}
\end{align}
The parameters $u_\gamma, w_\gamma$ are searched over a discrete epsilon net $N_{\kappa_0}$ of constant precision $\kappa_0$.

For bookkeeping, we display the choice of parameters of this section as follows, which are all independent of the system size $n$ and error $\epsilon$. In the present~\autoref{sec:high_precision_algorithm}, $\CO(\cdot)$ suppresses the dependence on the geometry (the lattice dimension $D$, degree bound $d$, and locality $q$). We introduce the suitable constants $c_1,c_2,c_3$ which depend only on $D,q,d$ so that the error analysis is strictly controlled by  `$\leq$' in later calculations (instead of $\lesssim$).

\begin{cond}\label{cond:param-iterative} Assume $\beta \geq 1/d$. In the rest of~\autoref{sec:high_precision_algorithm}, we set the following parameters. Let $\alpha =2d\e^{ 200(d+q)\beta \log d\beta} $ be an auxiliary parameter. For tunable constants $c_1 \ll c_2 \ll c_3$ (that may depend on $D,d,q$), 
    \begin{itemize}
    \item Frequency truncation: $\e^{\Omega'/4d} = c_1 \cdot \beta  d^{4+16\e^2d^4\beta^2}\alpha$ $ \Longrightarrow \Omega' =\CO(\beta^2) $.
    \item Search truncation radius: $\ell_0=c_2 \cdot 100D!d\beta (\beta \Omega' + \log (\alpha/\beta))  $ $ \Longrightarrow \ell_0 =\CO(\beta^4)$.
    \item Epsilon net precision: $\kappa_0=\frac{1}{c_3 \cdot \alpha} \ell_0^{-D-2} \e^{-\beta \Omega'/2} $ $\Longrightarrow  \kappa_0= \e^{-\CO(\beta^3)}$.
\end{itemize}

\end{cond}

\subsubsection{Existence and uniqueness of test Hamiltonians under perturbation}
Here, we derive the analog of the existence and uniqueness property of $Q$ (\autoref{sec:exist_unique}), assuming that the guess is already pretty good. We only require the bounds to be a fraction of $\eta$, and a much smaller search radius suffices.

\begin{lem}[Existence of good local guess $\vU_{\ell_0}$ on the epsilon net] 
\label{lem:existence-iterative} Assume that $\beta \geq 1/d$. Consider the parameter choice from~\autoref{cond:param-iterative}. 
For every $\vA,\vO$ such that $\norm{\vA},\norm{\vO}\le 1$, there exists a $\vU'_{\ell_0} =\sum_{\gamma: \operatorname{dist}(\gamma, \vA) < \ell } u'_\gamma \vP_\gamma$ with $u'_{\gamma} \in N_{\kappa_0}$ such that for every $\vW'_{\ell_0} =\sum_{\gamma: \operatorname{dist}(\gamma, \vO) < \ell } w'_\gamma \vP_\gamma$ with $w'_{\gamma} \in N_{\kappa_0}$, the identifiability observable satisfies
\begin{align}\labs{Q(\vO,\vH_0 + \eta \vW'_{\ell_0},\vA,\vH_0 + \eta \vU'_{\ell_0})} \leq \eta\frac{\sqrt{\beta}}{20\alpha}.
\end{align}
\end{lem}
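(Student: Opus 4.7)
The plan is to mimic the proof of Lemma~\ref{lem:exsitence} but in the refined iterative setting. Since $\vH = \vH_0 + \eta \vV$ with each $\|\vV_\gamma\| \le 1$ by assumption, the natural candidate for $\vU'_{\ell_0}$ is obtained by restricting $\vV$ to the ball of radius $\ell_0-1$ around $\vA$ and rounding each surviving coefficient $v_\gamma$ to its nearest point $u'_\gamma \in N_{\kappa_0}$ (so $|v_\gamma - u'_\gamma| \le \kappa_0$). This fixes $\vU'_{\ell_0}$ once and for all, and the required uniform bound on $|Q(\vO, \vH_0+\eta\vW'_{\ell_0}, \vA, \vH_0+\eta\vU'_{\ell_0})|$ must then be established for every admissible $\vW'_{\ell_0}$.

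First I would invoke Lemma~\ref{lem:existence-global} for the exact identity $Q(\vO, \vH_0+\eta\vW'_{\ell_0}, \vA, \vH) = 0$, valid uniformly in $\vW'_{\ell_0}$ because the fourth slot holds the true Hamiltonian. What remains is to control the change of the fourth argument from $\vH = \vH_0+\eta\vV$ to $\vH_0+\eta\vU'_{\ell_0}$, with the first three arguments held fixed. I would split this change as
\begin{align}
\vH_0 + \eta\vV \;\longrightarrow\; \vH_0 + \eta\vV_{\ell_0} \;\longrightarrow\; \vH_0 + \eta\vU'_{\ell_0},
\end{align}
where $\vV_{\ell_0}$ is the plain restriction to the ball. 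The first step is an \emph{extensive} perturbation supported outside the ball with per-term norm at most $\eta$, while the second is supported inside with per-term deviation at most $\eta\kappa_0$.

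For the first step I apply Item~(B) of Lemma~\ref{lem:stability_Q} with $\kappa=\eta$. On a $D$-dimensional lattice $S(\ell,\cdot)\lesssim \ell^{D-1}$, so the resulting tail sum $\sum_{\ell\ge\ell_0} \ell^{D-1}(\beta+\ell/d)(e^{-\ell^2/16e^4 d^2\beta^2}+e^{-\pi\ell/2e^2 d\beta})$ is geometric and dominated by its first term; the parameter choice $\ell_0=c_2\cdot 100D!\,d\beta(\beta\Omega'+\log(\alpha/\beta))$ from Condition~\ref{cond:param-iterative}, with $c_2$ sufficiently large, then forces the full contribution below $\eta\sqrt{\beta}/(40\alpha)$ after multiplying by the prefactor $\eta\, e^{\beta\Omega'/2}/\sqrt{\beta}$. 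For the second step I apply Item~(C) with $\kappa=\eta\kappa_0$, yielding a bound of order $\eta\kappa_0\sqrt{\beta}\,e^{\beta\Omega'/2}V(\ell_0)$; using $V(\ell_0)\lesssim\ell_0^D$ and $\kappa_0 = (c_3\alpha)^{-1}\ell_0^{-D-2}e^{-\beta\Omega'/2}$ this is again at most $\eta\sqrt{\beta}/(40\alpha)$ for large enough $c_3$. Summing these two contributions with the vanishing $Q$ at ground truth produces the advertised bound $\eta\sqrt{\beta}/(20\alpha)$.

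The main obstacle is the first step: the extensive truncation sum from Item~(B) would blow up like the entire volume if the surface scaling were exponential. This is precisely where lattice geometry enters—the polynomial $\ell^{D-1}$ is defeated by $e^{-\pi\ell/2e^2 d\beta}$ once $\ell_0\gg\beta$, and $\ell_0=\CO(\beta^4)$ is the right balance absorbing the exponential prefactor $e^{\beta\Omega'/2}=e^{\CO(\beta^3)}$. On an expander graph $S(\ell)\sim d^{\ell+1}$ would dominate the Lieb--Robinson decay, which is why the iterative refinement is limited to lattices.
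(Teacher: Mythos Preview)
Your proposal is correct and follows essentially the same route as the paper: start from $Q(\vO,\vG,\vA,\vH)=0$ via Lemma~\ref{lem:existence-global}, then use Item~(B) of Lemma~\ref{lem:stability_Q} with $\kappa=\eta$ for the extensive truncation $\vV\to\vV_{\ell_0}$ (controlling the lattice tail sum by the choice of $\ell_0$), and Item~(C) with $\kappa=\eta\kappa_0$ for the rounding $\vV_{\ell_0}\to\vU'_{\ell_0}$. Your remark on why the tail sum fails on expanders is also spot-on and echoes the paper's own observation.
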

\begin{proof}
    The proof is similar to that of~\autoref{lem:exsitence}. First, there exists a global $\vV$ such that $\vH_0 + \eta \vV=\vH$, and therefore $Q(\vO, \vG, \vA, \vH_0 + \eta \vV)=0$ for each $\vO,\vG,\vA$ due to~\autoref{lem:existence-global}. Next, Item (B) of~\autoref{lem:stability_Q} with the truncation radius $\ell_0$ guarantees that
    \begin{align}
        &\L|Q(\vO,\vG,\vA,\vH_0 + \eta \vV) - Q(\vO, \vG, \vA, \vH_0 + \eta \vV_{\ell_0})\R| \\
        &\lesssim \eta \frac{e^{\beta \Omega'/2}}{\sqrt{\beta}} \sum_{\ell=\ell_0}^{\infty} (S(\ell,\vA)+S(\ell,\vO)) (\beta+\frac{\ell}{d})(e^{-\ell^2/16e^4d^2\beta^2}+e^{-\pi\ell/2e^2d\beta})\\
    &\lesssim \eta \frac{e^{\beta \Omega'/2}}{\sqrt{\beta}} \sum_{\ell=\ell_0}^{\infty} \CO(\ell^{D-1}) (\beta+\frac{\ell}{d})(e^{-\ell^2/16e^4d^2\beta^2}+e^{-\pi\ell/2e^2d\beta})\\
    &\le  \CO(1)\cdot \eta \frac{e^{\beta \Omega'/2}}{\sqrt{\beta}} \L(\frac{\pi \ell_0 D!}{2e^2}\R)^{D} e^{-\pi\ell_0/2e^2d\beta} \tag*{(assuming $\ell_0 \geq 100 D!d\beta$)}\\
    & \le \CO(1)\cdot \eta\frac{\sqrt{\beta}}{
        40\alpha},
    \end{align} 
    for each $\vG$, including $\vH_0 + \eta \vW'_{\ell_0}$ from the net $N_{\kappa_0}$.
    Finally, we use Item (C) of~\autoref{lem:stability_Q} to round the above $\vV_{\ell_0}$ to the epsilon net to obtain $\vU'_{\ell_0}$ such that
    \begin{align}
        \L|Q(\vO,\vG,\vA,\vH_0 + \eta \vU_{\ell_0}) - Q(\vO, \vG, \vA, \vH_0 + \eta \vU'_{\ell_0})\R| &\lesssim \frac{\eta \kappa_0\sqrt{\beta}}{d} e^{\beta \Omega'/2} ( V(\ell_0,\vO)+V(\ell_0,\vA) ) \\
        &\le \CO(1) \frac{\eta \kappa_0\sqrt{\beta}}{d} e^{\beta \Omega'/2}  \ell_0^{D+1}
        \le  \CO(1) \cdot \eta\sqrt{\beta}/
        40\alpha.
    \end{align}
    For any fixed choice of $c_1$ (fixed $\Omega'$), we can choose the constants $ c_2 \ll c_3$ from~\autoref{cond:param-iterative} to be sufficiently large (we first choose $c_2 \gg c_1$, and then $c_3 \gg c_2$) such that $\lesssim$ can be replaced by proper $\leq$ in the above bounds.
    Finally, we collect the error terms to conclude the proof.
\end{proof}

\begin{lem}[KMS-local identifiability]\label{lem:KMS-local-identify} Assume that $\beta \geq 1/d$.
In the same setting as~\autoref{lem:existence-iterative}, suppose there is a local guess $\vU'_{\ell_0}$ from the epsilon net $N_{\kappa_0}$ such that $\labs{Q(\vO,\vH_0 + \eta \vW'_{\ell_0},\vA,\vH_0 + \eta \vU'_{\ell_0})} \leq \eta\sqrt{\beta}/10\alpha $ for every $\vW'_{\ell_0}$. Then, 
    \begin{align}
    \labs{\braket{ \vO,[\vA,\vH- (\vH_0 + \eta \vU'_{\ell_0})] }_{\vrho}} \leq  \frac{\eta}{5\alpha}.
\end{align}
\end{lem}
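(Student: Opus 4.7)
The plan is to mirror the proof of \autoref{lem:unique} but carefully track linear-in-$\eta$ scaling of every error. Writing $\vH' := \vH_0 + \eta\vU'_{\ell_0}$ and setting $\sigma = 1/\beta$, the identifiability equation (\autoref{lem:truncating_bohr}) reads
\begin{align}
\Theta(\sqrt{\beta})\,\braket{\vO, [\vA, \vH-\vH']}_{\vrho} \;=\; Q(\vO, \vH, \vA, \vH') + \frac{\beta}{2}\int_{|\omega'|\ge \Omega'}\braket{\vO, [\hat{\vA}_{\vH'}(\omega'), \vH-\vH']}_{\vrho}\, d\omega',
\end{align}
so the right-hand side must be bounded by $\CO(\eta\sqrt{\beta}/\alpha)$. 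I would distribute this budget across four pieces: truncating the first argument of $Q$ from the unknown $\vH$ to a radius-$\ell_0$ patch, rounding that patch onto the epsilon net $N_{\kappa_0}$, invoking the hypothesis, and controlling the high-frequency tail.

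For the $Q$-perturbations, write $\vH = \vH_0 + \eta\vV$ with $\|\vV_\gamma\|\le 1$, so that $\vH$ and every test input $\vH_0 + \eta\vW'_{\ell_0}$ share the common backbone $\vH_0$. Item (B) of \autoref{lem:stability_Q} then applies with \emph{perturbation amplitude} $\kappa = \eta$ (not $\CO(1)$), yielding
\begin{align}
|Q(\vO,\vH,\vA,\vH') - Q(\vO,\vH_0+\eta\vV_{\ell_0},\vA,\vH')| \;\lesssim\; \eta\frac{e^{\beta\Omega'/2}}{\sqrt{\beta}}\sum_{\ell\ge\ell_0}\CO(\ell^{D-1})(\beta+\ell/d)\L(e^{-\ell^2/16e^4d^2\beta^2}+e^{-\pi\ell/2e^2d\beta}\R),
\end{align}
which on $D$-dimensional lattices is dominated by its leading term and becomes a small fraction of $\eta\sqrt{\beta}/\alpha$ once $\ell_0 = \CO(\beta^4)$ as in \autoref{cond:param-iterative}. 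Item (C) of \autoref{lem:stability_Q} rounds $\vV_{\ell_0}$ to some $\vW'_{\ell_0}\in N_{\kappa_0}$ at cost $\lesssim \eta\kappa_0\sqrt{\beta}\,e^{\beta\Omega'/2}V(\ell_0)/d$, again small for $\kappa_0 = e^{-\CO(\beta^3)}$. The hypothesis then bounds the remaining $|Q(\vO,\vH_0+\eta\vW'_{\ell_0},\vA,\vH')|\le \eta\sqrt{\beta}/10\alpha$.

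The high-frequency tail is handled by applying \autoref{lem:high_freq} to $\vG := \vH - \vH' = \eta(\vV-\vU'_{\ell_0})$, whose local terms have norm $\le 2\eta$; linearity in $\vG$ gives a bound $\lesssim \eta\sqrt{\beta}\,d^{4+16e^2d^4\beta^2}e^{-\Omega'/4d}$, rendered negligible by $e^{\Omega'/4d} = c_1\beta d^{4+16e^2d^4\beta^2}\alpha$ from \autoref{cond:param-iterative}. Summing the four contributions and dividing by the $\Theta(\sqrt{\beta})$ prefactor yields the claimed $\eta/5\alpha$ bound, after tuning $c_1 \ll c_2 \ll c_3$ to convert each $\lesssim$ into $\le$. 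The main subtlety throughout is the linear-in-$\eta$ scaling of every error: it originates from the shared backbone $\vH_0$ (giving $\kappa = \eta$ in Item (B)) and from taking $\vG = \vH - \vH'$ rather than $\vH'$ alone in the high-frequency bound; without either, the iterative scheme would not be able to improve the precision below a constant.
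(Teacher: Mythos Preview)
Your proposal is correct and matches the paper's proof essentially step for step: the same decomposition via \autoref{lem:truncating_bohr}, the same use of Item (B) of \autoref{lem:stability_Q} (with $\kappa=\eta$ from the shared backbone $\vH_0$) to truncate $\vH\to\vH_0+\eta\vV_{\ell_0}$, Item (C) with $\kappa=\eta\kappa_0$ to round onto the net, the hypothesis, and \autoref{lem:high_freq} applied to $\vH-\vH'=\eta(\vV-\vU'_{\ell_0})$ to get the linear-in-$\eta$ tail bound. Your commentary on why each piece scales linearly in $\eta$ is exactly the point the paper is exploiting.
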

\begin{proof} Recall~\autoref{lem:truncating_bohr} and that we can write $\vH = \vH_0 + \eta \vV$,
\begin{align}
    \frac{\beta\sqrt{2\sigma\sqrt{2\pi}}}{2}\braket{ \vO, [\vA,\vH-(\vH_0 + \eta \vU'_{\ell_0})] }_{\vrho}= Q(\vO,\vH,\vA,\vH_0 + \eta \vU'_{\ell_0})+\frac{\eta \beta }{2}\int_{\labs{\omega'}\ge \Omega'} \braket{\vO,[\hat{\vA}_{\vH'}(\omega'),\vV-\vU'_{\ell_0}]}_{\vrho} \rd \omega'.
\end{align}
The high-frequency part can be bounded using~\autoref{lem:high_freq}, in particular with the choice of $\Omega'$ from~\autoref{cond:param-iterative} it is bounded by 
\begin{align}
    \left|\frac{\eta \beta }{2}\int_{\labs{\omega'}\ge \Omega'} \braket{\vO,[\hat{\vA}_{\vH'}(\omega'),\vV-\vU'_{\ell_0}]}_{\vrho} \rd \omega' \right|\le \CO(1)\cdot \eta \beta^{3/2} e^{-\Omega'/4d} d^{4+16e^2d^4\beta^2}  \le \CO(1)\cdot \eta\sqrt{\beta}/30\alpha.
\end{align}
We can choose the constant $c_1$ from~\autoref{cond:param-iterative} to be sufficiently large to obtain a proper $\leq$ bound
\begin{align}
    \left|\frac{\eta \beta }{2}\int_{\labs{\omega'}\ge \Omega'} \braket{\vO,[\hat{\vA}_{\vH'}(\omega'),\vV-\vU'_{\ell_0}]}_{\vrho} \rd \omega' \right| \leq \eta\sqrt{\beta}/30\alpha.
\end{align}
Next, using the stability of $Q$ (Item (B) of~\autoref{lem:stability_Q}) with the truncation radius $\ell_0$ from~\autoref{cond:param-iterative}, and for brevity letting $\vH'= \vH_0 + \eta \vU'_{\ell_0}$, we have
\begin{align}
    \labs{Q(\vO,\vH_0 + \eta \vV,\vA,\vH'} &\le \labs{Q(\vO, \vH_0 + \eta \vV,\vA,\vH')-Q(\vO, \vH_0 + \eta \vV_{\ell_0},\vA,\vH')}+ \labs{Q(\vO,\vH_0 + \eta \vV_{\ell_0},\vA,\vH')}\\
    &\le \CO(1) \cdot \eta\sqrt{\beta}/30\alpha + \labs{Q(\vO,\vH_0 + \eta \vV_{\ell_0},\vA,\vH')},
\end{align}
where the second line uses Item (B) of~\autoref{lem:stability_Q}
\begin{align}
    &\labs{Q(\vO, \vH_0 + \eta \vV,\vA,\vH')-Q(\vO, \vH_0 + \eta \vV_{\ell_0},\vA,\vH')}  \\ 
    &\lesssim \eta \frac{e^{\beta \Omega'/2}}{\sqrt{\beta}} \sum_{\ell=\ell_0}^{\infty} (S(\ell,\vA)+S(\ell,\vO)) (\beta+\frac{\ell}{d})(e^{-\ell^2/16e^4d^2\beta^2}+e^{-\pi\ell/2e^2d\beta})\\
    &\le \CO(1) \cdot \eta \frac{e^{\beta \Omega'/2}}{\sqrt{\beta}} \sum_{\ell=\ell_0}^{\infty} \ell^{D-1} (\beta+\frac{\ell}{d})(e^{-\ell^2/16e^4d^2\beta^2}+e^{-\pi\ell/2e^2d\beta})\\
     &\le  \CO(1)\cdot \eta \frac{e^{\beta \Omega'/2}}{\sqrt{\beta}} \L(\frac{\pi \ell_0 D!}{2e^2}\R)^{D} e^{-\pi\ell_0/2e^2d\beta} \tag*{(assuming $\ell_0 \geq 100 D!d\beta$)}\\
    &\le \CO(1) \cdot \eta\sqrt{\beta}/30\alpha .
\end{align}
For a fixed $\Omega'$ we can choose $c_2 \gg c_1$ from~\autoref{cond:param-iterative} to be sufficiently large to obtain a proper $\leq$ bound
\begin{align}
    \labs{Q(\vO,\vH_0 + \eta \vV,\vA,\vH'}
    \leq \eta\sqrt{\beta}/30\alpha + \labs{Q(\vO,\vH_0 + \eta \vV_{\ell_0},\vA,\vH')}.
\end{align}

Finally we use Item (C) of~\autoref{lem:stability_Q} to round $\vV_{\ell_0}$ to the epsilon net. There exists a $\vW'_{\ell_0}$ from the net $N_{\kappa_0}$ such that
\begin{align}
    \labs{Q(\vO,\vH_0 + \eta \vV_{\ell_0},\vA,\vH') - Q(\vO,\vH_0 + \eta \vW'_{\ell_0},\vA,\vH')} &\lesssim \frac{\eta\kappa_0 \sqrt{\beta}}{d} e^{\beta \Omega'/2} ( V(\ell_0,\vO)+V(\ell_0,\vA))\\
    & \le \CO(1) \frac{\eta\kappa_0 \sqrt{\beta}}{d} e^{\beta \Omega'/2} \ell_0^{D+1}  \le \CO(1)\cdot\eta \sqrt{\beta}/30\alpha.
\end{align}
For fixed $\Omega', \ell_0$, we can choose $c_3 \gg c_2$ from~\autoref{cond:param-iterative} to be sufficiently large to obtain
\begin{align}
    \labs{Q(\vO,\vH_0 + \eta \vV_{\ell_0},\vA,\vH') - Q(\vO,\vH_0 + \eta \vW'_{\ell_0},\vA,\vH')} \leq \eta \sqrt{\beta}/30\alpha.
\end{align}

Finally, we collect the error terms
\begin{align}
    \frac{\sqrt{2\beta\sqrt{2\pi}}}{2}|\braket{ \vO, [\vA,\vH-(\vH_0 + \eta \vU'_{\ell_0})] }_{\vrho}| \leq \eta\sqrt{\beta}/10\alpha 
\end{align}
and rearrange to conclude the proof.
\end{proof}

\subsubsection{The algorithm}

We now describe one iteration step that will reduce the current error $\eta$ to $\eta/2,$ using quasi-local measurements and local search. The main difference from the non-iterative algorithm (\autoref{alg:constant_err}) is that we are given a good guess $\vH_0$ already, and we only aim for doubling the precision in one iteration. While the measurements still involve $\log^D(1/\epsilon)$-sized neighborhood, we only vary the Hamiltonian over a much smaller radius $\ell_0$, on top of a background $\vH_0$. 

\begin{alg}[One iteration step for $D$-dimensional lattices]
\label{alg:learning-iteration} Consider the Gibbs state $\vrho$ for a $D$-dimension Hmailtonian $\vH$ at inverse temperature $\beta$ (\autoref{sec:Ham}). Suppose ground truth satisfies $\vH=\vH_0+\eta \vV$ for a known $\vH_0 = \sum_{\gamma\in \Gamma} h_{0,\gamma}\vP_{\gamma}$ and an unknown $\vV$ with same interaction graph as $\vH$ such that each term $\|\vV_\gamma\| \leq 1$. 

\begin{enumerate}
\item (Define identifiability observables) Set $\Omega', \ell_0, \kappa_0$ according to~\autoref{cond:param-iterative}. For each site $i\in \Lambda,$ and $\vA\in\{\vX_i,\vY_i,\vZ_i\}$ and adjacent terms $\gamma: \gamma \sim i$: 
\begin{itemize}
    \item Consider Hamiltonians of the form $\vG'= \vH_0 + \eta \vW'_{\ell_0}$ and $\vK'= \vH_0 + \eta \vU'_{\ell_0}$ where $\vW'_{\ell_0}$ (and $\vU'_{\ell_0}$) are the Hamiltonians supported on $V(\ell_0+1,\vA)$ (and $V(\ell_0,\vA)$, respectively), whose coefficients are taken from the net $N_{\kappa_0}$.
    \item Take a truncation radius $\ell =\CO(\beta^2 \Omega' +\beta \log\frac{\alpha}{\sqrt{\beta} \eta}) =\CO(\beta^4 + \beta \log 1/\eta)$ and consider $\vG'\rightarrow \vG'_\ell$, $\vK'\rightarrow \vK'_\ell$.
    \end{itemize}

    \item (Measure in parallel) Perform experiments (\autoref{alg:measureQ}) to measure all observables defined above to error $\epsilon=\sqrt{\beta} \eta/80\alpha$ and failure probability $p_\mathrm{fail}$
    \begin{align}
        Q_{exp}([\vA,\vP_{\gamma}],\vG'_{\ell},\vA,\vK'_{\ell}) \quad \text{over inputs}\quad i\in \Lambda, \quad\vA\in\{\vX_i,\vY_i,\vZ_i\}, \quad \gamma: \gamma \sim i, \quad \vW'_{\ell_0}, \vU'_{\ell_0} \in N_{\kappa_0}.
    \end{align}
    \item (Identify local terms) For each site $i$: 
    \begin{itemize}
        \item  Return the parameters of the Hamiltonian $\vU'_{\ell_0}$ which attains the weakest $Q$ for all $\vA,\gamma,\vG'_{\ell}$
    \begin{align}
        \min_{\vU'_{\ell_0}} \max_{\vA,\gamma,\vW'_{\ell_0}} \labs{Q_{exp}([\vA,\vP_{\gamma}],\vG'_{\ell},\vA,\vK'_{\ell})}.
    \end{align}
    \item     Record the Hamiltonian terms in $\vU'_{\ell_0}$ that acts on qubit $i$, and set 
    \begin{align}
        h'_{\gamma} \leftarrow h_{0,\gamma} + \eta  u_\gamma.
    \end{align}
    \end{itemize}
    \item Return the collection of coefficients $\{h'_{\gamma}\}_{\gamma \in \Gamma}.$
\end{enumerate}

\end{alg}

The precision-independent search radius yields significant reductions in the search space and improves the algorithmic costs.
\begin{thm}[Cost per learning iteration]\label{thm:learning-iteration} Assume $\beta \geq 1/d$.
    With probability $1-p_\mathrm{fail}$, \autoref{alg:learning-iteration} outputs the coefficients of the Hamiltonian $\vH'$ with error $\eta/2$ from those of $\vH$, using 
    \begin{center} 
    $\CO\L( \frac{e^{\CO(\beta^{cD})}}{\eta^2} \log^{D}(1/\eta) \log (n/p_\mathrm{fail})\R)$ copies of $\vrho$, and\\
     $\CO\L(n\log (n/p_\mathrm{fail})\cdot \frac{e^{\CO(\beta^{c'D})}}{\eta^2} \log^{c''D}(1/\eta)\R)$ runtime 
    \end{center}
    (including both quantum gate count and classical processing). Furthermore, it only involves coherent quantum measurements on at most $\CO((\beta^4+\beta \log 1/\epsilon)^D)$ qubits. Here, $c,c',c''$ are absolute constants.
\end{thm}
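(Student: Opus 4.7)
The plan is to mirror the three-step analysis used for Theorem \ref{thm:technical_graph} (precision, sample complexity, runtime), but now leveraging the sharper stability estimates for $D$-dimensional lattices and the iterative reduction to a fixed-radius search. I will first establish the coefficient-error guarantee, then count observables to derive sample and time costs.

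For the precision step, I would argue as follows. Fix a site $i$, a single-site Pauli $\vA\in\{\vX_i,\vY_i,\vZ_i\}$, and $\vO=[\vA,\vP_\gamma]$ with $\gamma\sim i$. \autoref{lem:existence-iterative} gives a net Hamiltonian $\vU'_{\ell_0}$ on $V(\ell_0,\vA)$ such that $|Q(\vO,\vH_0+\eta\vW'_{\ell_0},\vA,\vH_0+\eta\vU'_{\ell_0})|\le \eta\sqrt{\beta}/20\alpha$ for every $\vW'_{\ell_0}$ on the net. Item (A) of \autoref{lem:stability_Q}, together with the choice $\ell=\CO(\beta^4+\beta\log(1/\eta))$ in Step 1 of \autoref{alg:learning-iteration}, bounds the truncation $\vG'\!\to\!\vG'_\ell$, $\vK'\!\to\!\vK'_\ell$ contribution by $\eta\sqrt{\beta}/80\alpha$ (taking absolute constants large enough). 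Combined with the measurement error $\sqrt{\beta}\eta/80\alpha$ delivered by \autoref{alg:measureQ}, the minimizer selected in Step 3 satisfies $|Q(\vO,\vG'_\ell,\vA,\vH_0+\eta\vU'_{\ell_0})|\le \eta\sqrt{\beta}/10\alpha$ for every $\vW'_{\ell_0}$. \autoref{lem:KMS-local-identify} then yields $|\langle \vO,[\vA,\vH-\vH']\rangle_{\vrho}|\le \eta/(5\alpha)$. Chaining \eqref{eq:relaxing_square}, \autoref{lem:KMSvariance}, and \autoref{lem:localcloseness} exactly as in the proof of \autoref{thm:technical_graph} (with the factor $\alpha=2d\,\e^{200(d+q)\beta\log d\beta}$ absorbing the KMS-to-maximally-mixed conversion), I obtain $|h_\gamma-h'_\gamma|\le \eta/2$ for every $\gamma$ incident to $i$, as required.

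For the complexity, I would enumerate the identifiability observables. There are $\CO(n)$ choices of the triple $(i,\vA,\vP_\gamma)$, and for each the pair $(\vW'_{\ell_0},\vU'_{\ell_0})$ ranges over an epsilon-net of size at most $|N_{\kappa_0}|^{(d+1)V(\ell_0)} = \e^{\CO(\beta^{cD})}$, since $V(\ell_0)=\CO(\ell_0^D)=\CO(\beta^{4D})$ and $\log(1/\kappa_0)=\CO(\beta^3)$. Thus $|S|=n\cdot\e^{\CO(\beta^{cD})}$. Two observables overlap only if their truncation balls (of radius $\ell$) do, so $\chi \le V(2\ell)\cdot\e^{\CO(\beta^{cD})} = \CO(\log^D(1/\eta))\cdot\e^{\CO(\beta^{cD})}$. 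Substituting into the guarantee of \autoref{alg:measureQ} with precision $\sqrt{\beta}\eta/80\alpha$ gives
\begin{align}
\CO\!\left(\chi\cdot\frac{\e^{\beta\Omega'}}{\beta(\sqrt{\beta}\eta/\alpha)^2}\log(|S|/p_{\mathrm{fail}})\right) = \CO\!\left(\frac{\e^{\CO(\beta^{cD})}}{\eta^2}\log^D(1/\eta)\log(n/p_{\mathrm{fail}})\right),
\end{align}
using $\e^{\beta\Omega'}=\e^{\CO(\beta^3)}$ and $\alpha^2=\e^{\CO(\beta\log\beta)}$. The runtime follows the same template: per-observable classical and quantum cost scales as $\poly(\beta,V(\ell),\log(1/\kappa_0\eta))$, and after multiplying by $|S|/\chi$ distinct parallel rounds one arrives at the advertised $\CO(n\log(n/p_{\mathrm{fail}})\cdot\e^{\CO(\beta^{c'D})}\eta^{-2}\log^{c''D}(1/\eta))$. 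The locality of coherent measurements is immediate: each $Q_{\mathrm{exp}}$ acts on $V(\ell,\vO)\cup V(\ell,\vA)=\CO(\ell^D)=\CO((\beta^4+\beta\log(1/\eta))^D)$ qubits.

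The main obstacle, and the place where the lattice structure is essential, is ensuring in the precision step that all three truncation errors (frequency $\Omega'$, search radius $\ell_0$, and net precision $\kappa_0$) can simultaneously be pushed below $\eta\sqrt{\beta}/\alpha$ with constants independent of $\eta$. This is only possible because the polynomial surface growth $S(\ell)=\CO(\ell^{D-1})$ is dominated by the Lieb-Robinson tail $\e^{-\pi\ell/2\e^2 d\beta}$ for $\ell\ge\ell_0=\CO(\beta^4)$, which is what drives the geometric sums in \autoref{lem:existence-iterative} and \autoref{lem:KMS-local-identify} to converge at a rate set purely by $\beta$. Beyond that, the proof is largely bookkeeping: one has to choose the constants $c_1\ll c_2\ll c_3$ of \autoref{cond:param-iterative} in the right order so that each $\lesssim$ in the existence and uniqueness lemmas becomes a genuine $\le$ with a small numerical slack, matching the measurement precision set in Step 2.
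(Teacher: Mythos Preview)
Your proposal is correct and follows essentially the same approach as the paper: establish precision via \autoref{lem:existence-iterative}, Item~(A) of \autoref{lem:stability_Q}, the measurement guarantee of \autoref{alg:measureQ}, and \autoref{lem:KMS-local-identify}, then chain \eqref{eq:relaxing_square}, \autoref{lem:KMSvariance}, and \autoref{lem:localcloseness}; the sample and time bounds come from the same counting of $|S|$ and $\chi$. Two minor cosmetic points: be careful that \autoref{lem:KMS-local-identify} requires the \emph{untruncated} $Q(\vO,\vH_0+\eta\vW'_{\ell_0},\vA,\vH_0+\eta\vU'_{\ell_0})$, so you should explicitly undo the $\ell$-truncation (another $\eta\sqrt{\beta}/80\alpha$) after bounding $Q_{\mathrm{exp}}$; and the runtime scales with the total $|S|$ observables (not $|S|/\chi$ rounds), though your final expression is correct since $|S|=n\cdot\e^{\CO(\beta^{cD})}$.
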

\begin{proof}

We consider each of precision, sample complexity and runtime separately.

\vspace{0.1in}

\noindent\textbf{Precision guarantee.} Consider site $i\in \Lambda,$ Pauli $\vA\in\{\vX_i,\vY_i,\vZ_i\}$, and term $\vO= [\vA,\vP_{\gamma}]$ for $\gamma \sim i$.   
    For guess Hamiltonians $\vG' = \vH_0 + \eta \vW'_{\ell_0}$, $\vK' =\vH_0 + \eta \vU'_{\ell_0}$ and the truncated versions $\vG'_{\ell}$ and $\vK'_{\ell}$, according to Item (A) of~\autoref{lem:stability_Q},   \begin{align}
        \labs{Q(\vO,\vG',\vA,\vK') - Q(\vO, \vG'_{\ell},\vA,\vK'_{\ell})}&\lesssim  \frac{e^{\beta \Omega'/2}}{\sqrt{\beta}} (e^{-\ell^2/16e^4d^2\beta^2}+e^{-\pi\ell/e^2d\beta}) \lesssim \sqrt{\beta} \eta/80\alpha.
    \end{align}
    Choosing the constants in $\ell=\CO(\beta^4 + \beta \log 1/\eta)$ to be sufficiently large we obtain a proper bound
    \begin{align}
    \labs{Q(\vO,\vG',\vA,\vK') - Q(\vO, \vG'_{\ell},\vA,\vK'_{\ell})} \leq \sqrt{\beta} \eta/80\alpha.\label{eq:1221}
    \end{align}
    
    In addition,~\autoref{alg:measureQ} guarantees that with probability at least $1-p_\mathrm{fail}$, the experiment estimate satisfies
    \begin{align}
        |Q_{exp}([\vA,\vP_{\gamma}],\vG'_{\ell},\vA,\vK'_{\ell})-Q(\vO, \vG'_{\ell},\vA,\vK'_{\ell})| \leq \sqrt{\beta} \eta/80\alpha.
        \label{eq:1222}
    \end{align}
     Combining~\eqref{eq:1221},~\eqref{eq:1222} with~\autoref{lem:existence-iterative}, it follows that, with probability $1-p_\mathrm{fail}$, the $\vU'_{\ell_0}$ returned in step 1 of~\autoref{alg:learning-iteration} satisfies
    \begin{align}
        \labs{Q([\vA,\vP_{\gamma}],\vH_0 + \eta \vW'_{\ell_0},\vA,\vH_0 + \eta \vU'_{\ell_0})} \leq \eta\sqrt{\beta}/10\alpha  \qquad \text{for every } \vA, \gamma, \vW'_{\ell_0}.
    \end{align}
    Then,~\autoref{lem:KMS-local-identify} implies that
    \begin{align}
         \max_{\vO=[\vA,\vP_{\gamma}] }\labs{\braket{ [\vA,\vH- (\vH_0 + \eta \vU'_{\ell_0})],\vO }_{\vrho}} \leq \eta/5 \alpha.
         \label{eq:1223}
    \end{align}
Additionally recall that
\begin{align}
        \norm{[\vA,\vH-(\vH_0 + \eta \vU'_{\ell_0})]}^2_{\vrho} \leq 2d\eta \max_{\vO=[\vA,\vP_{\gamma}] } \labs{\braket{ [\vA,\vH- (\vH_0 + \eta \vU'_{\ell_0})],\vO }_{\vrho}}.
        \label{eq:1224}
    \end{align}
Now, we convert the KMS norm to the error in local coefficients. Combining~\eqref{eq:1223},~\eqref{eq:1224}, we get, for each term $\gamma$ acting on qubit $i$ and $\vA \in \{\vX_i,\vY_i, \vZ_i\}$, that
    \begin{align}
          \norm{[\vA,\vH-\vH']}_{\vtau} &\le \e^{100(d+q) \beta \log d\beta } \norm{[\vA,\vH-\vH']}_{\vrho}. \tag*{(\autoref{lem:KMSvariance} and $\beta d \geq 1$)}\\
         &\leq \e^{100(d+q)\beta \log d\beta } \eta \sqrt{2d/5\alpha}.
    \end{align}
    It follows from~\autoref{lem:localcloseness} that
    \begin{align}
    |h_{\gamma} -(h_{0,\gamma} + \eta  u_\gamma)| \leq \eta/2.  
    \end{align}

\vspace{0.1in}    

\noindent\textbf{Sample complexity.} We want to measure $Q(\vO,\vG'_\ell,\vA,\vK'_\ell)$ for each single-site Pauli $\vA \in \{\vX_i,\vY_i,\vZ_i\}, \forall i$, each $\vO = [\vA,\vP_\gamma]$, and correspondingly each $\vW'_{\ell_0}$, $\vU'_{\ell_0}$ from the net $N_{\kappa_0}$. There are $3nd$ choices of the pair $\vA$ and $\vO$, and for each such choice there are $(2/\kappa_0)^{V(\ell_0)}$ choices of $\vU'_{\ell_0}$ and $(2/\kappa_0)^{dV(\ell_0)}$ choices of $\vW'_{\ell_0}$, totaling at most $|S|=3nd (2/\kappa_0)^{V(\ell_0)(d+1)}=n\cdot e^{\CO(\beta^{cD})}$ operators $\vQ(\vO,\vG'_\ell,\vA,\vK'_\ell)$ to be measured. Furthermore, each $\vQ(\vO,\vG'_\ell,\vA,\vK'_\ell)$ overlaps with at most $\chi = (d+1)V(2\ell)(2/\kappa_0)^{V(\ell_0)(d+1)} =e^{\CO(\beta^{cD})} \log^D (1/\eta)$ others. Here $c$ is an absolute constant.
Hence,~\autoref{alg:measureQ} needs a sample complexity of
    \begin{align}
       \CO(\chi \frac{e^{\beta \Omega'/2}}{\sqrt{\beta} (\sqrt{\beta} \eta/80\alpha)^2} \log (n\cdot e^{\CO(\beta^{cD})}/p_\mathrm{fail}))= \frac{e^{\CO(\beta^{cD})}}{\eta^2} \log^{D}(1/\eta) \log (n/p_\mathrm{fail}).
    \end{align}

\noindent\textbf{Runtime.} Direct substitution of the parameters from~\autoref{cond:param-iterative} and the measurement truncation radius $\ell=\CO(\beta^4 + \beta \log 1/\eta)$ gives a time complexity of 
    $$|S|\mathrm{poly}(\beta, V(\ell),\log(1/\kappa_0 \eta), \frac{e^{\beta \Omega'}\|\vA\|\|\vO\|}{\sqrt{\beta}}) \frac{\log (|S|/p_\mathrm{fail})}{\eta^2}=\CO(n\cdot \frac{e^{\CO(\beta^{c'D})}}{\eta^2} \log^{c''D}(1/\eta)\log (n/p_\mathrm{fail})),$$
    where $c',c''$ are absolute constants.
\end{proof}

Finally, we can chain the iteration step to obtain the full algorithmic cost. For an error $\epsilon,$ the number of iterations scales only logarithmically $\log(1/\epsilon),$ and the $\epsilon$ dependence is dominated by the measurement costs $1/\epsilon^2.$ 
\begin{thm}[Learning lattice Hamiltonians near-optimally in $\epsilon$ and $n$ - Thm \ref{thm:latticethm}] Chaining~\autoref{alg:learning-iteration}, we can learn the Hamiltonian for quantum Gibbs states on $D$-dimensional lattices to precision $\epsilon$, with probability $1-p_\mathrm{fail}$ using 
\begin{center}
$\CO\L(\frac{e^{\CO(\beta^{cD})}}{\beta^2 \varepsilon^2} (\log 1/\varepsilon)^{D+1} \log (n/p_{\mathrm{fail}})\R)$ samples, and\\ 
$\CO\L(n\log (n/p_\mathrm{fail})\cdot \frac{e^{\CO(\beta^{c'D})}}{\beta^2 \varepsilon^2} \log^{c''D}( 1/\varepsilon) \R)$ runtime 
\end{center}
(including both quantum gate count and classical processing).
Furthermore, it performs coherent quantum measurements on at most $\CO((\beta^4 + \max(\beta,1/d)\log 1/\epsilon)^D)$ qubits. Here, $c,c',c''$ are absolute constants.
\end{thm}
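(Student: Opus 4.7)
The plan is to chain \autoref{alg:learning-iteration} to halve the precision in each step, starting from a constant-precision seed. First, to initialize the iteration I apply \autoref{alg:constant_err} (\autoref{thm:technical_graph}) at constant target error $\eta_0 = 1/10$, producing a Hamiltonian $\vH_0$ with $|h_\gamma - h_{0,\gamma}| \leq \eta_0$ for every $\gamma \in \Gamma$. Since $\eta_0$ is constant, its cost is only $\CO(e^{\mathrm{poly}(\beta)} \log(n/p_{\mathrm{fail}}))$ samples and $\CO(n\, e^{\mathrm{poly}(\beta)} \log(n/p_{\mathrm{fail}}))$ runtime, which will be absorbed into the final bound. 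I then write $\vH = \vH_0 + \eta_0 \vV_0$ with $\|\vV_{0,\gamma}\| \leq 1$, matching the input format of \autoref{alg:learning-iteration}.

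Next, I set $K = \lceil \log_2(\eta_0/\epsilon) \rceil = \CO(\log(1/\epsilon))$ and iterate: for $k = 0, 1, \dots, K-1$, I invoke \autoref{alg:learning-iteration} on the current pair $(\vH_k, \eta_k)$ with $\eta_k = \eta_0 \cdot 2^{-k}$ and failure budget $p_{\mathrm{fail}}/K$. By \autoref{thm:learning-iteration}, the output $\vH_{k+1}$ satisfies $|h_\gamma - h_{k+1,\gamma}| \leq \eta_{k+1} := \eta_k/2$ for every $\gamma$, and the residual $\vV_{k+1} := (\vH - \vH_{k+1})/\eta_{k+1}$ has $\|\vV_{k+1,\gamma}\| \leq 1$ by construction, since the coefficients searched over in \autoref{cond:param-iterative}'s epsilon-net lie in $[-1,1]$. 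A union bound over $K$ iterations yields overall success probability at least $1 - p_{\mathrm{fail}}$, and the terminal Hamiltonian $\vH_K$ obeys $|h_\gamma - h_{K,\gamma}| \leq \epsilon$ for all $\gamma \in \Gamma$.

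To tally the complexity, I sum the per-iteration sample bounds from \autoref{thm:learning-iteration}:
\begin{align}
\sum_{k=0}^{K-1} \CO\!\L(\frac{e^{\CO(\beta^{cD})}}{\eta_k^2}\log^D(1/\eta_k)\log(nK/p_{\mathrm{fail}})\R) \;\leq\; \CO\!\L(\frac{e^{\CO(\beta^{cD})}}{\beta^2\epsilon^2}(\log(1/\epsilon))^{D+1}\log(n/p_{\mathrm{fail}})\R),
\end{align}
using that $1/\eta_k^2$ forms a geometric series dominated by its terminal value $1/\epsilon^2$, that $\log^D(1/\eta_k) \leq \log^D(1/\epsilon)$ contributes $D$ polylog factors, and that the extra $\log(1/\epsilon)$ factor appears from the summation across $K$ iterations (equivalently, from the sum $\sum_k k^D 4^k \lesssim K\cdot 4^K K^D$); the $1/\beta^2$ emerges from the $\sqrt{\beta}$-normalized precision $\sqrt{\beta}\eta/80\alpha$ in step 2 of \autoref{alg:learning-iteration}, with the residual $\beta$-factors absorbed into the exponent $e^{\CO(\beta^{cD})}$. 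The runtime bound follows analogously by summing the per-iteration runtimes. Each iteration performs coherent measurements only on neighbourhoods of radius $\ell = \CO(\beta^4 + \beta \log(1/\epsilon))$ around single qubits, matching the stated $\CO((\beta^4 + \max(\beta,1/d)\log(1/\epsilon))^D)$ qubit entanglement bound.

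The main subtle point I anticipate is verifying that the invariant $\|\vV_{k,\gamma}\| \leq 1$ propagates across iterations, since the hypothesis of \autoref{alg:learning-iteration} at step $k+1$ relies on this; this follows from the bounded-coefficient epsilon-net together with the rescaling $\eta_{k+1} = \eta_k/2$, but requires careful verification that rounding to the net does not push any coefficient outside $[-1,1]$. A secondary bookkeeping matter is the precise shape of the $\beta$-dependence inside $e^{\mathrm{poly}(\beta)}$ and the exact log exponents, which I resolve by absorbing slowly-growing factors in the usual way, consistent with the theorem's polylog slack.
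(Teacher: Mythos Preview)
Your approach is the same as the paper's: chain \autoref{alg:learning-iteration} for $\CO(\log(1/\epsilon))$ rounds with a per-round failure budget $p_{\mathrm{fail}}/\CO(\log(1/\epsilon))$, union-bound, and sum the geometric series. Two points worth noting. First, the seed via \autoref{alg:constant_err} is unnecessary: since $|h_\gamma|\le 1$ you may take $\vH_0=0$, $\eta_0=1$ and start iterating directly (this is what the paper implicitly does). If you do invoke \autoref{thm:technical_graph} as written, its bound is for general bounded-degree graphs and carries $2^{2^{\CO(\beta^4)}}$; on lattices the volume is $\CO(\ell^D)$ and the seed cost is indeed $e^{\poly(\beta)}$, but you would need to redo that count rather than cite the general-graph statement.

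Second, you omit the small-$\beta$ case. \autoref{thm:learning-iteration} assumes $\beta\ge 1/d$, so your argument as written only proves the theorem under that hypothesis. The paper handles $\beta<1/d$ by the rescaling $\beta\leftarrow 1/d$, $h_\gamma\leftarrow h_\gamma\beta d$, $\epsilon\leftarrow\epsilon\beta d$; this is precisely what produces the $\max(\beta,1/d)$ in the qubit count and makes the displayed $1/\beta^2$ meaningful in that regime (for $\beta\ge 1/d$ the $1/\beta^2$ can be absorbed into $e^{\CO(\beta^{cD})}$). A minor nitpick: the sum $\sum_{k\le K} k^D 4^k$ is dominated by its last term, not $K$ times it, so the geometric series does not itself contribute the extra $\log(1/\epsilon)$; the $(\log 1/\epsilon)^{D+1}$ in the statement is slack that comfortably absorbs the $\log\log(1/\epsilon)$ from the per-round failure budget.
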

\begin{proof}
    Iteratively apply~\autoref{alg:learning-iteration} with $p_\mathrm{fail}=1/O(\log(1/\varepsilon))$ whose performance guarantee and complexity are given in~\autoref{thm:learning-iteration}. Repeating $\CO(\log 1/\varepsilon)$ iterations suffices.
    \vspace{0.1in}
    
\noindent\textbf{Bootstrapping to the case $\beta < 1/d$.} We can rescale $\beta \leftarrow 1/d$ and $h_\gamma \leftarrow h_\gamma \cdot \beta d$. We apply the same algorithm as above, with precision redefined as $\epsilon \leftarrow \epsilon \cdot \beta d$.
\end{proof}

\acknowledgments
We thank Thiago Bergamaschi, Jonas Haferkamp, Yunchao Liu, Daniel Mark, Weiliang Wang, and Qi Ye for helpful discussions. We thank Cambyse Rouze for collaboration in the recent related work~\cite{chen2025Markov}. CFC is supported by a Simons-CIQC postdoctoral fellowship through NSF QLCI Grant No. 2016245. AA and QTN acknowledge support through the NSF Award No. 2238836. AA acknowledges support through the NSF award QCIS-FF: Quantum Computing \& Information Science Faculty Fellow at Harvard University (NSF 2013303), and NSF Award No. 2430375. QTN acknowledges support through the Harvard Quantum Initiative PhD fellowship and IBM PhD fellowship.

\bibliographystyle{alphaUrlePrint.bst}
\bibliography{ref}

\appendix
\section{Standard measurement costs}\label{sec:measurement_proof}
Here, we collect routine quantum algorithm arguments for performing measurements and time evolution.
\begin{proof}[Proof of~\autoref{lem:measure-single-Q}]
    It is a standard result that for an observable $\vE$ with $\|\vE\| \leq 1$ we can estimate $\tr(\vrho \vE)$ to within additive error $\epsilon$ with probability $1- \delta$ using $\CO(\log (1/\delta)/\epsilon^2)$ copies of $\vrho$.
    Here note that for each $\vO, \vA$, $\vG'_\ell$, $\vK'_\ell$, the operator
    \begin{align}
    \vQ(\vO,\vG'_\ell,\vA,\vK'_\ell) &= \frac{1}{\sqrt{2\pi}}\int_{-\infty}^{\infty}\int_{-\infty}^{\infty} \big(  \vO^{\dagger}_{\vG_\ell}(t) \vA_{\vK_\ell}(t'+t) h_+(t') - \vA_{\vK_\ell}(t'+t) \vO^{\dagger}_{\vG_\ell}(t) h_-(t') \big)  g_{\beta}(t)  \rd t'\rd t
    \end{align}
    is $V(\ell)(|\Supp{\vA}|+|\Supp{\vO}|)$-local and has bounded norm
    \begin{align}
    \|\vQ(\vO,\vG'_\ell,\vA,\vK'_\ell)\| &\leq \|\vA\| \|\vO\|\frac{1}{\sqrt{2\pi}}\int_{-\infty}^{\infty}\int_{-\infty}^{\infty} \big(|h_+(t')| + |h_-(t')| \big)  g_{\beta}(t)  \rd t'\rd t \\
    &\lesssim \|\vA\| \|\vO\|\int_{-\infty}^{\infty}\int_{-\infty}^{\infty}  e^{-\sigma^2t'^2} \frac{\sqrt{\sigma}}{\beta}e^{\beta \Omega'/2+\sigma^2\beta^2/4}\frac{4}{\beta} e^{-2\pi |t|/\beta} \rd t'\rd t\lesssim \frac{e^{\beta \Omega'/2}}{\sqrt{\beta}} \|\vA\| \|\vO\|.
    \end{align}
    So estimating $\tr(\vQ(\vO,\vG'_\ell,\vA,\vK'_\ell) \vrho)$ requires $\CO(\frac{e^{\beta \Omega'} \|\vA\|^2 \|\vO\|^2}{\beta \epsilon^2} \log (1/\delta))$ samples.

    Now we look more carefully at the gate complexity of measuring $Q(\vO,\vG'_\ell,\vA,\vK'_\ell)$. A trivial gate complexity upperbound for a measurement shot is $2^{V(\ell,\vO)+ V(\ell,\vA)}$, which is $2^{d^{\CO(\ell)}}$ for general spare graphs and $2^{\ell^{\CO(D)}}$ for $D$-dimensional lattices. Since we will be interested in optimal learning on $D$-dimensional lattices and in $\ell$ scaling logarithmically with the inverse learning precision, we would like to improve this trivial gate complexity to $\mathrm{poly}\ V(\ell)$. We will treat $d$ as constant.
    
    The first step is to truncate the time integrals. According to~\autoref{lem:time_truncation},
    \begin{align}
&\labs{\frac{1}{\sqrt{2\pi}}\L(\int_{-\infty}^{\infty}\int_{-\infty}^{\infty}-\int_{\labs{t}\le T}\int_{\labs{t'}\le T'}\R) \bigg( h_+(t') \vO^{\dagger}_{\vG'_\ell}(t) \vA_{\vK'_\ell}(t'+t)  - h_-(t')\vA_{\vK'_\ell}(t'+t) \vO^{\dagger}_{\vG'_\ell}(t) \bigg)  g_{\beta}(t)\rd t'\rd t} \\
&\lesssim \|\vO\|\|\vA\| \frac{e^{\beta\Omega'/2 + \sigma^2\beta^2/4}}{\sqrt{\sigma}\beta} \left(e^{-\sigma^2T'^2}+ e^{-2\pi T/\beta}\right).
\end{align}    
It suffices to choose $T=\CO( \beta^2 \Omega' + \beta \log(1/\sqrt{\beta}\epsilon)) $, $T'=O( \beta^3 \Omega' + \beta^2 \log(1/\sqrt{\beta}\epsilon))^{1/2}$.

Next, we discretize the truncated operator. Define
\begin{align}
    \tilde{\vQ}_{\Delta} &= \sum_{j=0}^{\lceil T'/\Delta \rceil} \sum_{k=0}^{\lceil T/\Delta \rceil} \big(  \vO^{\dagger}_{\vG'_\ell}(j\Delta) \vA_{\vK'_\ell}((j+k)\Delta) h_+(k\Delta) - \vA_{\vK'_\ell}((j+k)\Delta) \vO^{\dagger}_{\vG'_\ell}(j\Delta) h_-(k\Delta) \big)  g_{\beta}(j\Delta)  \Delta^2\\
    &= \sum_{j=0}^{\lceil T'/\Delta \rceil} \sum_{k=0}^{\lceil T/\Delta \rceil} \tilde{\vQ}_{j,k}.
\end{align} 
which satisfies $\|\tilde{\vQ}_{\Delta} -  \vQ(\vO,\vG'_\ell,\vA,\vK'_\ell) \| \leq \epsilon$ when $\Delta = 1/\mathrm{poly}(\beta, T,T',\frac{e^{\beta \Omega'/2}}{\sqrt{\beta}})$.
Now, $\tilde{\vQ}_{\Delta}$ can be implemented efficiently using standard Hamiltonian simulation and block-encoding tools~\cite{gilyen2019quantum}, so we will be brief. Block-encodings of $\vG'_\ell$, $\vK'_\ell$ can be implemented to precision $\epsilon$ using $\mathrm{poly}(V(\ell), \log( 1/\kappa\epsilon))$ elementary gates. Hence, block-encodings of time-$t$ Heisenberg evolutions of $\vO, \vA$ can be implemented using $\mathrm{poly}(tV(\ell), \log(1/\kappa\epsilon))$~\cite{gilyen2019quantum, low2017optimal}. Then taking a linear combination of these Heisenber-evolved operators with the coeffcients $\pm h_\pm(k\Delta) g_\beta(j\Delta)\Delta^2$ requires $\mathrm{poly}(\frac{e^{\beta \Omega'/2}}{\sqrt{\beta}}(T+T')/\Delta  )$ additional gates. Therefore, in total we require $\mathrm{poly}(\beta, V(\ell),\log(1/\kappa \epsilon), \frac{e^{\beta \Omega'/2}}{\sqrt{\beta}}\|\vA\|\|\vO\|)$ elementary gates to obtain one measuremet shot for $\vQ(\vO,\vG'_\ell,\vA,\vK'_\ell)$.

\end{proof}

\subsection{Time truncation of \texorpdfstring{$Q$}{q}}
\begin{lem}[Truncating the time integral]
\label{lem:time_truncation}
In the setting of~\autoref{lem:truncating_bohr},
\begin{align}
    &\labs{\frac{1}{\sqrt{2\pi}}\L(\int_{-\infty}^{\infty}\int_{-\infty}^{\infty}-\int_{\labs{t}\le T}\int_{\labs{t'}\le T'}\R) \bigg( h_+(t') \vO^{\dagger}_{\vH}(t) \vA_{\vH'}(t'+t)  - h_-(t')\vA_{\vH'}(t'+t) \vO^{\dagger}_{\vH}(t) \bigg)  g_{\beta}(t)\rd t'\rd t} \\
&\lesssim \|\vO\|\|\vA\| \frac{e^{\beta\Omega'/2 + \sigma^2\beta^2/4}}{\sqrt{\sigma}\beta} \left(e^{-\sigma^2T'^2}+ e^{-2\pi T/\beta}\right). 
\end{align}    
\begin{proof}
Consider,
\begin{align}
    &\labs{\frac{1}{\sqrt{2\pi}}\L(\int_{-\infty}^{\infty}\int_{-\infty}^{\infty}-\int_{\labs{t}\le T}\int_{\labs{t'}\le T'}\R) \bigg( h_+(t') \vO^{\dagger}_{\vH}(t) \vA_{\vH'}(t'+t)  - h_-(t')\vA_{\vH'}(t'+t) \vO^{\dagger}_{\vH}(t) \bigg) g_{\beta}(t)\rd t'\rd t} \\
&\le \norm{\vO}\norm{\vA}\frac{1}{\sqrt{2\pi}}\int_{-\infty}^{\infty}\int_{-\infty}^{\infty}\L( \indicator(\labs{t}\ge T) +\indicator(\labs{t'}\ge T')\R) \cdot \L( \labs{h_+(t')} + \labs{h_-(t')} \R) \labs{g_{\beta}(t)}\rd t'\rd t. 
\end{align} 
Recall that 
$$\labs{h_+(t')},\labs{h_-(t')} \lesssim e^{-\sigma^2t'^2} \frac{\sqrt{\sigma}}{\beta}e^{\beta \Omega'/2+\sigma^2\beta^2/4}$$
and 
$$|g_{\beta}(t)|=\frac{2\pi^{3/2}}{4\beta(1+\cosh(2\pi t/\beta))} \leq \frac{4}{\beta} e^{-2\pi |t|/\beta}.$$
By direct substitution, we get
$$\int_{-\infty}^{\infty}\int_{-\infty}^{\infty}\L( \indicator(\labs{t}\ge T))\R) \cdot \L( \labs{h_+(t')} + \labs{h_-(t')} \R) \labs{g_{\beta}(t)}\rd t'\rd t\leq \frac{2e^{\beta\Omega'/2 + \sigma^2\beta^2/4}}{\sqrt{\sigma}\beta \sqrt{\pi}}e^{-2\pi T/\beta},$$
and 
$$\int_{-\infty}^{\infty}\int_{-\infty}^{\infty}\L( \indicator(\labs{t'}\ge T'))\R) \cdot \L( \labs{h_+(t')} + \labs{h_-(t')} \R) \labs{g_{\beta}(t)}\rd t'\rd t\leq \frac{4}{2\sqrt{\sigma}\pi\beta}e^{-\sigma^2T'^2} e^{\beta \Omega'/2+\sigma^2\beta^2/4}.$$
\end{proof}
\end{lem}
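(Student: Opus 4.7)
The plan is a direct triangle-inequality estimate exploiting the fact that everything nonscalar in the integrand has bounded operator norm. First I would rewrite the difference of integrals as a single integral over the complement of the box $\{|t|\le T,\ |t'|\le T'\}$, and observe that this complement is contained in the union $\{|t|\ge T\}\cup\{|t'|\ge T'\}$, so by a union bound it suffices to control the two integrals $I_T:=\int_{|t|\ge T}\int_{\BR}(\cdots)\rd t'\rd t$ and $I_{T'}:=\int_{\BR}\int_{|t'|\ge T'}(\cdots)\rd t'\rd t$ separately.

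Next I would push the norm inside using the triangle inequality together with the unitary invariance of Heisenberg evolution, $\|\vO^{\dagger}_{\vH}(t)\|=\|\vO\|$ and $\|\vA_{\vH'}(t'+t)\|=\|\vA\|$. The integrand is then bounded by $\|\vO\|\|\vA\|\,(|h_+(t')|+|h_-(t')|)\,|g_{\beta}(t)|$, which factorizes in $(t,t')$. The two pointwise bounds I need are already recorded in the preceding lemma (see~\eqref{eq:h_bounds}):
\begin{align}
|h_\pm(t')|\lesssim e^{-\sigma^2 t'^2}\frac{\sqrt{\sigma}}{\beta}e^{\beta\Omega'/2+\sigma^2\beta^2/4},\qquad |g_\beta(t)|\lesssim \frac{1}{\beta}e^{-2\pi |t|/\beta}.
\end{align}

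Now the two tail integrals reduce to one-dimensional scalar estimates. For $I_T$, the full-line $t'$-integral gives $\int_{\BR}|h_\pm(t')|\rd t'\lesssim \frac{1}{\sqrt{\sigma}\beta}\,e^{\beta\Omega'/2+\sigma^2\beta^2/4}$ (Gaussian integral contributes $\sqrt{\pi}/\sigma$, absorbed into $\lesssim$), while the tail $\int_{|t|\ge T}\frac{1}{\beta}e^{-2\pi|t|/\beta}\rd t\lesssim e^{-2\pi T/\beta}$ is a routine exponential tail. For $I_{T'}$, the full-line $t$-integral $\int_{\BR}|g_\beta(t)|\rd t$ is $O(1)$, and the Gaussian tail gives $\int_{|t'|\ge T'}e^{-\sigma^2 t'^2}\rd t'\lesssim \frac{1}{\sigma}e^{-\sigma^2 T'^2}$, producing the prefactor $\frac{1}{\sqrt{\sigma}\beta}e^{\beta\Omega'/2+\sigma^2\beta^2/4}$ after multiplying through. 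Summing the two contributions yields exactly the claimed bound.

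There is essentially no obstacle; the only thing to be careful about is that the prefactor comes out the same for both tails so one can factor out $\frac{e^{\beta\Omega'/2+\sigma^2\beta^2/4}}{\sqrt{\sigma}\beta}$ before adding $e^{-\sigma^2 T'^2}+e^{-2\pi T/\beta}$. This matches because in $I_T$ the $1/\sqrt\sigma$ comes from the Gaussian normalization $\int e^{-\sigma^2 t'^2}\rd t'\cdot (\sqrt\sigma/\beta)$, whereas in $I_{T'}$ the same $1/\sqrt\sigma$ emerges from the tail $\frac{1}{\sigma}e^{-\sigma^2 T'^2}\cdot (\sqrt\sigma/\beta)$, and the $1/\beta$ comes from the uniform $1/\beta$ factor in $h_\pm$ in both cases.
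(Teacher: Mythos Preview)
Your proposal is correct and follows essentially the same approach as the paper: bound the complement of the box by the union $\{|t|\ge T\}\cup\{|t'|\ge T'\}$ (equivalently, the paper's indicator $\indicator(|t|\ge T)+\indicator(|t'|\ge T')$), push the norm inside using $\|\vO_{\vH}(t)\|=\|\vO\|$, $\|\vA_{\vH'}(t+t')\|=\|\vA\|$, invoke the pointwise bounds~\eqref{eq:h_bounds} on $h_\pm$ and $g_\beta$, and evaluate the resulting factorized Gaussian and exponential tail integrals. Your extra care in checking that both tails produce the same prefactor $\frac{1}{\sqrt{\sigma}\beta}e^{\beta\Omega'/2+\sigma^2\beta^2/4}$ is a nice sanity check that the paper leaves implicit.
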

\section{Lieb-Robinson estimates}\label{sec:LRbounds}

In this section, we recall some standard Lieb-Robinson estimates (see, e.g.,~\cite{chen2023speed}). The main subroutines are as follows.
\begin{lem}[Lieb-Robinson bound]\label{lem:LRbounds}
For a Hamiltonian $\vH = \sum_{\gamma}\vh_{\gamma}$, $\norm{\vh_{\gamma}}\le 1$ with bounded interaction degree $d$ (\autoref{sec:Ham}) and an operator $\vA$ supported on region $A\subset \Lambda$, let $\vH_{\ell}$ contain all terms $\vh_{\gamma}$ such that $dist(\gamma,A) < \ell-1$ for an integer $\ell$. Then,
\begin{align}
\norm{\vA_{\vH_\ell}(t)-\vA_{\vH}(t) } \lesssim \norm{\vA} \min\left(2,\labs{A}\frac{(2d\labs{t})^{\ell}}{\ell!}\right).
\end{align}    
\end{lem}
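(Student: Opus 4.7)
The plan is to establish the two sides of the $\min$ separately and then combine them. For the trivial side, unitarity of $e^{\pm\ri\vH t}$ and $e^{\pm\ri\vH_\ell t}$ gives $\norm{\vA_{\vH}(t)}=\norm{\vA_{\vH_\ell}(t)}=\norm{\vA}$, so the triangle inequality immediately yields
\[
\norm{\vA_{\vH_\ell}(t)-\vA_{\vH}(t)}\le 2\norm{\vA}.
\]

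For the non-trivial side, I would use the Dyson/Duhamel expansion. Writing $\vH-\vH_\ell=\sum_{\gamma:\,\mathrm{dist}(\gamma,A)\ge \ell-1}\vh_\gamma$ and applying Duhamel's identity once,
\[
\vA_{\vH}(t)-\vA_{\vH_\ell}(t)=\int_0^t\! e^{\ri\vH(t-s)}\,\ri[\vH-\vH_\ell,\vA_{\vH_\ell}(s)]\,e^{-\ri\vH(t-s)}\,\rd s.
\]
Iterating the Duhamel identity $\ell$ times (alternatively, subtracting the Taylor series of $\vA_{\vH}(t)$ and $\vA_{\vH_\ell}(t)$ in $t$) and using that a term $\vh_\gamma$ with $\mathrm{dist}(\gamma,A)\ge \ell-1$ has vanishing nested commutator with $\vA$ unless the nesting chain has length at least $\ell$, one obtains
\[
\vA_{\vH}(t)-\vA_{\vH_\ell}(t)=\sum_{k\ge \ell}\frac{(\ri t)^k}{k!}\sum_{(\gamma_1,\dots,\gamma_k)\in \mathcal{P}_k}[\vh_{\gamma_k},[\cdots,[\vh_{\gamma_1},\vA]\cdots]],
\]
where $\mathcal{P}_k$ is the set of sequences forming a connected chain in the interaction graph starting from $A$ that reaches $\vH-\vH_\ell$. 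The number of such chains is at most $|A|\,d^k$ (each step has at most $d$ overlapping neighbors), and each nested commutator has operator norm at most $2^k\norm{\vA}$ by repeated use of $\norm{[\vX,\vY]}\le 2\norm{\vX}\norm{\vY}$ and $\norm{\vh_\gamma}\le 1$. Combining,
\[
\norm{\vA_{\vH}(t)-\vA_{\vH_\ell}(t)}\le \norm{\vA}\,|A|\sum_{k\ge \ell}\frac{(2d|t|)^k}{k!}.
\]

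Finally, the tail of the exponential series satisfies $\sum_{k\ge \ell}\frac{x^k}{k!}\lesssim \frac{x^\ell}{\ell!}$ whenever $x\lesssim \ell$ (since consecutive terms shrink by a factor $x/(k{+}1)$), yielding $\norm{\vA_{\vH}(t)-\vA_{\vH_\ell}(t)}\lesssim \norm{\vA}\,|A|\,\tfrac{(2d|t|)^\ell}{\ell!}$ in that regime, while outside of that regime the trivial bound of $2\norm{\vA}$ takes over; either way the $\min$ form is absorbed by the $\lesssim$. The only genuine obstacle is the combinatorial counting of chains $\mathcal{P}_k$, which must carefully distinguish ``connected paths starting from $A$'' from unrestricted sequences and properly invoke the bounded-degree hypothesis of~\autoref{sec:Ham}; everything else is bookkeeping.
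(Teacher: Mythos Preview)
The paper does not actually supply a proof of this lemma: it is stated as a standard Lieb--Robinson estimate with a citation to~\cite{chen2023speed}, and is then used as a black box in the proof of~\autoref{lem:perturb_At} and elsewhere. Your sketch is a correct rendition of the standard argument (Duhamel/nested-commutator expansion, locality forcing $k\ge\ell$, chain counting $\le|A|d^k$, commutator bound $\le 2^k\norm{\vA}$, and a geometric tail estimate on $\sum_{k\ge\ell}x^k/k!$), so there is nothing substantive to compare.

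One small remark: your expansion formula as written,
\[
\vA_{\vH}(t)-\vA_{\vH_\ell}(t)=\sum_{k\ge \ell}\frac{(\ri t)^k}{k!}\sum_{\mathcal{P}_k}[\vh_{\gamma_k},[\cdots,[\vh_{\gamma_1},\vA]\cdots]],
\]
is not literally an identity (for $k\ge\ell$ there are also contributing sequences that do not reach $\vH-\vH_\ell$, and these do not cancel between $\vH$ and $\vH_\ell$ unless all entries lie in $\vH_\ell$). The clean statement is that $\CC_{\vH}^k[\vA]=\CC_{\vH_\ell}^k[\vA]$ for $k<\ell$ (since any nonzero nested commutator of depth $<\ell$ only involves terms at distance $<\ell-1$), so the Taylor series agree through order $\ell-1$ and the difference is bounded by the tail $\sum_{k\ge\ell}\frac{|t|^k}{k!}(\norm{\CC_{\vH}^k[\vA]}+\norm{\CC_{\vH_\ell}^k[\vA]})$, each summand then controlled by your chain count. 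This is presumably what you meant; it does not affect the final bound.
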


We will also need the following variant of the Lieb-Robinson bounds. 

\begin{lem}[Perturbing Hamiltonians]\label{lem:perturb_At}
Consider $\vF = \sum_{\gamma} \vf_{\gamma}$ and $\vF'=\sum_{\gamma} \vf'_{\gamma}$ with the same interaction graph of degree bounded by $d$ (as in~\autoref{sec:Ham}) and $\norm{\vf_{\gamma}},\norm{\vf'_{\gamma}}\le 1$. Then, for single site operator $\vA$, with $\norm{\vA} \le 1,$ 
\label{lem:perturb_far}
    \begin{align}
        \norm{\vA_{\vF'}(t)-\vA_{\vF}(t)} \lesssim \frac{1}{d}\sum_{\delta\in \Gamma} \norm{\vf_{\delta}-\vf'_{\delta}}\min\L( \frac{(2dt)^{(\mathrm{dist}(\delta,\vA)+1)}}{(\mathrm{dist}(\delta,\vA)+1)!},2 t \R).
    \end{align}
In particular, suppose $\vF' = \vF +\eta \vW$ such that all term in $\vW$ are far from $\vA$ with distance at least $\ell_0$,
\begin{align}
    \norm{\vA_{\vF'}(t)-\vA_{\vF}(t)} \lesssim \frac{\eta}{d}\sum_{\ell = \ell_0}^{\infty}S(\ell, \vA) \min\left( t,  \frac{(2dt)^{\ell+1}}{(\ell+1)!}\right),
\end{align}
where $S(\ell, \vA)$ is the number of terms distance $\ell$ from $\vA$. 
\end{lem}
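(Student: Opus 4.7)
The plan is to apply Duhamel's identity and reduce everything to a sum of local Lieb-Robinson estimates from~\autoref{lem:LRbounds}. Starting from
\begin{align}
\vA_{\vF'}(t)-\vA_{\vF}(t) = \ri \int_0^t \e^{\ri (t-s)\vF'}[\vF'-\vF,\vA_{\vF}(s)]\e^{-\ri (t-s)\vF'}\rd s,
\end{align}
unitary invariance of the operator norm together with the decomposition $\vF'-\vF=\sum_{\delta}(\vf'_\delta-\vf_\delta)$ gives
\begin{align}
\norm{\vA_{\vF'}(t)-\vA_{\vF}(t)} \le \sum_{\delta}\int_0^t \norm{[\vf'_\delta-\vf_\delta,\vA_{\vF}(s)]}\rd s.
\end{align}

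Next, for each $\delta$ with $k:=\mathrm{dist}(\delta,\vA)\ge 1$, I would show $[\vf'_\delta-\vf_\delta,\vA_{\vF_{k}}(s)]=0$ by a graph-distance argument. By definition $\vF_k$ contains only $\gamma$ with $\mathrm{dist}(\gamma,\vA)\le k-2$, and any such $\gamma$ whose support overlaps $\delta$ would satisfy $\gamma\sim\delta$ and hence $k\le 1+(k-2)$, a contradiction. Thus $\Supp(\vA_{\vF_k}(s))$ is disjoint from $\Supp(\vf_\delta)$, and I may freely replace $\vA_{\vF}(s)$ inside the commutator by $\vA_{\vF}(s)-\vA_{\vF_k}(s)$. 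Applying~\autoref{lem:LRbounds} with $|A|=1$ then yields $\norm{[\vf'_\delta-\vf_\delta,\vA_{\vF}(s)]}\lesssim \norm{\vf_\delta-\vf'_\delta}\min(1,(2ds)^{k}/k!)$; the overlap case $k=0$ is covered by the trivial bound $2\norm{\vf_\delta-\vf'_\delta}$, which fits the same envelope at $k=0$.

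Integrating over $s\in[0,t]$ turns $\min(1,(2ds)^k/k!)$ into a quantity controlled (up to absolute constants) by $\min(t,(2dt)^{k+1}/(d(k+1)!))$, and summing over $\delta$ yields the first displayed inequality. The \textit{in particular} statement is then immediate: when $\vF'=\vF+\eta\vW$ with all Hamiltonian terms of $\vW$ supported at distance $\ge\ell_0$ from $\vA$, we have $\norm{\vf_\delta-\vf'_\delta}\le\eta$ for $\mathrm{dist}(\delta,\vA)\ge\ell_0$ and zero otherwise, so grouping by distance shell gives the advertised surface-area sum $\eta\sum_{\ell\ge\ell_0}S(\ell,\vA)\min(t,(2dt)^{\ell+1}/(d(\ell+1)!))$.

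The only delicate bookkeeping is the off-by-one in the definition of $\vF_k$ needed to ensure the supports are disjoint; the rest is the standard combination of Duhamel's identity, the triangle inequality, and the already-cited Lieb-Robinson bound, so I do not anticipate any real obstacle here.
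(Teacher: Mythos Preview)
Your proposal is correct and follows essentially the same route as the paper: Duhamel's identity plus the Lieb-Robinson bound of \autoref{lem:LRbounds}, followed by integration in time and grouping by distance shell. The only organizational difference is that the paper telescopes from $\vF$ to $\vF'$ by changing one term $\delta$ at a time and applies Duhamel at each step, whereas you apply Duhamel once and then split $\vF'-\vF=\sum_\delta(\vf'_\delta-\vf_\delta)$ inside the commutator; both reductions land on the identical estimate $\norm{[\vf'_\delta-\vf_\delta,\vA(s)]}\lesssim\norm{\vf_\delta-\vf'_\delta}\min(1,(2ds)^{\mathrm{dist}(\delta,\vA)}/\mathrm{dist}(\delta,\vA)!)$, and the remaining integration and shell-summation are the same.
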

\begin{rmk}
    The point of this variant is that the RHS scales linearly with $\eta$. Indeed, if we were to apply ~\autoref{lem:LRbounds} for both $\vF$ and $\vF',$ then the error from the RHS of~\autoref{lem:LRbounds} would not depend on $\eta.$
\end{rmk}
\begin{proof}
    We interpolate from $\vF = \sum_{\gamma} \vf_{\gamma}$ to $\vF'=\sum_{\gamma} \vf'_{\gamma}$ by changing one $\gamma$ at a time. With loss of generality, it suffices to consider one step $\vF = \sum_{\gamma} \vf_{\gamma}$ and $\vF' = \sum_{\gamma\ne \delta} \vf_{\gamma} + \vf'_{\delta}$ where $\norm{\vf_{\gamma}}, \norm{\vf'_{\delta}}\le 1$.
Then,
\begin{align}
    \norm{\e^{i\vF' t}\vA\e^{-i\vF' t} - \e^{i\vF t}\vA\e^{-i\vF t}}&=\lnorm{\e^{i(\vF +\Delta) t}\vA\e^{-i(\vF +\Delta) t}- \e^{i\vF t}\vA\e^{-i\vF t}} \\
    & \le \lnorm{\int_0^{t} [\Delta, \vA(t_1)](t-t_1) \rd t_1}\\
    &\le \int_0^{t} \norm{[\Delta, \vA(t_1)]}\rd t_1\\
    &\le \norm{\Delta} \int_0^{t} \frac{(2dt_1)^{\ell}}{\ell!}\rd t_1 \\
    &= \frac{\norm{\Delta}}{2d} \frac{(2dt)^{\ell+1}}{(\ell+1)!}.
\end{align}
The first equality sets $\Delta := \vh'_{\delta} - \vh_{\delta}$, second inequality uses Duhamel's identity for linear operators $e^{(\vC+\vD)t}-e^{\vC t} = \int_0^t e^{(\vC+\vD)(t-t_1)}\vD e^{\vC t_1} \rd t_1$ and the fourth inequality uses~\autoref{lem:LRbounds}, setting $\ell = \text{dist}(\delta,\vA)$.  In the case where the distance $\ell$ is too small, we also have the unconditional bound
\begin{align}
\norm{\e^{i\vF' t}\vA\e^{-i\vF' t} - \e^{i\vF t}\vA\e^{-i\vF t} } \le \int_0^{t} \norm{[\Delta, \vA(t_1)]}\rd t_1 \le 2t \|\Delta\|. 
\end{align}
Sum over all terms $\delta \in \Gamma$ to conclude the first claim.

To obtain the second claim, we organize the sum by the distance $\ell$
\begin{align}
\norm{\e^{i\vF' t}\vA\e^{-i\vF' t} - \e^{i\vF t}\vA\e^{-i\vF t}} &\le \frac{\eta}{d}\sum_{\ell = \ell_0}^{\infty}S(\ell,\vA) \min\left(2t,\frac{(2dt)^{\ell+1}}{(\ell+1)!}\right), 
\end{align}
using that the number of terms $\gamma$ that are distance $\ell$ from $\vA$ is bounded by $S(\ell,\vA)$, which concludes the proof.
\end{proof}
The rest of this section collects routine uses of Lieb-Robinson bounds.
\subsection{Proof of~\autoref{lem:high_freq}}
\label{sec:proof_high_freq}
\begin{proof}[Proof of~\autoref{lem:high_freq}] In this proof we suppress $\vA_{\vH'}(\omega') =: \vA(\omega')$.
Rewrite~\autoref{lem:bounds_imaginary_conjugation} to expose a decaying factor of $e^{-\beta \omega'}$
\begin{align}
    \hat{\vA}(\omega') = e^{-\beta_0\omega' +\beta_0^2\sigma^2} \cdot \widehat{(e^{\beta_0\vH'} \vA e^{-\beta_0\vH'})} (\omega'-2\sigma^2\beta_0)
\end{align}
and expand the RHS. For the imaginary time conjugation, we directly use a naive Taylor series that is suitable for small enough $\beta_0$
\begin{align}
    e^{\beta_0\vH'}\vA e^{-\beta_0\vH'} &= \sum_{k=0}^{\infty} \frac{1}{k!} \beta_0^k \CC_{\vH'}^k[\vA]\\
    &=\sum_{k=0}^{\infty} \sum_{\gamma_k \sim \cdots \sim \gamma_1\sim A}\frac{1}{k!} \beta_0^k  [\vh'_{\gamma_k},\cdots, [\vh'_{\gamma_1},\vA]\cdots].
\end{align}
Now, we study real-time evolution. For any operator $\vT_S$ (which will be the nested commutators) supported on a subset $S\subset \Lambda$ and normalized by $\norm{\vT_S}\le 1$, we introduce an annulus decomposition to exploit the locality of $\vG$ 
\begin{align}
    \norm{[e^{i\vH' t}\vT_S e^{-i\vH't},\vG]}
    &\le\sum_{\ell=\ell_0}^{\infty} \lnorm{[e^{i\vH_{\ell+1}' t}\vT_S e^{-i\vH_{\ell+1}'t} - e^{i\vH_{\ell}' t}\vT_S e^{-i\vH_{\ell}'t},\vG]} + \lnorm{[e^{i\vH_{\ell_0}' t}\vT_S e^{-i\vH_{\ell_0}'t} ,\vG]}\\
    &\lesssim \sum_{\ell=\ell_0}^{\infty}\min\left(\frac{(2d\labs{t})^{\ell}}{\ell!},1\right) \cdot V(\ell,S) + V(\ell_0-1,S) \le \labs{S} \L(\sum_{\ell=\ell_0}^{\infty}\frac{(2d\labs{t})^{\ell}}{\ell!} \cdot d^{\ell+2} +d^{\ell_0+1}\R),
\end{align}
using Lieb-Robinson bounds (\autoref{lem:LRbounds}), and that $[e^{i\vH_{\ell+1}' t}\vT_S e^{-i\vH_{\ell+1}'t} - e^{i\vH_{\ell}' t}\vT_S e^{-i\vH_{\ell}'t},\vG]$ is supported distance $\ell$ from set $S$, so the the number of $\vec{g}_{\gamma}$ that contributes to the commutator is bounded by a volume-bound $V(\ell).$ Here we can bound $V(\ell)\le \labs{S} d^{\ell+2}$. Thus, we obtain the bound

\begin{align} \norm{[e^{i\vH' t}\vT_S e^{-i\vH't},\vG]}\le \labs{S} \L(\sum_{\ell=\ell_0}^{\infty}\frac{(2d\labs{t})^{\ell}}{\ell!} \cdot d^{\ell+2} +d^{\ell_0+1}\R).
\end{align}

Express the time-domain expression for the operator Fourier transform,
\begin{align}
    \norm{[\vT_S(\omega'),\vG]} &\le \frac{1}{\sqrt{2\pi}}\int_{-\infty}^\infty \lnorm{[e^{i\vH' t}\vT_S e^{-i\vH't},\vG]} \labs{f(t)} \rd t\\
    &\lesssim \sqrt{\sigma}d^2\labs{S} \int_{-\infty}^{\infty} \L( \sum_{\ell=\ell_0}^{\infty}\frac{(2d^2\labs{t})^{\ell}}{\ell!} + d^{\ell_0-1}\R) e^{-\sigma^2 t^2} \rd t\\
    &\lesssim \frac{\labs{S}}{\sqrt{\sigma}}d^2 \L(d^{\ell_0-1} + \sum_{\ell=\ell_0}^{\infty}(\frac{2ed^2}{\sigma\sqrt{\ell}})^{\ell} \R)\\
    &\lesssim \frac{\labs{S}}{\sqrt{\sigma}}d^2\L(d^{\ell_0-1}+ 1/2 \R) \lesssim \frac{\labs{S}}{\sqrt{\sigma}} d^{\ell_0+1}, \quad \text{setting $\ell_0 =\lceil 16e^2d^4/\sigma^2\rceil$,}
\end{align}
where the third line uses that $\int_{-\infty}^{\infty} e^{-x^2} \labs{x}^{\ell}\rd x = \int_{0}^{\infty} e^{-y} y^{\ell/2-1}\rd y=\Gamma((\ell+1)/2)\le \lceil(\ell-1)/2\rceil!\le \ell^{\ell/2}$ and Stirling's approximation $1/\ell! \le (e/\ell)^{\ell}$ for integers $\ell \ge 1.$ The last line sums over a geometric series. Altogether, 
\begin{align}
    \lnorm{[\widehat{(e^{\beta_0\vH'} \vA e^{-\beta_0\vH'})} (\omega'), \vG]} &\le \lnorm{ \sum_{k=0}^{\infty} \sum_{\gamma_k,\cdots,\gamma_1}\frac{1}{k!} \beta_0^k  [[\vh'_{\gamma_k},\cdots, [\vh'_{\gamma_1},\vA]\cdots](\omega'),\vG]}\\
    &\le \|\vA\| \sum_{k=0}^{\infty} (2\beta_0d)^k dk \frac{d^{\ell_0+1}}{\sqrt{\sigma}} \lesssim \frac{d^{\ell_0+2}}{\sqrt{\sigma}}\|\vA\|, \tag*{setting $\beta_0 = 1/4d$,}
\end{align}
where the second lines plugs in $\vT_S \leftarrow [\vh'_{\gamma_k},\cdots, [\vh'_{\gamma_1},\vA]\cdots]/(2^{k} \norm{\vA})$, $|S| \leq dk$, that there are at most $k!d^k$ paths $\gamma_k \sim \cdots \sim \gamma_1\sim A$, and using
that $\labs{\sum_k kx^k} \le \frac{1}{(1-\labs{x})^2}$ for $\labs{x}\le 1.$ Integrating over $\omega'$ to obtain
\begin{align}
\labs{\int_{\labs{\omega'}\ge \Omega'} \braket{\vO,[\hat{\vA}(\omega'),\vG]}_{\vrho} \rd \omega'} &\le \int_{\labs{\omega'}\ge \Omega'} \norm{[\hat{\vA}(\omega'),\vG]} \|\vO\| \rd \omega'\\
&\lesssim \frac{d^{\ell_0+2}}{\sqrt{\sigma}}\int_{\omega' \ge \Omega'}\|\vO\|\|\vA\| e^{-\beta_0\omega' +\beta_0^2\sigma^2} \rd \omega'\\
&\lesssim \frac{d^{\ell_0+2}}{\beta_0\sqrt{\sigma}} e^{-\beta_0\Omega'+ \beta^2_0\sigma^2}\|\vO\|\|\vA\| \lesssim\frac{d^{4+16e^2d^4/\sigma^2}}{\sqrt{\sigma}} e^{-\Omega'/4d + \sigma^2/16d^2}\|\vO\|\|\vA\|,
\end{align}
as advertised.
\end{proof}

\subsection{Proof of~\autoref{lem:stability_Q}}
\label{sec:proof_stability}
\begin{proof}[Proof of~\autoref{lem:stability_Q}]

For (A), apply~\autoref{lem:LRbounds} for truncating the Hamiltonian $\vG\rightarrow \vG_{\ell}$ and then $\vK\rightarrow \vK_{\ell}$ 
\begin{align}
    &\labs{Q(\vO,\vG,\vA,\vK) - Q(\vO,\vG_{\ell},\vA,\vK)}\\ &\le \int_{-\infty}^{\infty}\int_{-\infty}^{\infty} (\labs{h_+(t')} +\labs{h_-(t')}) g_{\beta}(t) \cdot \lnorm{\vO^{\dagger}_{\vG} (t)-\vO^{\dagger}_{\vG_{\ell}}(t)} \rd t'\rd t\\
    &\le 2\int_{-\infty}^{\infty} (\labs{h_+(t')} +\labs{h_-(t')}) \rd t' \L(\int_{0\le t\le \ell/2e^2d} g_{\beta}(t) |\Supp(\vO)| \frac{(2dt)^{\ell}}{\ell!}\rd t + 2\int_{t> \ell/2e^2d} g_{\beta}(t) \rd t \R)\\
    &\lesssim \int_{-\infty}^{\infty} (\labs{h_+(t')} +\labs{h_-(t')}) \rd t' \L(\int_{t\le \ell/2e^2d} e^{-2\pi \labs{t}/\beta} e^{-\ell}\rd t/\beta + \int_{t> \ell/2e^2d} e^{-2\pi \labs{t}/\beta} \rd t/\beta \R)|\Supp(\vO)|\\
    &\lesssim \frac{e^{\beta \Omega'/2+\sigma^2\beta^2/4}}{\sqrt{\sigma}\beta} \L( e^{-\ell} + e^{-\pi\ell/e^2d\beta}\R)\label{eq:GGell}|\Supp(\vO)|,
\end{align}
where the second inequality bounds the late-time contribution by the trivial bound $\lnorm{\vO_{\vG}^{\dagger} (t)-\vO^{\dagger}_{\vG_{\ell}}(t)} \le 2$. The third inequality uses that $\frac{(2dt)^{\ell}}{\ell!} \le e^{-\ell}$ for $0 \le t \le \ell/2e^2d.$ Next, we change $\vK$ to $\vK_{\ell}$. The main difference is to split the integration range more carefully since the Heisenberg dynamics depends on $t+t'$
\begin{align}
    &\labs{Q(\vO,\vG_{\ell},\vA,\vK) - Q(\vO,\vG_{\ell},\vA,\vK_{\ell})}\\ 
    &\le \int_{-\infty}^{\infty}\int_{-\infty}^{\infty} (\labs{h_+(t')} +\labs{h_-(t')}) g_{\beta}(t) \cdot \lnorm{\vA_{\vK}(t+t')-\vA_{\vK_{\ell}}(t+t')} \rd t'\rd t\\
    &\lesssim \int_{|t'|> \ell/4e^2d}(\labs{h_+(t')} +\labs{h_-(t')} )\int_{-\infty}^{\infty}g_{\beta}(t)\rd t\rd t'\\
    &+ \int_{|t'|\le \ell/4e^2d}(\labs{h_+(t')} +\labs{h_-(t')}) \L(\int_{|t|\le \ell/4e^2d} e^{-2\pi \labs{t}/\beta} |\Supp(\vA)| \frac{(2d|t+t'|)^\ell}{\ell!}  \rd t/\beta + \int_{|t|> \ell/4e^2d} e^{-2\pi \labs{t}/\beta} \rd t/\beta \R)  \rd t'\\
     &\lesssim \frac{e^{\beta \Omega'/2+\sigma^2\beta^2/4} }{\beta\sqrt{\sigma}}\L(e^{-\sigma^2\ell^2/16e^4d^2}+ e^{-\ell} + e^{-\pi\ell/2e^2d\beta}\R)|\Supp(\vA)|.\label{eq:KKell}
\end{align}

For the case (B) with extensive perturbation $\vK\rightarrow \vK',\vG\rightarrow\vG',$ the expression is very much the same as~\eqref{eq:GGell},\eqref{eq:KKell}, except we used \autoref{lem:perturb_At} instead of \autoref{lem:LRbounds}.
\begin{align}
    &\labs{Q(\vO,\vG,\vA,\vK) - Q(\vO,\vG',\vA,\vK)}\\
    &\le \int_{-\infty}^{\infty}\int_{-\infty}^{\infty} (\labs{h_+(t')} +\labs{h_-(t')}) g_{\beta}(t) \cdot \lnorm{\vO^{\dagger}_{\vG} (t)-\vO^{\dagger}_{\vG'}(t)} \rd t'\rd t\\
    &\lesssim 2\kappa/d \sum_{\ell=\ell_0}^{\infty} S(\ell,\vO)\int_{-\infty}^{\infty} (\labs{h_+(t')} +\labs{h_-(t')}) \rd t' \L(\int_{0\le t\le (\ell+1)/2e^2d} g_{\beta}(t) \frac{(2dt)^{\ell+1}}{(\ell+1)!}\rd t + \int_{t> (\ell+1)/2e^2d} t g_{\beta}(t) \rd t \R)\\
    &\lesssim \kappa/d \sum_{\ell=\ell_0}^{\infty} S(\ell,\vO) \frac{e^{\beta \Omega'/2+\sigma^2\beta^2/4}}{\sqrt{\sigma}\beta} \L( e^{-\ell-1} + \frac{(\ell+1) \beta}{d}e^{-\pi(\ell+1)/e^2d\beta}\R).
\end{align}
And,
\begin{align}
    &\labs{Q(\vO,\vG',\vA,\vK) - Q(\vO,\vG',\vA,\vK')} \\
    &\le \int_{-\infty}^{\infty}\int_{-\infty}^{\infty} (\labs{h_+(t')} +\labs{h_-(t')}) g_{\beta}(t) \cdot \lnorm{\vA_{\vK}(t+t')-\vA_{\vK'}(t+t')} \rd t'\rd t\\
    &\lesssim \kappa \sum_{\ell=\ell_0}^{\infty} S(\ell, \vA)\int_{|t'|> (\ell+1)/4e^2d}(\labs{h_+(t')} +\labs{h_-(t')} \int_{-\infty}^{\infty}\labs{t+t'}g_{\beta}(t)\rd t\rd t'\\
    &+ \kappa \sum_{\ell=\ell_0}^{\infty} S(\ell,\vA)\int_{|t'|\le (\ell+1)/4e^2d}(\labs{h_+(t')} +\labs{h_-(t')}) \L(\int_{|t|\le (\ell+1)/4e^2d} e^{-2\pi \labs{t}/\beta} e^{-\ell-1}\rd t/\beta \right. \\
    &\hspace{10cm} \left. 
 +\int_{|t|> (\ell+1)/4e^2d} e^{-2\pi \labs{t}/\beta} \labs{t+t'}\rd t/\beta\rd t'  \R)
    \\
    &\lesssim \kappa \sum_{\ell=\ell_0}^{\infty} S(\ell,\vA) \frac{e^{\beta \Omega'/2+\sigma^2\beta^2/4} }{\beta\sqrt{\sigma}}\L((\beta+\frac{\ell}{4ed}) e^{-\sigma^2(\ell+1)^2/16e^4d^2}+ e^{-\ell-1} + \frac{(\ell+1)\beta)}{d}e^{-\pi(\ell+1)/2e^2d\beta}\R).
\end{align}
Simplify the bounds by $\ell+1\rightarrow \ell$ to obtain the advertised result.

The proof of case (C) similarly makes use of~\autoref{lem:perturb_At}.
\begin{align}
 &\labs{Q(\vO,\vG,\vA,\vK) - Q(\vO,\vG',\vA,\vK)}\\
    &\le \int_{-\infty}^{\infty}\int_{-\infty}^{\infty} (\labs{h_+(t')} +\labs{h_-(t')}) g_{\beta}(t) \cdot \lnorm{\vO^{\dagger}_{\vG} (t)-\vO^{\dagger}_{\vG'}(t)} \rd t'\rd t\\
    &\lesssim \frac{\kappa}{d} V(\ell_0, \vO) \int_{-\infty}^{\infty} (\labs{h_+(t')} +\labs{h_-(t')}) \rd t' \L( \int_{-\infty}^{\infty} |t| g_{\beta}(t) \rd t \R)\\
    &\lesssim \frac{\kappa \beta}{d} V(\ell_0, \vO) \frac{e^{\beta \Omega'/2+\sigma^2\beta^2/4}}{\sqrt{\sigma}\beta} .   
\end{align}
And
\begin{align}
    &\labs{Q(\vO,\vG',\vA,\vK) - Q(\vO,\vG',\vA,\vK')}\\
    &\le \int_{-\infty}^{\infty}\int_{-\infty}^{\infty} (\labs{h_+(t')} +\labs{h_-(t')}) g_{\beta}(t) \cdot \lnorm{\vA_{\vK} (t+t')-\vA_{\vK'}(t+t')} \rd t'\rd t\\
    &\lesssim \frac{\kappa}{d} V(\ell_0,\vA) \int_{-\infty}^{\infty}\int_{-\infty}^{\infty} (\labs{h_+(t')} +\labs{h_-(t')}) g_{\beta}(t) \cdot  |t+t'| \rd t'\rd t\\
    &\lesssim \frac{\kappa \beta}{d} V(\ell_0,\vA) \frac{e^{\beta \Omega'/2+\sigma^2\beta^2/4}}{\sqrt{\sigma}\beta}.
\end{align}
\end{proof}

\end{document}